\documentclass{amsart}
\usepackage[utf8]{inputenc}
\usepackage{graphicx}
\usepackage{amsmath}
\usepackage{graphicx}
\usepackage{epsfig}
\usepackage{amssymb}
\usepackage{amsthm}
\usepackage{amsfonts}
\usepackage{amsmath}
\usepackage{placeins}
\usepackage{appendix}
\usepackage{xcolor}

\DeclareMathOperator*{\argmin}{argmin}

\usepackage[text={440pt,575pt},headheight=9pt,centering]{geometry}

\usepackage{multicol,multirow}
\usepackage{algorithm,algcompatible}
\usepackage[normalem]{ulem}
\usepackage{enumitem}
\usepackage{float}
\usepackage{booktabs}
\usepackage{natbib}

\newtheorem{theorem}{Theorem}
\newtheorem{lemma}[theorem]{Lemma}
\newtheorem{example}[theorem]{Example}

\newtheorem{proposition}[theorem]{Proposition}
\newtheorem{defn}{Definition}

\newtheorem{assumption}{Assumption}
\newtheorem{remark}{Remark}
\usepackage{algorithm}
\usepackage{algpseudocode}

\newcommand{\G}{\mathcal{G}}

\usepackage[foot]{amsaddr}

\usepackage{hyperref}

\newcommand{\nk}[1]{\textcolor{orange}{[NK: #1]}}
\newcommand{\at}[1]{\textcolor{red}{[AT: #1]}}

\title{Integrating Multi-source Local Information for Global Undirected Graph Reconstruction via Integer Programming}
\date{}
\author[N.~Keret]{Nir Keret$^{1*}$}
\email{nirkeret@uw.edu}
\address{$^{1}$Department of Biostatistics, University of Washington, U.S.}

\author[A.~Taeb]{Armeen Taeb$^{2*}$}
\email{ataeb@uw.edu}
\address{$^{2}$Department of Statistics, University of Washington, U.S.}

\author[A.~Shojaie]{Ali Shojaie$^{1,2}$}
\email{ashojaie@uw.edu}

\address{$^*$ These authors contributed equally to this work.}

\begin{document}

\begin{abstract}
    Graphical models are widely used to represent conditional independence structures, but learning them is challenging when data are fragmented across sources that measure different subsets of variables and may not be directly shareable. We develop a general framework for reconstructing a shared global graph from heterogeneous, partially observed sources. Our framework permits both the observed variable sets and source-specific distributions to differ. Under a source-wise faithfulness condition, local conditional independence graphs correspond to latent projections of the shared global graph, which is identifiable up to the equivalence class induced by the observed variable sets. The framework integrates raw data, estimated source graphs, or a combination of both, accommodating settings with restricted access to individual-level data, such as privacy constraints. We develop mixed-integer optimization formulations and scalable approximate algorithms for several classes of graphical models. Unlike standard high-dimensional analyses, our guarantees are governed by new combinatorial notions of signal strength that capture how information propagates across partially observed sources. Extensive simulations further show that integration can substantially improve estimation of source-specific projected graphs relative to separate source methods. Real-data application illustrates the framework's use with partially-observed sources of fatty acid measurements.
\end{abstract}

\maketitle
\vspace{-0.1in}
\noindent\small\textbf{Keywords:}
Data integration; discrete optimization; distribution shift; federated learning; latent projection

\section{Introduction}\label{sec:intro}

Probabilistic graphical models (PGMs) provide a succinct representation of the joint distribution of multiple random variables in a graph where nodes correspond to variables and edges encode conditional dependencies \citep{lauritzen1996graphical}. Such a graph becomes an object of direct scientific interest in many applications: it can reveal regulatory relationships among genes, functional connectivity among brain regions, dependencies among financial assets, or interactions among species in an ecosystem. As such, learning the structure of a PGM from data has become one of the central problems in modern statistics and machine learning \citep{drton2017structure}.

A substantial body of work has developed methods for estimating the structure of undirected graphical models in high-dimensional settings, where the number of variables can exceed the available sample size. Most notable is the setting of Gaussian graphical models (GGMs), where the graph is encoded in the sparsity pattern of the inverse covariance, or precision, matrix \citep{dempster1972covariance}. In this case, the graph structure can be efficiently estimated in high dimensions by using penalized regression via the neighborhood selection approach \citep{meinshausen2006high}, or by directly solving the penalized likelihood problem \citep{yuan2007model}, using, e.g., the graphical lasso algorithm \citep{friedman2008sparse}. Penalized estimation approaches have also been developed for estimation of the graph structure from non-Gaussian observations, including Gaussian copula graphical models \mbox{\citep{liu2012high, xue2012regularized}}, as well as binary and Poisson random variables \citep{ravikumar2010high, yang2013poisson}. More generally, penalized estimation of graphical models has also been extended to models with exponential-family node-conditional distributions \citep{Yang2015GraphicalMV}, mixed graphical models allowing different node-conditional distribution families \citep{chen2015selection}, and models with nonlinear additive conditional means \citep{voorman2014graph}.

The above penalized estimation methods have been successfully adopted in applications and have become standard tools for exploratory network analysis. However, their standard formulations assume that all variables are jointly observed on a common set of samples. There are many important applications in which data relevant to a single underlying graphical model cannot be pooled into one data set, either because they are generated and stored separately or because they cannot be shared across sites. In multi-center clinical and genomic studies, patient-level data are often siloed within individual hospitals or institutions due to privacy regulations and data use agreements, preventing the construction of a single combined data set even when all sites measure the same clinical and molecular variables \citep{tertulino2023privacy}. In neuroscience, large-scale recordings of neural activity are frequently collected across multiple non-simultaneous sessions, multiple animals, or multiple labs, so that any single recording session captures only a subset of the neurons or brain regions of interest \citep{siegle2026large}. In sensor networks, physical or bandwidth constraints mean that different subsets of sensors report to different collection points, and the full sensor array is never observed simultaneously. In each of these settings, data are fragmented: they are distributed across sources that cannot be merged, whether because of legal, logistical, or physical barriers.

This fragmentation motivates a fundamentally different estimation problem: rather than learning a graphical model from a single data set, we need algorithms that can integrate statistical evidence about a shared underlying graph from multiple, separately-held data sources. This problem is particularly challenging in the setting of graphical models: in addition to the usual challenges of estimation across multiple data sources, namely potential differences among distributions and the lack of access to the raw observations, observing different sets of variables across different sites means that conditional independence relationships inferred in each site, and hence the corresponding graph structures, may be drastically different. Several strands of literature address different aspects of this integration problem. 

The first class of methods has focused on integrating multiple data sets that share the same set of observed variables but arise from distinct, possibly heterogeneous, populations. This includes the literature on jointly estimating multiple related GGMs, where each source has its own precision matrix, but these precision matrices are encouraged to have \emph{similar sparsity patterns or parameter values} \citep{guo2011joint, danaher2014joint, saegusa2016joint,phan2017novel,liu2026learning}. A related Bayesian approach shares information about edge inclusion across groups without requiring similar nonzero precision values \citep{peterson2015bayesian}.
Related approaches have also leveraged multi-task learning \citep{honorio2010multi} and population-prior approaches for improving individual brain connectivity estimates using group-level information \citep{varoquaux2010brain}. However, all of these approaches require that the \emph{same set of variables} be observed in every source, while allowing source-specific distributions. They are therefore not directly applicable when different sources observe different, possibly only partially overlapping, subsets of variables. Jewel and Jewel 2.0 \citep{angelini2021jewel,angelini2022jewel} permit different observed variable sets, with the latter also allowing source-specific edges. For a given pair of variables, they encourage the same edge pattern across the different sources. However, edges across sources are not directly comparable, as their presence or absence reflects conditional independence relationships involving different conditioning sets when the observed variable sets differ.


 The second class of methods comes from the distributed and federated learning literature, which was developed primarily to address computational and privacy constraints when a large data set is split across multiple machines, or when data reside on different devices or institutions that cannot share raw data \citep{mcmahan2017communication}. Several algorithms have recently leveraged these ideas for estimation and inference \citep{battey2018distributed}, including estimation of precision matrices \citep{arroyo2016efficient, wang2021efficient, nezakati2023unbalanced}. Complementary strategies based on combining summary statistics across sites have also been proposed \citep{liu2015multivariate, tang2020distributed}. These distributed learning methods are effective at reducing computation and respecting data-access constraints. However, the distributed precision-matrix formulations also assume a common set of observed variables and do not directly address graph reconstruction from different observed subsets. In this setting, the central statistical challenge is no longer simply combining samples across sources with a shared variable set, but inferring a single underlying graph over the \emph{union} of all variables when no single source observes them all. 


The third class of methods allows for partial observations but encourages similarity in distributions across the sources. One such line of work is the ``graph quilting'' framework in \cite{vinci2023graph}. In the Gaussian setting, an $\ell_1$-regularized partially-observed likelihood estimator, combined with thresholding and Schur-complement calculations, can recover edges connecting jointly observed variable pairs and a superset of edges connecting never-jointly-observed pairs. These guarantees hold under certain conditions, with minimality of the superset established under additional assumptions. Related work uses low-rank covariance completion before graph estimation \citep{chang2022low} and extends the framework to Gaussian copula models using rank-based correlation estimates \citep{Chang2023NonparanormalGQ}. Another line of work that handles partial observations is \citep{tozzo2020missing}, which uses the EM algorithm and couples source-specific precision matrices through $\ell_1$ penalties on their differences. The methods described in this class all accommodate multiple data sources with different observed variable subsets and sample sizes. However, these formulations rely on a common covariance or latent correlation matrix across sources. They therefore do not directly address sources whose numerical dependence parameters differ while a common graph is preserved. Moreover, although raw observations need not be shared, the procedures require numerical covariance or rank-correlation summaries and cannot operate on estimated graphs alone. Their developed formulations also focus on Gaussian or Gaussian copula graphical models, leaving integration for broader distributional families unaddressed.

A final class of methods reconstructs global undirected graphs directly from supplied marginal graph structures \citep{kim2009learning,kim2026sequential}.  These methods treat the supplied marginal structures as exact, without accounting for estimation errors that may render them mutually incompatible. Moreover, they do not address the integration of raw data.

The above limitations clearly highlight the need for a general framework that jointly addresses heterogeneous variable sets, source-specific distributions, and different forms of shared information. 

\subsection{Our contributions}\label{sec:contribs}
To address the limitations of existing approaches, this paper develops a general graphical model estimation algorithm by integrating evidence from multiple heterogeneous sources that may differ simultaneously in the variables they observe, the distributions they follow, and the form of information they can share. 

At a high level, our procedure connects the sources through the \emph{latent projection operator}, which maps a candidate global graph to a local graph on each source's observed variable subset and accounts for connections induced by paths through unobserved variables. We estimate the global graph by jointly assessing how well these source-specific projections fit the available data or graph estimates.
The proposed algorithm offers substantial flexibility and, to the best of our knowledge, applies to settings beyond the scope of any single existing procedure by simultaneously addressing the following four challenges. (a) \emph{Heterogeneous variable sets}: we allow each source to observe a different, possibly only partially overlapping, subset of the full variable set, without requiring any source to observe the full variable set. 
(b) \emph{Distribution shift across sources}: we require only that the underlying conditional independence graph is shared across sources, while allowing the full joint distribution to vary; in the Gaussian graphical model setting, for example, this allows each source to have its own distinct precision matrix, as long as all precision matrices share the same support. 
(c) \emph{Flexible forms of information sharing}: our framework can integrate either shared raw data, locally estimated (source-specific) graphs, or a mixture of the two, which makes it directly applicable under privacy or data-access constraints where only summary information, rather than individual-level data, can be transmitted or pooled. 
(d) \emph{Generality across model classes}: the proposed algorithm applies to a broad class of undirected PGMs, rather than being restricted to specific distributions, such as Gaussian. 

To estimate the global graph, we develop a mixed-integer optimization formulation and a scalable coordinate-descent heuristic for the resulting estimation problem. Because fragmented observations need not determine a unique global graph, we also characterize projection-equivalence classes and identifiable graph features. 

Our theoretical guarantees show that recovery is governed by new combinatorial notions of signal strength that account for how a local structural error can force additional errors across sources through latent-projection compatibility. These combinatorial notions, in particular, reveal how strong local signals in some sources can compensate for weak local signals in others, yielding recovery guarantees that differ fundamentally from standard high-dimensional beta-min requirements.

Extensive simulations show that integration can substantially improve recovery of source-specific graphs relative to separate-source estimation. Integrating raw data generally yields greater gains than integrating estimated graphs. Comparisons with graph quilting further demonstrate the importance of allowing source-specific distributions: our method remains stable as the source distributions diverge, whereas methods relying on a common distribution deteriorate.

We illustrate the proposed framework by integrating four partially-overlapping fatty-acid panels from the National Health and Nutrition Examination Survey (NHANES), covering 31 distinct fatty acids. The analysis combines individual-level concentration data from three sources with a graph estimated separately from compositional measurements in the fourth. The resulting network exhibits organization by fatty-acid family and chain length, while equivalence-class and subsampling analyses characterize structural ambiguity and assess the stability of the estimated relationships.

\subsection{Notations}
For an undirected graph $\G$ over vertex set $V$, we denote $E(\G) = \{(i,j) \in V \times V: i \text{ adjacent to }j \text{ in }\G\}$ as its edge set. For a node $i$ in an undirected graph $\widetilde{\G}$, we use $\mathrm{N}(i,\widetilde{\G})$ to denote nodes that are adjacent to $i$ in $\widetilde{\G}$, i.e., $\mathrm{N}(i,\widetilde{\G}):=\{j: (i,j)\in E(\widetilde{\G})\}$. For a positive integer $n$, $[n]$ denote the set $\{1, 2, \dots, n\}$. 

\section{Setup and Identifiability}
\label{sec:setup}
\subsection{Setup} We consider \(K\) data sources, where source \(k\) is associated with a distribution \(P^{(k)}\) on the random vector \(X_{V}\) with $V = \{1,\ldots,p\}$. In each source, we observe data from only a subset of the variables $S_k \subseteq V$ with $V = \cup_{k\in[K]}S_k$. The subsets can be overlapping.

Let $\perp$ denote a specified notion of conditional independence. Depending on the setting, this relation may represent probabilistic conditional independence, zero partial correlation, conditional tail independence in the context of multivariate extremes, or another independence notion of interest.

We assume that the full source distributions share a common dependence structure: for every pair of distinct vertices $i,j\in V$,
$X_i\perp X_j
\mid X_{V\setminus\{i,j\}} \text{ under }P^{(k)}$ either holds for every $k\in[K]$ or fails for every $k\in[K]$. Thus, although the distributions $P^{(k)}$ may differ across sources, they induce the same pairwise conditional independence structure over the full collection of variables. The following definition formalizes the graph structure shared across the data sources.
\begin{defn}[Population global graph]We define the population global graph as $\G^\star = (V,E^\star)$, where:
$$
(i,j)\notin E^\star
\quad\Longleftrightarrow\quad
X_i\perp X_j
\mid X_{V\setminus\{i,j\}}
\text{ under }P^{(k)} ~\forall k\in[K].
$$
\end{defn}
Here, $\G^\star$ represents the global population graph encoding the pairwise conditional independence structure, shared by all source distributions. 

Our inferential goals are twofold. The first is to identify $\G^\star$, or an equivalence class when it is not identifiable, using only information from each local source. The second is to ``pool" the information across the sources to improve the recovery of a given source's local conditional independence graph.   
For both of these goals, we must relate the conditional independence structure observed in each source to the common global graph $\G^\star$. Because marginalizing variables can induce additional dependencies among the observed variables, the relevant local graph is not generally the subgraph of $\G^\star$ induced by $S_k$, but rather its latent projection onto $S_k$. We first introduce the notions of separation and latent projection needed to formalize this relationship.

\begin{defn}[Separation]
Let $\G=(V,E)$ be an undirected graph. For distinct vertices $i,j\in V$, a set
$T\subseteq V\setminus\{i,j\}$ is said to separate $i$ and $j$ in $\G$ if every path between $i$ and $j$ contains at least one vertex in $T$.
\end{defn}

In source $k$, only the variables indexed by $S_k\subseteq V$ are observed. The observed data therefore follow the marginal distribution $P_{S_k}^{(k)}$, obtained from $P^{(k)}$ by marginalizing over $X_{V\setminus S_k}$. Marginalizing unobserved variables can induce additional edges among the observed variables, formalized according to a latent projection, which is described next.

\begin{defn}[Latent projection]
Let $\G=(V,E)$ be an undirected graph and let $S\subseteq V$. The latent projection of $\G$ onto $S$ is the graph
$$
\pi_S(\G)=(S,E_S),
$$
where, for every pair of distinct vertices $i,j\in S$,
$$
(i,j)\in E_S
\quad\Longleftrightarrow\quad
S\setminus\{i,j\}
\text{ does not separate $i$ and $j$ in $\G$}.
$$
Equivalently, $i$ and $j$ are adjacent in $\pi_S(\G)$ if and only if there exists a path between them in $\G$ whose internal vertices all belong to $V\setminus S$.
\end{defn}

The latent projection differs crucially from the subgraph induced by $S$. In particular, two variables may be adjacent in $\pi_S(\G)$ even when they are not adjacent in $\G$, because a path through variables outside $S$ induces an edge after those variables are marginalized out. Thus, unlike an induced subgraph, the latent projection depends on the structure of $\G$ outside $S$. The distinction between the latent projection and the induced subgraph is especially important, as the conditional-independence relations in $S_k$, induced by $P_{S_k}^{(k)}$, are not, in general, represented by the induced subgraph.

\begin{defn}[Source-wise conditional dependence graph]For each source $k \in [K]$, define its conditional dependence graph $\G^{\star(k)}_{S_k}= (S_k,E_k)$ as 
$$
(i,j)\notin E_k
\quad\Longleftrightarrow\quad
X_i\perp X_j
\mid X_{S_k\setminus\{i,j\}}
\text{ under }P_{S_k}^{(k)} ~(\text{or equivalently under }P^{(k)}).
$$
\label{defn:source-wise}
\end{defn}

To ensure that these local graphs are compatible with $\G^\star$, we impose the following condition.

\begin{assumption}[Source-wise pairwise faithfulness]
For every $k\in[K]$ and every pair of distinct vertices $i,j\in S_k$,
$$
X_i\perp X_j
\mid X_{S_k\setminus\{i,j\}}
\text{ under }P_{S_k}^{(k)} ~(\text{or equivalently under }P^{(k)})
~\Longleftrightarrow~ S_k\setminus\{i,j\} \text{ separates } i \text{ and }j \text{ in }\G^\star.
$$
\label{ass:markov_faithfulness}
\end{assumption}

By the definition of latent projection, Assumption~\ref{ass:markov_faithfulness} implies
$$
\G^{\star(k)}_{S_k}=\pi_{S_k}(\G^\star),
\qquad k\in[K].
$$
Thus, the source-specific conditional dependence graphs are compatible in the sense that they are all latent projections of the same global graph $\G^\star$. Violation of Assumption~\ref{ass:markov_faithfulness} results in a lack of compatibility. The following simple example shows one  settings where the compatibility may fail.
\begin{example}[When Assumption~\ref{ass:markov_faithfulness} is violated] As a simple illustration, suppose $K=2$ and each source follows the same Gaussian distribution:
$$P^{(k)} \sim \mathcal{N}(0,\Sigma),\quad \Sigma = \begin{pmatrix}1 & 0.4 & 0 &0 \\ 0.4 & 1& 0.4 & 0.4 \\0 & 0.4 & 1 & 0 \\ 0 & 0.4 & 0 & 1\end{pmatrix}.$$
Since the precision matrix ${\Sigma^\star}^{-1}$ is a fully dense matrix, $\G^\star$ is a fully connected graph. Suppose $S_1 = \{1,2,3\}$ and $S_2 = \{1,3,4\}$. Then, from the precision matrices corresponding to submatrices of $\Sigma$, we have that $\G^{(1)}$ is a fully connected graph, and $\G^{(2)}$ is the fully disconnected graph. We see that $\pi_{S_2}(\G^\star) \neq \G^{(2)}$. In fact, there does not exist a global graph $\G$ over ${V}$ such that $\pi_{S_k}(\G) = \G^{\star(k)}_{S_k},k=1,2$. 
\end{example}
Assumption~\ref{ass:markov_faithfulness} is weaker than the standard faithfulness assumption, which is described next.
\newtheorem{assumptionprime}{Assumption}
\renewcommand{\theassumptionprime}
  {\theassumption\ensuremath{^{\prime}}}
\begin{assumptionprime}[Source-wise faithfulness] For every $k\in[K]$ and every collection of pairwise disjoint sets $A,B,C\subseteq V$,
$$X_A\perp X_B\mid X_C \text{ under }P^{(k)}
\quad\Longleftrightarrow\quad
C\text{ separates $A$ and $B$ in }\G^\star.$$
\label{ass:standard_faithfulness}
\end{assumptionprime}

Assumption~\ref{ass:standard_faithfulness} is a typical condition in graphical modeling \citep{lauritzen1996graphical}. This stronger condition implies Assumption~\ref{ass:markov_faithfulness} by taking $A={i},B={j},C=S_k\setminus\{i,j\}$. The weaker assumption is sufficient when the local information consists only of pairwise conditional independences given all remaining variables observed in each source.

In several widely studied model classes, faithfulness is generic. For example, for a fixed undirected graph, almost every regular Gaussian distribution that is Markov with respect to the graph is also faithful to it, with the unfaithful parameter values forming a Lebesgue measure zero set \citep{sadeghi2017faithfulness,amini2021perfectness}. Intuitively, violations of faithfulness can arise from special configurations of model parameters {or structures}
that induce conditional independences not implied by graphical separation. Our Assumption~\ref{ass:markov_faithfulness} is weaker than the standard faithfulness in Assumption~\ref{ass:standard_faithfulness}, as it requires only a subset of the implications imposed by standard faithfulness. Nevertheless, the measure-zero result concerns exact violations and does not rule out distributions that are nearly unfaithful, for which some nonzero conditional dependences may be arbitrarily small and graph recovery may be difficult in finite samples.

In several distributional families, both the structure of the graph $\G^\star$ and Assumption~\ref{ass:markov_faithfulness} admit algebraic characterizations. We present these characterizations next.
\begin{example}[Gaussian graphical models] Suppose $P^{(k)}$ is a Gaussian distribution with precision matrix $\Omega^{\star{(k)}}$. It is well-known that $X_i \perp X_j \mid X_{V\setminus\{i,j\}}$ if and only if 
$\Omega^{\star(k)}_{ij}=0$, so that edges in the  population global graph $\G^\star$ are encoded by the sparsity pattern of $\Omega^{\star{(k)}}$. Here, $\perp$ represents probabilistic conditional independence. Assumption~\ref{ass:markov_faithfulness} can then be rephrased algebraically as follows: for every $k \in [K]$ and $i,j \in S_k$,  $\Gamma^{\star{(k)}}_{ij} = 0$ if and only if $S_k\setminus\{i,j\} \text{ separates }i \text{ and }j \text{ in }\G^\star$, where $\Gamma^{\star{(k)}}$ is the Schur complement $\Gamma^{\star{(k)}} = \Omega^{\star(k)}_{S_k,S_k} -\Omega^{\star(k)}_{S_k,T_k}(\Omega^{\star(k)}_{T_k,T_k})^{-1}\Omega^{\star(k)}_{T_k,S_k}$ with $T_k:= V \setminus S_k$.  
\label{example:Gaussian}
\end{example}

\begin{example}[Pearson partial correlation graphical models]Suppose \(P^{(k)}\) has finite second moments and a positive-definite covariance matrix \(\Sigma^{\star(k)}\), and let \(\Omega^{\star(k)}=(\Sigma^{\star(k)})^{-1}\). It follows that $X_i \perp X_j|X_{V\setminus\{i,j\}}$ if and only if $\Omega_{ij}^{\star(k)}=0$, so that edges in the population graph $\G^\star$ are encoded by sparsity patten of $\Omega^{\star(k)}$. Here, $\perp$ denotes zero Pearson partial correlation. Assumption~\ref{ass:markov_faithfulness} can be rephrased in a similar algebraic fashion as in Example~\ref{example:Gaussian}.
\label{example:pearson}
\end{example}

\begin{example}[Gaussian copula graphical models]

Suppose $P^{(k)}$ follows a Gaussian copula model: there exist strictly increasing functions $\{f_j^{(k)}:j\in V\}$ such that $Z:=(f_j^{(k)}(X_j),j\in V)\sim N(0,R^{\star(k)})$. Let $\Omega^{\star(k)}=(R^{\star(k)})^{-1}$. Since coordinatewise invertible transformations preserve probabilistic conditional independence, $X_i\perp X_j\mid X_{V\setminus\{i,j\}}$ if and only if $\Omega^{\star(k)}_{ij}=0$, so that the edges of the global graph $\G^\star$ are encoded by the sparsity pattern of $\Omega^{\star(k)}$. Here, $\perp$ represents probabilistic conditional independence. Moreover, Assumption~\ref{ass:markov_faithfulness} admits the same algebraic characterization as in Example~\ref{example:Gaussian}, with the corresponding marginal precision matrix given by the Schur complement of $\Omega^{\star(k)}$ over $T_k:=V\setminus S_k$.
\label{example:copula}
\end{example}

\begin{example}[Elliptical graphical models]
Suppose $P^{(k)}$ follows an elliptical distribution with normalized scatter matrix $R^{\star(k)}$, and let $\Omega^{\star(k)}=(R^{\star(k)})^{-1}$. In this setting, $X_i\perp X_j\mid X_{V\setminus\{i,j\}}$ if and only if $\Omega_{ij}^{\star(k)}=0$, so edges in the global graph $\G^\star$ are encoded by the sparsity pattern of $\Omega^{\star(k)}$. Here, $\perp$ represents zero generalized partial correlation, which is equivalent to zero Pearson partial correlation under finite second moments. Assumption~\ref{ass:markov_faithfulness} can then be expressed algebraically in the same manner as in Example~\ref{example:Gaussian}. 
\label{example:elliptical}
\end{example}

\begin{example}[Extremal graphical models]  Suppose $P^{(k)}$ in the domain of attraction of a H\"usler-Reiss distribution with a conditionally negative-definite variogram matrix $\Gamma^{\star(k)} \in \mathbb{R}^{p \times p}$ and a precision matrix $\Omega^{\star(k)}:= (\Pi(-\Gamma^{\star(k)}/2)\Pi)^+$, where $A^+$ is the Moore--Penrose pseudoinverse of a matrix $A$, and $\Pi$ is the orthogonal projection onto the complement of the all-ones vector \citep{Hentschel2022StatisticalIF,Engelke2018GraphicalMF}. In this setting, $X_i\perp X_j\mid X_{V\setminus\{i,j\}}$ if and only if $\Omega_{ij}^{\star(k)}=0$, so edges in the global graph $\G^\star$ are encoded by the sparsity pattern of $\Omega^{\star(k)}$\citep{Engelke2018GraphicalMF}. Here, $\perp$ represents extremal conditional independence in the limiting H\"usler-Reiss distribution. Appealing to \cite{Engelke2024ExtremalGM}, Assumption~\ref{ass:markov_faithfulness} can be rephrased in a similar algebraic fashion as in Example~\ref{example:Gaussian}.
\label{example:extremal}
\end{example}

\begin{example}[Discrete graphical models]
Suppose each $X_i$ takes values in a finite discrete state space $\mathcal{X}_i$, and $P^{(k)}$ is a strictly positive distribution with probability mass function $p^{(k)}$. Unlike the preceding examples, conditional independence in discrete graphical models is not, in general, characterized by the sparsity pattern of a precision-like matrix. Instead, letting $R = S_k \setminus\{i,j\}$, $X_i \perp X_j | X_{R}$ if and only if $p^{(k)}_{S_k}(x_i,x_j,x_R)p^{(k)}_{S_k}(x_i',x_j',x_R) = p^{(k)}_{S_k}(x_i,x_j',x_R)p^{(k)}_{S_k}(x_i',x_j,x_R)$ for all $x_i,x_i',x_j,x_j',x_R$, where $p^{(k)}_{S_k}$ is the marginalization of $p^{(k)}$ onto $S_k$. Here, $\perp$ denotes probabilistic conditional independence. Equivalently, conditional independence holds if and only if for all $2\times{2}$ minors of the corresponding conditional probability tables vanish. Assumption~\ref{ass:markov_faithfulness} can therefore be expressed algebraically by requiring, for every $k\in[K]$ and $i,j \in S_k$, these minors vanish if and only if $R$ separates $i$ and $j$ in $\G^\star$. 

\label{example:discrete}
\end{example}

\subsection{Identifiability} Given access to the conditional independence relations encoded by \(P^{(k)}_{S_k}\) for \(k\in[K]\), can we recover the global graph \(\G^\star\)? If \(\G^\star\) is not uniquely identifiable -- particularly when the observed variable sets \(S_k\) have insufficient overlap -- what is the corresponding equivalence class of graphs?

Under Assumption~\ref{ass:markov_faithfulness}, we can formalize identifiability purely in graphical terms. Our formalism requires the following definition.

\begin{defn}[Equivalence class] Take any graph $\G$ over vertex set $V$ and subsets $S_k \subseteq V, k \in [K]$. Then, we define the class of graphs equivalent to $\G$ as:
\begin{equation}\mathrm{EC}(\G;\{S_k\}_{k=1}^K) = \{ \text{Undirected graph }\G' \text{ over vertex set }V: \pi_{S_k}(\G') = \pi_{S_k}(\G)~\forall~k\in[K] \}.
\end{equation}
\label{defn:equiv}
\end{defn}
The equivalence class $\mathrm{EC}(\G;\{S_k\}_{k=1}^K)$ consists of all global graphs $\G'$ whose latent projected graphs onto $S_k$ coincide with those of $\G$ for all $k \in [K]$. With Definition~\ref{defn:equiv}, we can formally characterize the equivalence class compatible with the observed local graph structures. The proof is in Appendix~\ref{sec:proof_identify}.

 \begin{figure}[t]
        \centering
        \includegraphics[width=0.9\linewidth]{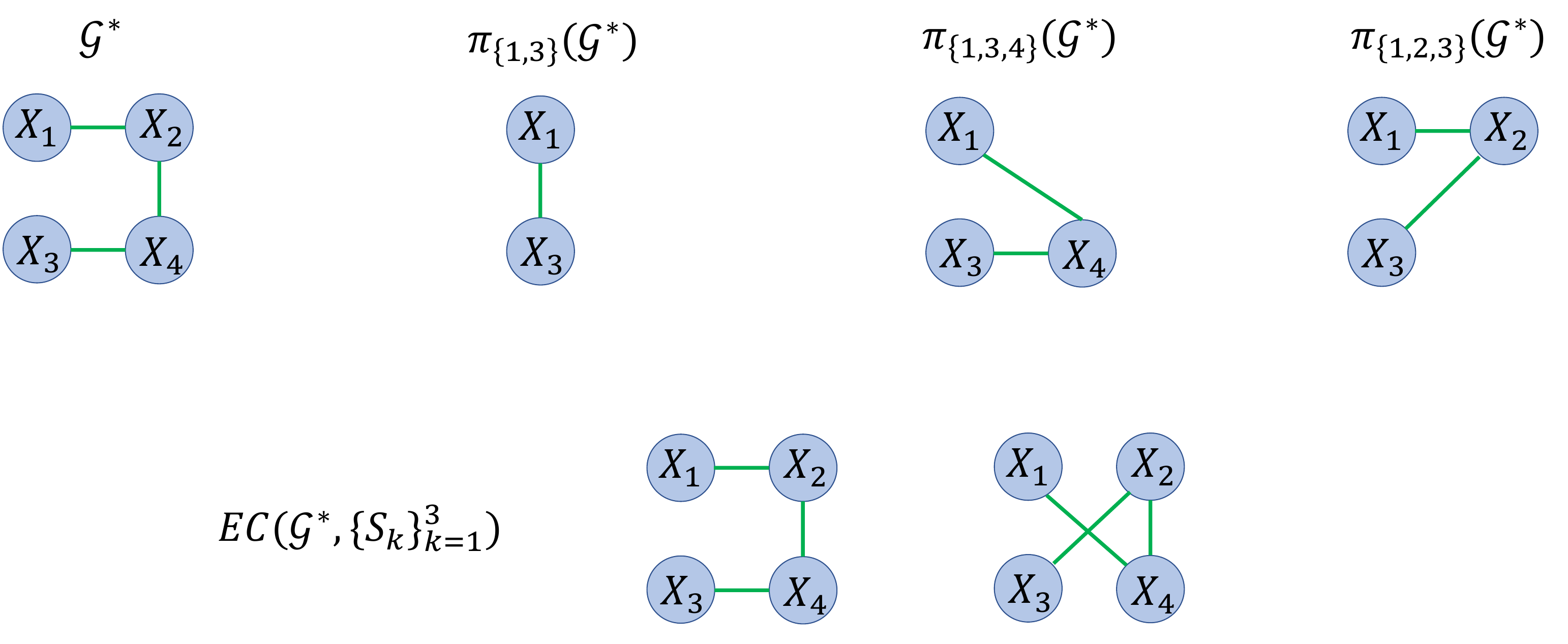}
        \caption{\small \textbf{Top row}. The true graph $\G^\star$ and three latent-projection graphs of $\G^\star$ obtained from three sources with $S_1 = \{1,3\}$, $S_2 = \{1,3,4\}$, and $S_3 = \{1,2,3\}$. \textbf{Bottom row}. The only two graphs in the equivalence class, one being the true graph, whose latent projections match precisely the ones in the top row. The edge between $X_2$ and $X_4$ is identified even though they are never co-observed in the three sources.}
        \label{fig:equivalence_class_illustration}
    \end{figure}
\begin{proposition}[Identifiability] Under Assumption~\ref{ass:markov_faithfulness}, given access to pairwise conditional independences implied by \(P^{(k)}_{S_k}, k\in[K]\), the global graph $\G^\star$ is identifiable up to the equivalence class $\mathrm{EC}(\G^\star;\{S_k\}_{k=1}^K)$.
\label{prop:identifiability}
\end{proposition}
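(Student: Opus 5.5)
The plan is to turn the statistical identifiability question into a purely graphical one by way of Assumption~\ref{ass:markov_faithfulness}. The result has two directions. First, the available information determines the equivalence class. Second, the information cannot distinguish $\G^\star$ from any other member of that class. ``Identifiable up to $\mathrm{EC}$'' means exactly these two claims together.

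\emph{Step 1: the information determines the local graphs.} By Definition~\ref{defn:source-wise}, the pairwise conditional independences $X_i\perp X_j\mid X_{S_k\setminus\{i,j\}}$ under $P^{(k)}_{S_k}$, for $i,j\in S_k$, determine the local graph $\G^{\star(k)}_{S_k}$ exactly. Assumption~\ref{ass:markov_faithfulness}, together with the definition of latent projection (separation by $S_k\setminus\{i,j\}$), gives $\G^{\star(k)}_{S_k}=\pi_{S_k}(\G^\star)$ for every $k$. So the data determine the tuple $(\pi_{S_k}(\G^\star))_{k\in[K]}$, and hence the set
\[
\mathcal{C}=\{\G' \text{ over } V:\ \pi_{S_k}(\G')=\G^{\star(k)}_{S_k}\ \forall k\}.
\]
By Definition~\ref{defn:equiv}, this set equals $\mathrm{EC}(\G^\star;\{S_k\}_{k=1}^K)$. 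The set contains $\G^\star$, so it is nonempty and every graph outside it is ruled out.

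\emph{Step 2: members of the class are indistinguishable.} Take any $\G'\in\mathrm{EC}(\G^\star;\{S_k\})$. I will show that some admissible data-generating configuration has global graph $\G'$, satisfies Assumption~\ref{ass:markov_faithfulness}, and produces exactly the same pairwise conditional independence statements. The statements it produces are read off from $\pi_{S_k}(\G')=\pi_{S_k}(\G^\star)$, so they coincide with those of the true configuration. Formally, the identical information is consistent with $\G'$ being the truth, so no procedure that uses only this information can separate $\G'$ from $\G^\star$.

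The main obstacle is the existence claim in Step 2: we need distributions $P'^{(k)}$ that have pairwise Markov graph $\G'$ and are source-wise faithful in the sense of Assumption~\ref{ass:markov_faithfulness}. My first approach is to use generic faithfulness. For the Gaussian case, choose a precision matrix with support $\G'$. The results cited above \citep{sadeghi2017faithfulness,amini2021perfectness} show that almost every such Gaussian is faithful to $\G'$, so it satisfies Assumption~\ref{ass:standard_faithfulness}, and hence Assumption~\ref{ass:markov_faithfulness}. Taking $P'^{(k)}$ to be this distribution for every $k$ then satisfies the shared-structure requirement. Alternatively, I would state the proposition at the level of the information actually available: the map from candidate graphs to observable statements factors through $\G\mapsto(\pi_{S_k}(\G))_k$, so its fibers are exactly the equivalence classes. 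I would present that graphical statement as the core of the proof and use the Gaussian construction to show that every element of each fiber is actually attainable.
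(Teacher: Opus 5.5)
Your proposal is correct and follows the paper's proof. Step~1 is exactly its forward direction: Assumption~\ref{ass:markov_faithfulness} makes each $\pi_{S_k}(\G^\star)$ readable from the pairwise statements, so any indistinguishable graph must share all projections. Step~2 is its converse, except that you justify it with a Gaussian generic-faithfulness construction showing that every member of $\mathrm{EC}(\G^\star;\{S_k\}_{k=1}^K)$ is realized by an admissible source-wise faithful model, whereas the paper takes this for granted by reading a candidate graph's observable statements directly off its projections.
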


Depending on the degree of overlap in the sets $S_k,k\in[K]$, the size of the equivalence class $\mathrm{EC}(\G^\star;\{S_k\}_{k=1}^K)$ can vary. Moreover, graphs in the equivalence class need not have the same number of edges. 

When the full marginal distributions $P^{(k)}_{S_k}$ are available, rather than only their pairwise conditional independence relations, additional constraints may allow us to rule out certain graphs within the projection-equivalence class. In Gaussian graphical models, covariance rank constraints have been characterized for general undirected graphs \citep{sullivant2010trek}, while more complete algebraic characterizations and associated statistical tests have been developed for Gaussian latent-tree models \citep{shiers2016correlation,leung2018algebraic}. Such constraints could be assessed separately for each source and combined to further refine the equivalence class. Developing a computationally efficient and statistically valid procedure for such refinement, particularly across the general graph structures and distributional families considered here, is beyond the scope of this work and is left for future research.

Even if the graph $\G^\star$ is not fully identifiable, some of its edges may be. We next formalize edge identifiability, under which the presence or absence of an edge is the same across all members of the equivalence class.
\begin{defn}[Edge identifiability] An edge $(i,j) \in V \times V$ is identifiable {with respect to $\{S_k\}_{k=1}^K$} if it is always present or always absent in graphs in the equivalence class of $\G^\star$; that is, either $i$ and $j$ are adjacent in every graph $\G \in \mathrm{EC}(\G^\star;\{S_k\}_{k=1}^K)$ or they are nonadjacent in every graph $\G \in \mathrm{EC}(\G^\star;\{S_k\}_{k=1}^K)$.
\label{defn:edge_identify}
\end{defn}

Since the latent projection operator depends globally on the graph structure, the absence or presence of an edge between a pair of nodes may be identified even if the pair is not co-observed in any source. This phenomenon is highlighted in the following example. 
\begin{example}
  Consider the true global graph $\G^\star$ among four variables $\{X_1,X_2,X_3,X_4\}$ in the top row of Figure~\ref{fig:equivalence_class_illustration}. Suppose we have three sources with $S_1 = \{1,3\}$, $S_2 = \{1,3,4\}$, and $S_3 = \{1,2,3\}$. The latent projected (local) graphs of $\G^\star$ in each of the three sources are displayed in top row of the figure. Under Assumption~\ref{ass:markov_faithfulness}, each graph encodes exactly the conditional independence relationships in the associated marginal distribution. The bottom row of Figure~\ref{fig:equivalence_class_illustration} shows the two graphs in $\mathrm{EC}(\G^\star;\{S_k\}_{k=1}^K)$, highlighting that $\G^\star$ cannot be distinguished from the other global graph using the marginal graphs. Nonetheless, the absence or presence of certain edges is identifiable; in particular, the edge between \(X_1\) and \(X_3\) is absent, whereas the edge between \(X_2\) and \(X_4\) is present. Interestingly, the edge between $X_2$ and $X_4$ is identified even though they are never co-observed in the three sources.  

\label{exam:illus}
\end{example}
\subsection{Summarizing graph equivalence classes without enumeration} For a given graph \(\G\), the equivalence class \(\mathrm{EC}(\G;\{S_k\}_{k=1}^K)\) generally does not admit a compact representation. Although Section~\ref{sec:exact_solvers} provides an integer programming procedure for enumerating its members, the equivalence class may contain many graphs, making exhaustive enumeration computationally costly and the resulting collection difficult to interpret. This motivates seeking informative summaries that can be computed without explicitly enumerating the entire equivalence class. 

Here we list few examples of useful summaries:
\emph{edge identifiability} (see Definition~\ref{defn:edge_identify}): for each pair of nodes \((i,j)\), we can ask whether the presence or absence of an edge is identified; \emph{logical edge implications}: whether the presence of an edge between $(i,j)$ implies the presence of an edge between $(i',j')$, or whether the two edges are always jointly present or absent; \emph{separation statements}: given disjoint variable sets \(A,B,C\subseteq V\), whether in every graph in the equivalence class, $C$ separates $A$ and $B$. Under faithfulness (e.g., Assumption~\ref{ass:standard_faithfulness}), this property would mean that all graphs specify the conditional independence relationship $X_A \perp X_B \mid X_C$; and \emph{bounds on graph attributes}: the minimum or maximum number of edges among graphs in the equivalence class, or the minimum or maximum degree of a given node.

All the above summary statistics can be obtained by solving the following optimization problem, without explicitly enumerating the equivalence class of graphs:

\begin{equation}
\min_{\substack{\text{undirected graph } {\G}'  \text{ over } V}}~~~~f(\G')\quad\text{subject-to:}\quad \G'\in \mathrm{EC}(\G;\{S_k\}_{k=1}^K).
\label{eq:opt_equiv}
\end{equation}
For edge identifiability, we can set \(f(\G')=\mathbb{I}[(i,j)\in E(\G')]\). An optimal value of one implies that \(i\) and \(j\) are adjacent in every graph in the equivalence class; otherwise, there exists at least one graph in the equivalence class in which they are nonadjacent. Taking $f(\G') = \mathbb{I}[(i,j)\not\in E(\G')]$ can similarly verify whether \(i\) and \(j\) are nonadjacent in every graph in the equivalence class. For logical edge implications, let \(f(\G') = \mathbb{I}[(i',j') \in E(\G')]- \mathbb{I}[(i,j) \in E(\G')]\). A non-negative optimal value means that if an edge between $(i,j)$ appears in a graph, the edge between $(i',j')$ also appears. For conditional independence relationships, we set \(f(\G')=\mathbb{I}[A \text{ and } B\text{ are separated in }\G' \text{ by }C]\). Under faithfulness in Assumption~\ref{ass:standard_faithfulness}, an optimal value of one means that all the graphs agree on the conditional independence relation $A\perp{B}\mid{C}$. Finally, for bounds on graph attributes, let \(f(\G')= \# \text{ edges in }\G'\). The optimal value gives a lower bound on the number of edges, and an upper bound can be obtained by considering the negative of $f(\G')$. A lower and upper bound on the degree can be similarly obtained by a suitable function $f(\G')$.  

In all the settings described above, we show in Section \ref{sec:summary_integer} that the optimization problem \eqref{eq:opt_equiv} can be cast as a linear mixed-integer program (MIP). Crucial to this characterization is that the constraint  $\G'\in \mathrm{EC}(\G;\{S_k\}_{k=1}^K)$ can be expressed as linear equalities/inequalities over a collection of binary variables, without explicitly enumerating the equivalence class. Such a formulation enables advanced branch and bound and cutting plane algorithms to be deployed to solve the optimization problem relatively efficiently.   

\section{An estimator for reconstructing a global graph}

\subsection{Preliminaries}

We suppose that each source provides one of two types of local information. For sources in $\mathcal{W}_1\subseteq[K]$, we have access to the local data: source $k\in\mathcal{W}_1$ provides observations drawn from the marginal distribution $P^{(k)}_{S_k}$. For sources in $\mathcal{W}_2=[K]\setminus\mathcal{W}_1$, we instead observe only an estimated local graph. This may arise in federated learning settings, where privacy or data-governance constraints prevent the sharing of raw data and only a learned graph can be released. Another example is evidence synthesis across published studies, where the original data are unavailable but estimated graphical structures are reported, potentially for different subsets of variables. Finally, it may also arise when fitting a graph to a particular source requires a specialized procedure with a highly nonconvex loss that is not amenable to convex mixed-integer programming (e.g., real-data analysis in Section~\ref{sec:realdata}). In such cases, the source can first estimate its local graph using an appropriate method and then pass the resulting graph to the integrative procedure. Our goal is to combine these two forms of local information to estimate a global graph.

For each $k\in\mathcal{W}_2$, let $\widehat{\G}^{(k)}_{S_k}$ denote the graph estimate on vertex set $S_k$ supplied by source $k$. We assume this estimate is obtained from $n_k$ independent and identically distributed observations drawn from the marginal distribution $P^{(k)}_{S_k}$. For each $k \in \mathcal{W}_1$, let $\mathcal{D}_k$ denote the corresponding local dataset of size $n_k:=|\mathcal{D}_k|$.
To incorporate these local datasets into an estimate of the global graph, we require a loss function that evaluates how well a candidate graph $\widetilde{\G}$ on $S_k$ fits $\mathcal{D}_k$, augmented by a sparsity-promoting penalty proportional to the number of free parameters. We denote this loss function by: 

\begin{equation}\mathrm{Loss}_k(\widetilde{\G};\mathcal{D}_k)= \min_{\theta^{(k)}\in\Theta_k({\widetilde{\G}})}\ell_k(\theta^{(k)};\mathcal{D}_k)+  \lambda_k\cdot\mathrm{dim}(\Theta_k({\widetilde{\G}}))
\label{eqn:loss_func}
\end{equation}
where $\ell_k(\theta^{(k)};\mathcal{D}_k)$ measures the lack of fit of the distribution indexed by $\theta^{(k)}$ to the local data, and $\Theta_k({\widetilde{\G}})$ denotes the admissible parameter space associated with the chosen loss function and the graph $\widetilde{\G}$. Thus, \(\mathrm{Loss}_k(\widetilde{\G};\mathcal{D}_k)\) is the minimum achievable loss over $\Theta_k({\widetilde{\G}})$, augmented by a sparsity-promoting penalty proportional to the dimension of $\Theta_k({\widetilde{\G}})$; here, $\mathrm{dim}(\Theta_k({\widetilde{\G}}))$ is the dimension of the parameter space associated with $\widetilde{\G}$. Although the penalty is constant in the optimization over \(\theta^{(k)}\) for a fixed graph, it varies across candidate graphs and discourages overfitting. We discuss strategies for choosing the regularization parameter in Section \ref{sec:simulations}. In Section~\ref{sec:disccusion_consistent}, for Examples~\ref{example:Gaussian}-\ref{example:discrete}, we describe choices for the loss $\ell_k(\cdot)$, the corresponding parameter space $\Theta_k({\widetilde{\G}})$, and its dimension. 

To ensure that our procedure is statistically accurate, both the estimators used to obtain the local graphs $\widehat{\G}_{S_k}^{(k)}$ for $k\in\mathcal{W}_2$ and the graph estimators obtained by minimizing $\mathrm{Loss}_k(\widetilde{\G};\mathcal{D}_k)$ for $k\in\mathcal{W}_1$ must be consistent. 

\begin{assumption}[Consistency]
As $n_k\longrightarrow\infty$,
\begin{eqnarray*}
\begin{aligned}
&\text{ for every }k\in \mathcal{W}_1,\quad \mathbb P_{P^{(k)}}\left(\argmin_{\substack{\text{undirected graph }\widetilde{\G}
\text{ over }S_k}}
\operatorname{Loss}_k(\widetilde{\G};\mathcal D_k)=\{\G^{\star(k)}_{S_k}\}
\right)
\longrightarrow 1 ,\ 
\\
&\text{ for every }k\in \mathcal{W}_2,\quad \mathbb{P}_{P^{(k)}}(\widehat{\G}_{S_k}^{(k)} = \G^{\star(k)}_{S_k}) \to 1.\ 
\end{aligned} 
\end{eqnarray*} 
Here, $\G^{\star(k)}_{S_k}$ is defined in 
Definition~\ref{defn:source-wise}.  
\label{ass:consistent}
\end{assumption}
In Section \ref{sec:disccusion_consistent}, we discuss settings where Assumption~\ref{ass:consistent} is satisfied. 

\subsection{Estimator}To learn a global undirected graph, we propose the following estimator for $\gamma \geq 0$ and $\omega_k\ge0$:

\begin{eqnarray}
\begin{aligned}
\widehat{\G} \in \argmin_{\substack{\text{undirected graph }\G \text{ over } V}}\quad \sum_{k \in \mathcal{W}_1} \omega_k\mathrm{Loss}_k(\pi_{S_k}(\G);\mathcal{D}_k)+ \gamma\sum_{k \in \mathcal{W}_2}\omega_k\mathrm{SHD}(\pi_{S_k}(\G), \widehat{\G}^{(k)}_{S_k}).
\end{aligned}
\label{eqn:main}
\end{eqnarray}
 Here, for two undirected graphs \(\G_1\) and \(\G_2\) on the same vertex set, \(\mathrm{SHD}(\G_1,\G_2)\) denotes their structural Hamming distance, namely, the number of edges on which they differ. For each candidate global graph \(\G\), the objective of \eqref{eqn:main} evaluates how well \(\pi_{S_k}(\G)\) fits the data from each source \(k\in \mathcal{W}_1\) and how closely it agrees with the estimated local graph provided by sources \(k\in\mathcal{W}_2\). The estimator therefore selects a global graph that minimizes the weighted discrepancies across sources. 
 
 The weights $\omega_k$ determine the relative contributions of the sources. One possible choice is $\omega_k=n_k/\sum_{k}n_k$, so that sources with more observations receive greater weight. More generally, the weights may reflect differences in source reliability or account for differences in the scales of the source-specific loss functions. The parameter \(\gamma\) additionally controls the relative contribution of sources providing graph estimates rather than raw data; we describe strategies to choose $\gamma$ in Remark~\ref{remark:choosing_gamma}.

 Notice that since the optimization problem in \ref{eqn:main} depends on $\G$ only through its latent projections $\{{\pi_{S_k}(\G)}\}_{k=1}^K$, any ${\G} \in \mathrm{EC}(\widehat{\G};\{S_k\}_{k=1}^K)$ yields exactly the same objective value. Consequently, the estimator cannot distinguish between graphs within the same projection-equivalence class. The following proposition establishes that, under suitable conditions, $\mathrm{EC}(\widehat{\G};\{S_k\}_{k=1}^K)$ consistently estimates the population equivalence class $\mathrm{EC}({\G}^\star;\{S_k\}_{k=1}^K)$. The proof is in Appendix~\ref{sec:proof_equiv}. Throughout, when a source does not contribute to the objective in eqref~\ref{eqn:main} because $\omega_k=0$ or, for $k\in\mathcal{W}_2$, $\gamma=0$, the equivalence classes are understood to be defined only with respect to the remaining sources.
 
 \begin{proposition}Suppose Assumptions~\ref{ass:markov_faithfulness}-\ref{ass:consistent} hold. Then, we have that:
$$
\mathbb P_{P^{(1)},P^{(2)},\dots,P^{(K)}}\left(
\mathrm{EC}(\widehat{\G},\{S_k\}_{k=1}^K)=\mathrm{EC}(\G^\star,\{S_k\}_{k=1}^K)
\right)
\longrightarrow 1
\text{ as }n_k\longrightarrow\infty~\forall k\in[K].
$$
That is, the estimator \eqref{eqn:main} consistently estimates the population equivalence class of $\G^\star$.
\label{prop:equiv}
\end{proposition}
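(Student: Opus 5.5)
The plan is to intersect finitely many high-probability events on which every source points unambiguously to its true projection, and then to show that on this intersection every minimizer of \eqref{eqn:main} lies in $\mathrm{EC}(\G^\star;\{S_k\}_{k=1}^K)$. Because the class depends only on the tuple of projections, that is enough. Throughout, $K$ is fixed and we discard sources with $\omega_k=0$ (and sources in $\mathcal{W}_2$ when $\gamma=0$), following the convention stated before the proposition.

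\emph{Step 1 (good event).} For $k\in\mathcal{W}_1$, let $A_k$ be the event that $\argmin_{\widetilde{\G}}\mathrm{Loss}_k(\widetilde{\G};\mathcal{D}_k)=\{\G^{\star(k)}_{S_k}\}$. For $k\in\mathcal{W}_2$, let $A_k$ be the event that $\widehat{\G}^{(k)}_{S_k}=\G^{\star(k)}_{S_k}$. By Assumption~\ref{ass:consistent}, $\mathbb{P}(A_k)\to1$ for each $k$. A union bound over the finitely many complements gives $\mathbb{P}(\bigcap_k A_k)\to1$ as all $n_k\to\infty$. We also use Assumption~\ref{ass:markov_faithfulness}, which gives $\G^{\star(k)}_{S_k}=\pi_{S_k}(\G^\star)$ for every $k$.

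\emph{Step 2 (the truth minimizes every term simultaneously).} On $\bigcap_k A_k$, take $k\in\mathcal{W}_1$. For any graph $\G$ over $V$, the projection $\pi_{S_k}(\G)$ is a graph on $S_k$, so
\[
\mathrm{Loss}_k(\pi_{S_k}(\G);\mathcal{D}_k)\ge \mathrm{Loss}_k(\G^{\star(k)}_{S_k};\mathcal{D}_k)=\mathrm{Loss}_k(\pi_{S_k}(\G^\star);\mathcal{D}_k).
\]
Because the minimizer is unique, equality holds if and only if $\pi_{S_k}(\G)=\pi_{S_k}(\G^\star)$. Now take $k\in\mathcal{W}_2$. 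Here $\mathrm{SHD}(\pi_{S_k}(\G),\widehat{\G}^{(k)}_{S_k})\ge0$, and it equals zero if and only if $\pi_{S_k}(\G)=\pi_{S_k}(\G^\star)$. Hence $\G^\star$ minimizes each summand separately, and so it minimizes the weighted sum, since all weights are nonnegative.

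\emph{Step 3 (every minimizer matches all projections).} Let $\widehat{\G}$ be any minimizer. Its objective value is at most that of $\G^\star$. Every active summand is weighted by a strictly positive coefficient, and each satisfies termwise $\widehat{\G}$-value $\ge$ $\G^\star$-value. So each summand must attain equality. By the equality characterization in Step 2, $\pi_{S_k}(\widehat{\G})=\pi_{S_k}(\G^\star)$ for every active $k$. It follows that $\widehat{\G}\in\mathrm{EC}(\G^\star;\{S_k\})$. Because membership of $\G'$ in either class is defined by the equalities $\pi_{S_k}(\G')=\pi_{S_k}(\cdot)$, we get $\mathrm{EC}(\widehat{\G};\{S_k\})=\mathrm{EC}(\G^\star;\{S_k\})$. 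This holds on an event whose probability tends to one, which proves the claim.

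The only delicate points are in Step 3. First, the termwise comparison requires strictly positive weights and $\gamma>0$; this is exactly why the convention discards inactive sources. Second, the uniqueness in Assumption~\ref{ass:consistent} is what turns a minimal loss into an exact projection match. No genuine analytic difficulty remains, since $K$ is finite and the sample-size limits enter only through Step 1.
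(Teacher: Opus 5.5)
Your proposal is correct and follows essentially the same route as the paper's proof: you intersect the consistency events from Assumption~\ref{ass:consistent} via a union bound, observe that $\G^\star$ minimizes every source-specific term, and conclude that every minimizer matches all projections, so the equivalence classes coincide. Your version also makes explicit two points the paper's proof leaves implicit: the uniqueness of the argmin, and the strict positivity of the active weights (including $\gamma>0$). These are what force termwise equality and hence $\pi_{S_k}(\widehat{\G})=\pi_{S_k}(\G^\star)$.
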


\begin{remark}
Notice that Proposition~\ref{prop:equiv} assumes the size of the vertex set $V$ does not grow with the sample size. In Section \ref{sec:theory}, we provide high-dimensional consistency guarantees that allow for the the number of vertices to grow with the sample size. 
\end{remark}

\begin{remark}[Choosing $\gamma$]\label{remark:choosing_gamma} The user may specify \(\gamma\) based on domain knowledge, assigning greater weight to one type of local information than to the other. Alternatively, \(\gamma\) may be selected in a data-driven manner using hold-out validation. Specifically, we partition the data into a training set \(\mathcal D_k^{\mathrm{train}}\) and a validation set \(\mathcal D_k^{\mathrm{valid}}\). For each candidate \(\gamma\), we solve \eqref{eqn:main} using the training data to obtain $\widehat{\G}_\gamma$; this also gives the associated continuous model parameters $\widehat{\theta}_{\gamma}^{(k)}$. The corresponding validation score is $\sum_{k \in \mathcal{W}_1}\ell_k(\widehat{\theta}_{\gamma}^{(k)};\mathcal{D}_k^\mathrm{valid})$. We can select the value of $\gamma$ that minimizes this validation score. Another possibility, which we adopt in our experiments, is to calibrate $\gamma$ to the sparsity penalties in the data-source losses. When these penalties are proportional to the number of edges, we set $\gamma$ equal to the median penalty per undirected edge across the data-providing sources. Since one unit of SHD represents one edge disagreement, this places its cost on the scale of a typical data-source edge penalty.
\end{remark}

\subsection{Discussion of Assumption~\ref{ass:consistent}}
\label{sec:disccusion_consistent}
Below, we describe choices of loss functions and parameter spaces for the distributional families considered in Examples~\ref{example:Gaussian}--\ref{example:discrete}, for which Assumption~\ref{ass:consistent} can hold under suitable regularity and tuning conditions. In the following examples, $\widetilde{\G}$ is an undirected graph over $S_k$.

\medskip
\noindent\textbf{Gaussian graphical models in Example~\ref{example:Gaussian}.} Let $\ell_k$ be the negative Gaussian log-likelihood, defined for $\Theta_k({\widetilde{\G}}):= \{\theta \in \mathbb{R}^{|S_k| \times |S_k|}: \theta \succ 0, \theta_{ij}=0 \text{ if }(i,j)\not\in E(\widetilde{\G})\}$:
\begin{equation}\ell_k(\theta^{(k)};\mathcal{D}_k):= -\log\det(\theta^{(k)}) + \mathrm{tr}(\widehat{\Sigma}_k\theta^{(k)}).
\label{eqn:log_det}
\end{equation}
Here, $\widehat{\Sigma}_k \in \mathbb{R}^{|S_k| \times |S_k|}$ is the sample covariance matrix of the data in source $k$. Further, $\mathrm{dim}(\Theta_k(\widetilde{\G}))= |E(\widetilde{\G})|+|S_k|$. 

Another possible loss, defined for $\Theta_k({\widetilde{\G}}):= \{\theta \in \mathbb{R}^{|S_k| \times |S_k|}: \theta_{ii}=0~\forall i,\theta_{ij}=0 \text{ if }(i,j)\not\in E(\widetilde{\G})\}$, is based on pseudo-likelihood approximation \citep{Besag1974SpatialIA} and takes on the quadratic form:
\begin{equation}\ell_k(\theta^{(k)};\mathcal{D}_k):= \mathrm{tr}((I-\theta^{(k)})^\top \widehat{\Sigma}_k(I-\theta^{(k)})). 
\label{eqn:convex_quad}
\end{equation}
Here, $\mathrm{dim}(\Theta_k(\widetilde{\G}))=2|E(\widetilde{\G})|$. In our experiments, we favor the quadratic loss \eqref{eqn:convex_quad} over its non-quadratic counterpart \eqref{eqn:log_det} since it yields a mixed-integer quadratic program, which is amenable to computationally faster algorithms. See Section \ref{MIP-form} for a detailed discussion. Note that the loss \eqref{eqn:convex_quad}, augmented with the $L_1$ penalty on the coefficients $\theta^{(k)}$, is the \emph{neighborhood selection} procedure \citep{Meinshausen2006HighdimensionalGA}. Since our formulation already introduces binary variables to encode the latent projections, an $L_0$ penalty on the coefficients, which is equivalently $\mathrm{dim}(\Theta_k(\widetilde{\G}))$, can be represented as a linear function of these binary variables and therefore does not require any additional integer variables.

\medskip
\noindent\textbf{Pearson correlation graphical models in Example~\ref{example:pearson}.} The loss \eqref{eqn:log_det} or \eqref{eqn:convex_quad} can be used directly, with the same choices for $\Theta_k(\widetilde{\G})$ and its dimension as in the Gaussian example.

\medskip
\noindent\textbf{Gaussian copula and Elliptical graphical models in Examples~\ref{example:copula}-\ref{example:elliptical}.} 
The losses \eqref{eqn:log_det} or \eqref{eqn:convex_quad} can again be used after replacing $\widehat{\Sigma}_k$ with the rank-based estimator $\widehat{R}$, obtained from the sample Kendall correlations via $\widehat{R}_{ij}=\sin(\pi\widehat{\tau}_{ij}/2)$. Further, $\Theta_k(\widetilde{\G})$ and its dimension are the same as in the Gaussian example.

\medskip
\noindent\textbf{Extremal graphical models in Example~\ref{example:extremal}.} Let $\ell_k$ be the following loss, defined for $\Theta_k({\widetilde{\G}}):=\{\theta\in\mathbb{R}^{|S_k| \times |S_k|}: \theta\mathbf{1}=0, \theta \succeq 0, \theta_{ij}=0 \text{ if }(i,j)\not\in E(\widetilde{\G})\}$ with $\mathbf{1}$ denoting all-ones vector \citep{Hentschel2022StatisticalIF,Engelke2024ExtremalGM}:
\begin{equation*}\ell_k(\theta^{(k)};\mathcal{D}_k):= -\log\det(U^T\theta^{(k)}{U}) + \mathrm{tr}(-\widehat{\Gamma}/2\theta^{(k)}). 
\label{eqn:extremal}
\end{equation*}
Here, $U \in \mathbb{R}^{|S_k| \times |S_k|-1}$ is an orthonormal matrix such that $UU^\top$ is the projection matrix onto the subspace orthogonal to the all-ones vector. Further, $\widehat{\Gamma}$ is an empirical estimate of the true variogram matrix, and $\mathrm{dim}(\Theta_k(\widetilde{\G})) = |E(\widetilde{\G})|$. 

\medskip
\noindent\textbf{Discrete graphical models in Example~\ref{example:discrete}.} Let $\widehat{P}^{(k)}_{S_k}$ denote the empirical distribution of the data in source $k$, over the finite discrete space $\mathcal{X}:=\prod_{i\in S_k}\mathcal{X}_i$. We parameterize distributions Markov with respect to $\widetilde{\G}$ as $p_{\theta^{(k)}}(x) = \exp\{\sum_{C\in\mathcal{C}(\widetilde{\G})} \theta^{(k)}_C(x_C)-A(\theta^{(k)})\}$, where $\mathcal{C}(\widetilde{\G})$ is the collection of nonempty cliques in $\widetilde{\G}$, $\theta^{(k)}$ is a collection of functions $\theta^{(k)}_C:\mathcal{X}_C \to \mathbb{R}$, $A(\theta^{(k)})$ is a normalization term, and $\Theta_k(\widetilde{\G})$ denotes the corresponding parameter space. The negative log-likelihood under an undirected graph $\widetilde{\G}$ can then be written as
\begin{equation}
\ell_k(\theta^{(k)};\mathcal{D}_k) = -\sum_{x\in\mathcal{X}}\widehat{P}^{(k)}_{S_k}(x)\log{p}_{\theta^{(k)}}(x).
\label{eqn:discrete_loss_1}
\end{equation}
Since for any $C \in \mathcal{C}(\widetilde{\G})$, $\mathrm{dim}(\theta_c) = \prod_{i \in C}\left\{|\mathcal{X}_i|-1\right\}$, we have that $\mathrm{dim}(\Theta_k(\widetilde{\G})) = \sum_{C \in \mathcal{C}(\widetilde{\G})}\prod_{i \in C}\left\{|\mathcal{X}_i|-1\right\}$.

Alternatively, we parameterize the conditional distribution of each variable $X_i$ given its neighbors $X_{\mathrm{N}(i,{\widetilde{\G}})}$ as $p_{\theta_i^{(k)}}(x_i\mid x_{\mathrm{N}(i,\widetilde{\G})})=\exp\{\theta_i^{(k)}(x_i,x_{\mathrm{N}(i,\widetilde{\G})})-A_i(\theta_i^{(k)})\}$, where $\theta_i^{(k)}:\mathcal{X}_i\times\mathcal{X}_{\mathrm{N}(i,\widetilde{\G})}\to\mathbb{R}$ and $\theta^{(k)}=\{\theta_i^{(k)}\}_{i\in S_k}$. We let $\Theta_k(\widetilde{\G})$ denote the resulting parameter space. The negative pseudo-log-likelihood under an undirected graph $\widetilde{\G}$ is then
\begin{equation}
\ell_k(\theta^{(k)};\mathcal{D}_k) = -\sum_{i\in S_k}\sum_{x \in \mathcal{X}} \widehat{P}^{(k)}_{S_k}(x)\log{p}_{\theta^{(k)}_i}(x_i|x_{\mathrm{N}(i,\widetilde{\G})}).
\label{eqn:discrete_loss_2}
\end{equation}
Compared to \eqref{eqn:discrete_loss_1}, the loss in \eqref{eqn:discrete_loss_2} avoids the global normalization constant, requiring only nodewise normalization of the conditional distributions. Here, $\mathrm{dim}(\Theta_k(\widetilde{\G})) = \sum_{i\in S_k}(|\mathcal{X}_i|-1)\prod_{j \in \mathrm{N}(i,\widetilde{\G})}|\mathcal{X}_j|$.

\medskip

Under suitable regularity and identifiability conditions, consistent graph selection in the fixed-dimensional regime, where $|S_k|$ does not grow with $n_k$, can be obtained by choosing a penalty parameter that vanishes sufficiently slowly. In particular, when the loss is normalized by sample size and the parameter estimators are $\sqrt{n_k}$-consistent, sufficient conditions are $\lambda_k\to 0$ and $n_k\lambda_k\to\infty$, as satisfied by $\lambda_k=\log(n_k)/n_k$.

Note that other model classes, including pairwise exponential-family graphical models, also admit loss functions that yield consistent graph estimators under suitable regularity conditions \citep{Yang2015GraphicalMV}. We focus on the examples above as (i) their underlying distributional families are closed under marginalization, and (ii) they admit an algebraic characterization of Assumption~\ref{ass:markov_faithfulness} (see Section~\ref{sec:setup}). The first property provides a coherent model for both the full distributions \(P^{(k)}\), which define the global graph \(\G^\star\), and the observed marginal distributions \(P_{S_k}^{(k)}\). Additionally, the second property allows us to better understand Assumption~\ref{ass:markov_faithfulness}.

\section{Exact solvers: casting estimator as a convex mixed-integer program}
\label{sec:exact_solvers}
\subsection{Encoding latent projections via mixed-integer constraints}
\label{sec:encoding_latent}
A candidate global graph $\G=(V,E)$, together with its latent projection onto the observed vertex set $S_k$ in source $k$, is represented by the binary decision variables
$$
g_{ij}\in\{0,1\}\qquad \forall\,1\le i<j\le p,\ \quad  h^{(k)}_{ij}\in\{0,1\}\qquad \forall\, i<j,\ i,j\in S_k.
$$
Here, \(g_{ij}=1\) indicates that \((i,j)\in E\), while the variables \(h^{(k)}_{ij}\) encode the edges of the latent projection graph \(\pi_{S_k}(\G)\). Accordingly, for each \(k\in[K]\) and each pair \(i<j\) with \(i,j\in S_k\), we require \(h_{ij}^{(k)}=1\) if and only if $i$ and $j$ are adjacent in $\pi_{S_k}(\G)$. Equivalently, we enforce the relation

\begin{eqnarray}
h_{ij}^{(k)}=1
~\Longleftrightarrow~
\begin{aligned}[t]
&\text{\(i\) and \(j\) are adjacent in \(\G\), or}\\
&\text{there exists a path between \(i\) and \(j\) in \(\G\)
whose internal vertices lie in \(T_k\),}
\end{aligned}
\label{eqn:projection_constraint}
\end{eqnarray}
where $T_k:= V\setminus S_k$ represents hidden vertices, i.e., vertices that are not observed in source $k$. By introducing additional continuous variables, we will show that relation \eqref{eqn:projection_constraint} can be expressed as a collection of linear constraints involving the binary variables $g_{ij}$ and $h_{ij}^{(k)}$ and the auxiliary continuous variables. This formulation is key to recasting the estimator in \eqref{eqn:main} as a convex MIP, as well as to designing linear MIPs for enumerating graphs in an equivalence class and computing summary statistics of that equivalence class.

For the direction $\Rightarrow$ in \eqref{eqn:projection_constraint}, we use a \emph{network-flow formulation}. Fix an origin $i \in S_k$, which serves as the source of the flow, and let $J_k(i) := \{j\in S_k, j>i\}$ denote the set of possible destinations. Each variable $h_{ij}^{(k)}=1$, with $j\in J_k(i)$, creates one unit of demand at destination $j$. The flow may be sent directly from $i$ to a destination or along a path whose internal vertices lie in $T_k$. Let $A_k(i)$ denote the set of directed arcs available to this flow. It consists of arcs from $i$ to vertices in \(T_k\) and to destinations in \(J_k(i)\), arcs in both directions between distinct vertices of \(T_k\), and arcs from vertices in \(T_k\) to destinations in \(J_k(i)\). For each \((u,v)\in A_k(i)\), let \(f_{uv}^{(k),i}\geq 0\) denote the amount of flow sent along the arc \((u,v)\). A positive flow \(f_{uv}^{(k),i}>0\) requires the corresponding undirected edge \((u,v)\) to be selected in \(\G\), that is, \(g_{uv}=1\), with the indices ordered appropriately. Figure~\ref{fig:aggregated_flow} (left) illustrates this flow network.

\begin{figure}
    \centering
    \begin{minipage}[t]{0.48\linewidth}
    \includegraphics[width=0.7\linewidth]{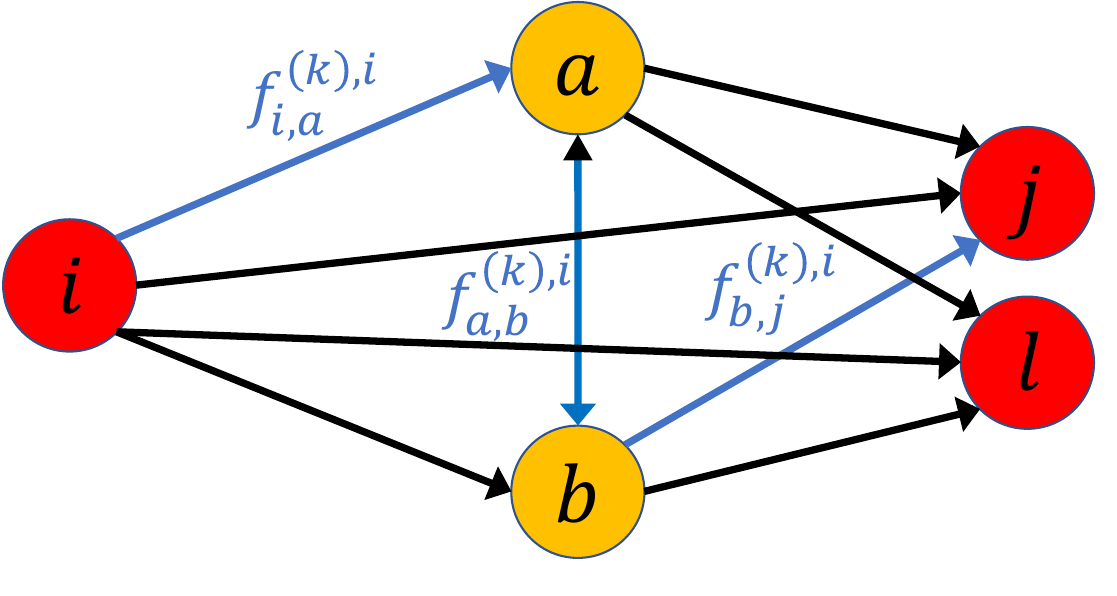}
\end{minipage}
    \begin{minipage}[t]{0.48\linewidth}
\includegraphics[width=0.9\linewidth]{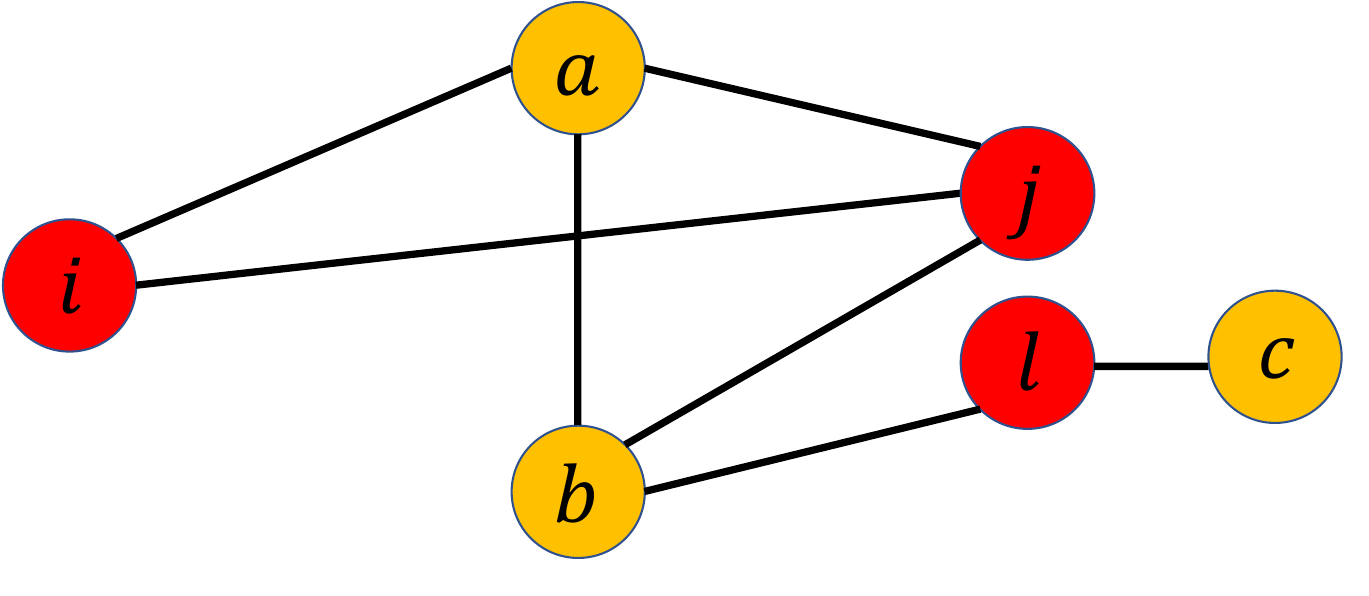}
\end{minipage}
    \caption{\small \textbf{Left}. A flow network with $S_k = \{i,j,l\}$ with origin $i$ and possible destinations $J_k(i) = \{j,l\}$. Here, $T_k= \{a,b\}$ represents the set of hidden vertices. The directed arcs are specified by \(A_k(i)\), which determines the possible directions of flow; arcs between vertices in \(T_k\) are available in both directions. Suppose $h_{ij}^{(k)}=1$. The blue arcs illustrate one possible flow from $i$ to $j$, with each displayed arc carrying a strictly positive amount of flow. \textbf{Right.} A global graph $\G$ over six vertices. One additional vertex is included relative to the left panel to improve the illustration. Here $S_k = \{i,j,l\}$, $T_k = \{a,b,c\}$, and the connected components of the subgraph induced by \(T_k\) are $\{a,b\}$ and $\{c\}$. Because the component \(\{a,b\}\) is attached to \(i\), we have \(m_a^{(k),i},m_b^{(k),i}>0\), whereas \(m_c^{(k),i}=0\). Moreover, since \(m_b^{(k),i}>0\) and \(l\) is adjacent to \(b\), it follows that \(h_{il}^{(k)}=1\). Finally, because \(i\) and \(j\) are directly adjacent, we have \(h_{ij}^{(k)}=1\).}
    
    \label{fig:aggregated_flow}
\end{figure}
Formally, for each $i\in S_k$ and $k\in [K]$, we impose the following linear constraints, connecting the flow variables to $g$ and $h^{(k)}$:
\begin{eqnarray}
\begin{aligned}
\sum_{v:(i,v)\in {A}_k(i)} f^{(k),i}_{iv} = \sum_{j\in J_k(i)} h_{ij}^{(k)},\quad\quad  
\sum_{u\in T_k} f^{(k),i}_{uj}+f^{(k),i}_{ij} = h^{(k)}_{ij}\quad  \forall\, j\in J_k(i),\\
f^{(k),i}_{it}+\sum_{u\in T_k\setminus\{t\}} f^{(k),i}_{ut}=
\sum_{v\in T_k\setminus\{t\}} f^{(k),i}_{tv}+\sum_{j\in J_k(i)} f^{(k),i}_{tj}\quad \forall\, t\in T_k,\\
f_{uv}^{(k),i} \leq [1+(|J_k(i)|-1)\mathbb{I}\{v\in T_k\}]g_{uv},\ \forall (u,v) \in {A}_k(i).
\end{aligned}
\label{eqn:flow_constraints}
\end{eqnarray}

The first constraint requires the origin to emit one unit of flow for each selected destination. The second requires each destination $j$ to receive exactly $h_{ij}^{(k)}$ units, either directly from $i$ or through a hidden vertex. The third imposes flow conservation at every hidden vertex. Finally, the fourth constraint allows flow only along edges selected in the global graph: arcs entering a destination have capacity one, whereas arcs entering a hidden vertex have capacity $|J_k(i)|$. Note that our formulation aggregates all destinations associated with the same origin to avoid a separate flow problem for every pair $(i,j)$ and reduces the number of flow variables for source $k$ from $O(s_k^2t_k^2)$ to $O(s_k^2t_k+s_kt_k^2)$.

To enforce the direction $\Leftarrow$ in \eqref{eqn:projection_constraint}, we introduce an auxillary continuous variable $m^{(k),i}_{u} \in [0,1]$ for each $i \in S_k$ and $u \in T_k$. This variable indicates whether the connected component containing $u$ in the subgraph induced by $T_k$ is attached to $i$, meaning that $i$ is adjacent to at least one vertex in the component. In particular, $m^{(k),i}_{u}>0$ if and only if the connected component containing $u$ is attached to $i$. If a vertex in such a component is also attached to some $j \in J_k(i)$, then \(\G\) contains a path from $i$ to $j$ whose internal vertices lie in $T_k$, and the constraints force \(h_{ij}^{(k)}=1\). The case where $i$ and $j$ are adjacent in $\G$ is handled separately in which $h_{ij}^{(k)}=1$. See Figure~\ref{fig:aggregated_flow}(right) for an illustration.

Formally, for each $i \in S_k$ and $k \in [K]$, we impose the following linear constraints, connecting the auxiliary variables to $g$ and $h^{(k)}$:
\begin{eqnarray}
\begin{aligned}
m_u^{(k),i}-m_v^{(k),i} \leq 1-g_{uv}, \quad m_v^{(k),i}-m_u^{(k),i} \leq 1-g_{uv},\quad m_{u}^{(k),i} \geq g_{iu},\ \forall\, u,v\in T_k,\ u<v, \\
m_{u}^{(k),i} \leq 1-g_{uj}+h_{ij}^{(k)},\quad g_{ij}\leq h_{ij}^{(k)} ,\ \forall\, u \in T_k, j \in J_k(i).
\end{aligned}
\label{eqn:hidden_component}
\end{eqnarray}
If $g_{uv}=1$ for hidden vertices $u,v \in T_k$, the first two inequalities force $m_u^{(k),i}=m_v^{(k),i}$. Thus, the same value is propagated throughout each connected component of the subgraph induced by \(T_k\). If the origin $i$ is adjacent to a hidden vertex $u$, the third constraint forces $m_u^{(k),i}=1$, and this value propagates throughout the hidden component containing $u$. The fourth constraint ensures that if a hidden component attached to $i$ contains a vertex adjacent to $j$, then $h^{(k)}_{ij}=1$. Finally, the last constraint handles the case in which $i$ and $j$ are adjacent. This construction avoids introducing separate reachability variables for every pair $(i,j)$. Instead, the hidden-component variables are shared across all destinations associated with the same origin $i$. As a result, the formulation uses $O(s_kt_k)$ continuous variables and $O(s_kt_k^2+s_k^2t_k)$ constraints, rather than $O(s_k^2t_k)$ variables and $O(s_k^2t_k^2)$ constraints. 

The linear constraints \eqref{eqn:flow_constraints} and \eqref{eqn:hidden_component} provide an exact characterization of a latent projection: if the graph encoded by $h^{(k)}$ is the latent projection onto $S_k$ of the global graph encoded by $g$, then $(g,h^{(k)})$ can be extended to a feasible solution of the constraint set; conversely, any feasible pair $(g,h^{(k)})$ satisfies this latent-projection relation.

\subsection{Convex mixed-integer programming reformulation of estimator}
\label{sec:MIP_reformulation}
Building on the constraints in \eqref{eqn:flow_constraints} and \eqref{eqn:hidden_component} for encoding latent projections, for the graphical models in Examples~\ref{example:Gaussian}-\ref{example:discrete}, we reformulate the estimator \eqref{eqn:main} as a convex MIP. 

\medskip 

\subsubsection{MIP formulation for Examples~\ref{example:Gaussian}-\ref{example:extremal}.}\label{MIP-form} These examples share two common attributes: 
\begin{enumerate}
\item[(i)] The loss $\ell_k(\theta^{(k)};\mathcal{D}_k)$ is parameterized by the matrix parameter $\theta^{(k)} \in \mathbb{R}^{|S_k| \times |S_k|}$ and is convex with respect to $\theta^{(k)}$;
\item[(ii)] The condition $\theta^{(k)} \in \Theta_k({\widetilde{\G}})$ can be expressed as $\theta^{(k)}_{ij} =0 \text{ if }(i,j)\not\in E(\widetilde{\G})$, and $\theta^{(k)} \in \Xi_{k}$, where $\Xi_{k}$ is a convex set that does not depend on $\widetilde{\G}$. 
\end{enumerate}
Recall that, up to a graph-independent additive constant, $\mathrm{dim}(\Theta_k(\widetilde{\G})) = c\cdot(\# \text{ edges  in }\widetilde{\G})$ where $c = 2$ for the quadratic loss \eqref{eqn:convex_quad}; otherwise $c = 1$. As a result of the above two properties, for all these problems, \eqref{eqn:main} can be formulated as follows:
\begin{eqnarray*}
\begin{aligned}
\argmin_{\G,\{\theta^{(k)}\}_{k\in\mathcal{W}_1}} &\quad \sum_{k \in \mathcal{W}_1} \omega_k\Big[ \ell_k(\theta^{(k)};\mathcal{D}_k) + c\lambda_k|E\left\{\pi_{S_k}(\G)\right\}|\Big]+ \gamma\sum_{k \in \mathcal{W}_2} \omega_k\mathrm{SHD}(\pi_{S_k}(\G),\widehat\G^{(k)}_{S_k}),\\
\text{subject-to:}&\quad \G \text{ undirected graph on }V,\\
&\quad \theta^{(k)} \in \Xi_k \quad \forall k\in\mathcal{W}_1,\\[-0.05in]
&\quad  (i,j) \text{ not adjacent in }\pi_{S_k}(\G) \Rightarrow \theta^{(k)}_{ij}=0 \quad \forall i,j \in S_k, k\in\mathcal{W}_1.
\end{aligned}
\end{eqnarray*}
Here and throughout, for notational simplicity, we index the rows and columns of $\theta^{(k)}$ by the elements of $S_k$; thus, for $i,j \in S_k$, $\theta^{(k)}_{ij}$ denotes the entry corresponding to the pair $(i,j)$. 

The optimization problem above is nonconvex and computationally challenging, primarily because of the latent-projection operator $\pi_{S_k}(\cdot)$. We now use the binary variables $g$ and $h$ and the auxiliary continuous variables $f$ and $m$ introduced in Section~\ref{sec:encoding_latent} to reformulate this problem using the linear constraints in \eqref{eqn:flow_constraints} and \eqref{eqn:hidden_component}. To simplify notation, we define their joint domain by:
\begin{eqnarray*}
\begin{aligned}
\mathcal{F}(\{S_k\}_{k=1}^K):= \Bigg\{&(g,h,f,m): g_{i,j} \in \{0,1\}~\forall 1\leq i<j \leq p;~ h_{ij}^{(k)} \in \{0,1\} ~\forall i \in S_k, j\in J_k(i), k \in[K],  \\ 
&f^{(k),i}_{u,v}\geq 0 ~\forall i\in S_k, (u,v)\in A_k(i), k \in [K];~m_u^{(k),i} \in [0,1] ~\forall  i \in S_k, u \in T_k,k \in [K]\Bigg\}. 
\end{aligned}
\end{eqnarray*}
Furthermore, for $k \in \mathcal{W}_2$, let $\widehat h^{(k)}$ encode the estimated graph $\widehat{\G}_{S_k}^{(k)}$, so that $\widehat h_{ij}^{(k)}=1$ if and only if $i$ and $j$ are adjacent in $\widehat{\G}_{S_k}^{(k)}$. The values $\widehat h^{(k)}$ are fixed and are therefore not decision variables in the optimization problem. A straightforward calculation shows that for every $k \in \mathcal{W}_2$:
$$\mathrm{SHD}(\pi_{S_k}(\G),\widehat\G^{(k)}_{S_k}) = \sum_{i,j \in S_k, i<j} |h^{(k)}_{ij}-\widehat h^{(k)}_{ij}|= \sum_{i,j \in S_k, i<j} \widehat h^{(k)}_{ij} + (1-2\widehat h^{(k)}_{ij})h^{(k)}_{ij}.$$
With this simplification, we can obtain the following reformulation of \eqref{eqn:main}:
\begin{equation}
\begin{aligned}
\argmin_{\substack{(g,h,f,m)\\ \{\theta^{(k)}\}_{k\in\mathcal{W}_1}}}
&\quad \sum_{k \in \mathcal{W}_1} \omega_k
\Big[
\ell_k(\theta^{(k)};\mathcal{D}_k)
+ c\lambda_k
\sum_{\substack{i,j\in S_k\\ i<j}}h_{ij}^{(k)}
\Big]
+ \gamma\sum_{k \in \mathcal{W}_2} 
\sum_{\substack{i,j\in S_k\\ i<j}}
\omega_k(1-2\widehat h^{(k)}_{ij})h_{ij}^{(k)},\\
\text{subject-to:}
&\quad (g,h,f,m) \in\mathcal{F}(\{S_k\}_{k=1}^K),\\
&\quad \theta^{(k)} \in\Xi_k \quad \forall k \in\mathcal{W}_1,\\
&\quad (g,h,f,m) \text{ satisfy the linear constraints in }
\eqref{eqn:flow_constraints} \text{ and }\eqref{eqn:hidden_component}
\quad \forall k\in[K],\\
&\quad h_{ij}^{(k)}=0
\Rightarrow \theta^{(k)}_{ij}=\theta_{ji}^{(k)}=0
\quad \forall i,j\in S_k,\ i<j,\ \forall k\in\mathcal{W}_1.
\end{aligned}
\label{eqn:reformulation_main}
\end{equation}
We have omitted additive terms that do not depend on the decision variables. The indicator constraint ensures that whenever an edge is absent from the latent projection, the corresponding model parameters are zero. These indicator constraints can be handled directly by standard mixed-integer optimization solvers such as \emph{Gurobi}. Alternatively, if a valid finite bound $M$ is available, they can be expressed as the linear big-$M$ inequalities:   $-M h_{ij}^{(k)} \leq \theta^{(k)}_{ij} \leq Mh_{ij}^{(k)}$ and $-M h_{ij}^{(k)} \leq \theta^{(k)}_{ji} \leq Mh_{ij}^{(k)}$.  Since the loss $\ell(\theta^{(k)};\mathcal D_k)$ is convex over $\Xi_k$, the continuous optimization problem is convex for any fixed assignment of the binary variables. Thus, \eqref{eqn:reformulation_main}
is a convex MIP.

\noindent\textbf{Perspective strengthening and outer approximation for quadratic losses.}  Formulation \eqref{eqn:reformulation_main} provides a general convex MIP formulation of the estimator \eqref{eqn:main}. When $\ell_k(\theta^{(k)};\mathcal{D}_k)$ is the quadratic loss \eqref{eqn:convex_quad}, we can exploit additional problem structure to develop more efficient convex mixed-integer formulations using perspective strengthening \citep{Cui2012ConvexRA} and outer approximation \mbox{\citep{Duran1986AnOA}}. In our simulations and real data experiments, we apply this faster algorithm. See Appendix~\ref{sec:perspective} for details.

\subsubsection{MIP formulation for Example~\ref{example:discrete}} We describe in Appendix~\ref{sec:discrete_MIP} how the estimator \eqref{eqn:main}, for the loss $\ell_k(\theta^{(k)};\mathcal{D}_k)$ given by either \eqref{eqn:discrete_loss_1} or \eqref{eqn:discrete_loss_2}, can be cast as a convex MIP. For the loss \eqref{eqn:discrete_loss_1}, after restricting $\G$ to be chordal (so that every cycle of length at least four has a chord), we use the fact that chordal graphs admit perfect elimination orderings \citep{Fulkerson1965IncidenceMA}. This allows us to obtain a convex MIP formulation. For the loss \eqref{eqn:discrete_loss_2}, we exploit the decomposition of the minimized pseudo-likelihood into nodewise conditional entropies, expressed in terms of precomputed empirical entropies.

\subsection{Enumerating graphs in an equivalence class} Suppose we obtain a global graph $\widehat{\G}$ after solving \eqref{eqn:reformulation_main}. We next present an integer programming approach to enumerate members of the equivalence class $\mathrm{EC}(\widehat{\G};\{S_k\}_{k=1}^K)$. Recall that this equivalence class consists of global graphs whose latent projections onto $S_k$ match those of $\widehat{\G}$.

Let $\widehat{g}$ be binary variables encoding the graph $\widehat{\G}$. we also obtain binary variables $\widehat h^{(k)}$ encoding the latent projections of $\widehat{\G}$ onto $S_k$, and hence treat these as known. Remark~\ref{remark:latent_graphs} describes how the $\widehat h^{(k)}$ can alternatively be obtained via breadth-first-search algorithm. To enumerate the equivalence class, we initialize the procedure by setting $\widehat g^1:=\widehat g$. Suppose that, after $L$ iterations, we have obtained distinct global graphs $\widehat g^1,\ldots,\widehat g^L$ whose latent projections agree with the estimated local graphs $\widehat h^{(k)}$, $k\in[K]$. To obtain another member of the equivalence class, we solve the following linear MIP:
\begin{eqnarray}
\begin{aligned}
\argmin_{(g,h,f,m)\in\mathcal{F}(\{S_k\}_{k=1}^K)}&\quad 1,\\
\text{subject-to:}&\quad(g,h,f,m)\text{ satisfy the linear constraints in } \eqref{eqn:flow_constraints} \text{ and }\eqref{eqn:hidden_component} ~\forall k\in[K],\\&\quad h^{(k)} = \widehat{h}^{(k)} \quad \forall k\in[K],\\
&\quad \sum_{\substack{1\leq i<j\leq p\\\widehat{g}^\ell_{ij} = 1}}(1-g_{ij})+ \sum_{\substack{1\leq i<j\leq p\\ \widehat{g}^\ell_{ij} = 0}}g_{ij}\geq 1 \quad \forall \ell \in [L].
\end{aligned}
\label{eqn:enumerating}
\end{eqnarray}
Here, the first constraint ensures that $h^{(k)}$ encodes the latent projection of the global graph encoded by $g$ onto $S_k$. The second constraint ensures that these latent projections match those of $\widehat{\G}$. The final constraint ensures that the global graph encoded by $g$ is distinct from each of the previously identified graphs $\widehat g^1,\ldots,\widehat g^L$. 

If \eqref{eqn:enumerating} has a solution $\widehat g^{L+1}$, we add it to the collection of identified graphs and repeat the procedure. If \eqref{eqn:enumerating} is infeasible, we terminate and return $\widehat g^1,\ldots,\widehat g^L$ as the members of the equivalence class.
\begin{remark}
\label{remark:latent_graphs}
If the latent projected graphs of $\widehat\G$ onto $S_k$ are not already available, e.g., from the solution of \eqref{eqn:reformulation_main}, they can be obtained via breadth-first-search (BFS). Specifically, for each node $i \in S_k$, start a BFS traversing on the graph $\widehat\G$. If a node $j \in S_k$, $j \neq i$ is reached, then add edge $i-j$ in the latent projected graph, and do not continue the BFS through $j$. 

\end{remark}

\subsection{Computing summary statistics of an equivalence class}
\label{sec:summary_integer}
In Section \ref{eq:opt_equiv}, we described a general optimization procedure for computing various summary statistics of an equivalence class without enumerating its graphs. In this section, we describe how this optimization can be reformulated as a linear MIP.

As with the previous section, let $\widehat h^{(k)}$ encode the latent projections of $\widehat{\G}$ onto $S_k$; the binary variables $\widehat{h}^{(k)}$ are assumed to be known, otherwise they can be learned (see Remark~\ref{remark:latent_graphs}). 

\noindent\emph{Edge identifiability.}. For a pair of nodes $(i,j)$, we can verify whether this edge is present in every member of the equivalence class by solving the linear MIP:
\begin{eqnarray}
\begin{aligned}
\min_{(g,h,f,m)\in\mathcal{F}(\{S_k\}_{k=1}^K)}&\quad g_{ij},\\
\text{subject-to:}&\quad(g,h,f,m)\text{ satisfy the linear constraints in } \eqref{eqn:flow_constraints} \text{ and }\eqref{eqn:hidden_component} ~\forall k\in[K],\\&\quad h^{(k)} = \widehat{h}^{(k)} \quad \forall k\in[K].
\end{aligned}
\label{eqn:edge_identifiability}
\end{eqnarray}
A solution of one means that $i$ and $j$ are adjacent in every equivalence class member.

\noindent\emph{Logical edge implications.} For two pairs of nodes $(i,j)$ and $(a,b)$, we can verify whether the presence of edge $(a,b)$ implies the presence of edge $(i,j)$ for all equivalent members by solving the linear MIP \eqref{eqn:edge_identifiability} with the added constraint that $g_{a,b}=1$. A solution of one means that the implication holds for every member of the equivalence class.

\noindent\emph{Separation statements.} For disjoint variable sets $A,B,C \subseteq V$, we can verify whether in every graph in the equivalence class, $C$ separates $A$ and $B$ by solving:
\begin{eqnarray*}
\begin{aligned}
\max_{(g,h,f,m)\in\mathcal{F}(\{S_k\}_{k=1}^{K+1})}&\quad \sum_{i \in A}\sum_{j \in B}h_{ij}^{(K+1)},\\
\text{subject-to:}&\quad(g,h,f,m)\text{ satisfy the linear constraints in } \eqref{eqn:flow_constraints} \text{ and }\eqref{eqn:hidden_component} ~\forall k\in[K+1],\\&\quad h^{(k)} = \widehat{h}^{(k)} \quad \forall k\in[K].
\end{aligned}
\end{eqnarray*}
Here, $S_{K+1}:= \{A,B,C\}$. A value of zero means that $C$ separates $A$ and $B$ in every equivalence class member. Under faithfulness (i.e.,  Assumption~\ref{ass:standard_faithfulness}), we would have the additional interpretation that all graphs in the equivalence class specify the independence relationship $X_A \perp X_B \mid X_C$.

\emph{Bounds
on graph attributes.} To find the minimum number of edges in any graph in the equivalence class, we can solve the linear mixed-integer program \eqref{eqn:edge_identifiability} with the objective changed to $\sum_{1\leq i<j \leq p}g_{ij}$. The maximum number of edges would solve a similar problem, but with a maximization. To find a minimum degree of a node $i$ among all equivalence class members, we can solve the linear MIP \eqref{eqn:edge_identifiability} with the objective changed to $\sum_{j:1\le j<i}g_{ji} + \sum_{j:i<j\leq p}g_{ij}$. As before, the maximum degree of a node swaps minimization with maximization.

\section{Coordinate descent solvers for increased scalability} \label{sec:coordinate_descent}

The mixed-integer formulations above provide exact approaches to estimator (4), but may become computationally demanding as the number
of nodes increases. We therefore also develop a coordinate-descent heuristic. In our experiments, its solution is supplied as a warm start to the mixed-integer solver, which is then run with a fixed time budget to improve the solution or certify its optimality.

For a graph $\G$ on $V$, let $F(\G)$ denote its objective value in~\eqref{eqn:main}, and let $\G^{(i,j)}$ denote the graph obtained by flipping the adjacency of nodes $i$ and $j$, namely, replacing $g_{ij}$ with $1-g_{ij}$. Algorithm~\ref{alg:coordinate_descent} repeatedly scans the node pairs and accepts objective-improving flips. Since each accepted update strictly decreases the objective and there are finitely many graphs, the procedure terminates at a graph admitting no improving
single-edge flip.

\begin{algorithm}[H]
\caption{Coordinate descent for global graph estimation}
\label{alg:coordinate_descent}
\begin{algorithmic}[1]
\Require Objective $F$, initial graph $\widehat{\G}$ with $F(\widehat{\G})<\infty$, ordering $\mathcal{O}$ of all node pairs.
\Repeat
    \State $\mathrm{updated}\gets\mathrm{false}$
    \For{$(i,j)\in\mathcal{O}$}
        \If{$F(\widehat{\G}^{(i,j)})<F(\widehat{\G})$}
            \State $\widehat{\G}\gets\widehat{\G}^{(i,j)}$
            \State $\mathrm{updated}\gets\mathrm{true}$
        \EndIf
    \EndFor
\Until{$\mathrm{updated}=\mathrm{false}$}
\State \Return $\widehat{\G}$
\end{algorithmic}
\end{algorithm}

Evaluating $F(\G)$ first requires computing the latent projections, which can be obtained efficiently using BFS (see Remark~\ref{remark:latent_graphs}). For Examples~\ref{example:Gaussian}--\ref{example:extremal}, the continuous parameters can then be optimized by solving convex programs. When the loss decouples across nodes, as in~\eqref{eqn:convex_quad}, only the local scores of nodes whose projected neighborhoods change need to be recomputed.

For Example~\ref{example:discrete} with loss~\eqref{eqn:discrete_loss_1}, we restrict the global graph to be chordal by setting $F(\G)=+\infty$ otherwise. Chordality of a graph $\G$ can be checked in $O(p+|E(\G)|)$ time \citep{tarjan1984simple}; nonchordal proposals are rejected before evaluating the remaining objective. Latent projections of chordal graphs remain chordal. Thus, for each projected graph $\widetilde{\G}$, the minimized loss is
$$
\min_{\theta^{(k)}\in\Theta_k(\widetilde{\G})} \ell_k(\theta^{(k)};D_k) = \sum_{C\in\mathcal{C}_{\max}(\widetilde{\G})} \widehat{H}_k(X_C) - \sum_{M\in\mathcal{M}(\widetilde{\G})} \widehat{H}_k(X_M),
$$
where $\mathcal{C}_{\max}(\widetilde{\G})$ contains the maximal cliques of $\widetilde{\G}$, and $\mathcal{M}(\widetilde{\G})$ is the multiset of their intersections along the edges of a junction tree. A junction tree connects the maximal cliques so that, for every
variable, the cliques containing that variable form a connected subtree. Here, $\widehat H(X_A) = -\sum_{x_A}\widehat{P}^{(k)}(x_A)\log(\widehat{P}^{(k)}(x_A))$ denotes the empirical entropy of $X_A$. Both the maximal cliques and a junction tree can be computed in polynomial time for chordal graphs \citep{tarjan1984simple}, allowing efficient evaluation of the loss when the maximal cliques are not too large. Although one global edge flip may change multiple projected edges, entropy terms for unchanged variable subsets can be reused. Evaluation is efficient when the resulting cliques are small. The space of chordal graphs on $V$ is connected under chordality-preserving single-edge flips \citep{giudici1999decomposable}.

For loss \eqref{eqn:discrete_loss_2}, the minimized loss is 
$$
\min_{\theta^{(k)}\in\Theta_k({\widetilde{\G}})}\ell(\theta^{(k)};\mathcal{D}_k)= \sum_{i \in S_k} \widehat{H}(X_i|X_{\operatorname{N}(i,{\widetilde{\G}})})
$$ 
where $\widehat{H}(X_i|X_A) =\widehat{H}(X_{\{i\}\cup A})-\widehat{H}(X_A)$ is the empirical conditional entropy. Both this minimized loss and its dimension penalty decouple across nodes. Consequently, as with loss~\eqref{eqn:convex_quad}, only contributions from nodes whose projected neighborhoods change need to be recomputed. Computation is efficient when these neighborhoods are small.

\begin{remark} The initial graph and the ordering of pairs in $\mathcal{O}$ determine the initialization of the algorithm. As each run of coordinate descent is relatively fast, we recommend running the algorithm with several different initializations and selecting the solution with the lowest objective value. In our simulations and real-data experiments, we consider two initial graphs: the empty graph and coordinate descent solution of \eqref{eqn:main} with $\mathcal{W}_2=[K]$, obtained by first estimating a local graph separately for each source. For this preliminary  coordinate descent procedure, we initialize at the empty graph and use the canonical ordering $\mathcal{O}$, in which all pairs involving node 1 are updated first, followed by those involving node 2, and so on. For the main procedure, we consider both the canonical ordering and a random ordering of the pairs in $\mathcal{O}$.
\end{remark}

\section{Theoretical Guarantees}
\label{sec:theory}
\subsection{Setup and definitions}We suppose all sources provide data, i.e. $|\mathcal{W}_1|=K$ and $\mathcal{W}_2=\emptyset$. The data $\mathcal{D}_k$ in each source $k$ are assumed to be independently sampled from a zero-mean multivariate Gaussian distribution over the observed variable subset $S_k$. We denote the covariance matrix of this marginal distribution by $\Sigma_k^\star$. Motivated by both computational and statistical considerations (see Sections \ref{sec:disccusion_consistent} and \ref{sec:MIP_reformulation}), we use the loss in \eqref{eqn:convex_quad} for learning a global graph from this data. 

We restrict the class of underlying graphs by controlling their maximum degree and the complexity of their latent projections. Specifically, we assume that the maximum degree is at most $r_{\max,n}$ and that, for every source $k$, each connected component of $\G[V\setminus S_k]$ has at most $b_n$ neighbors in $S_k$. We denote by $\mathfrak{G}$ the resulting class of global graphs and restrict the search in estimator \eqref{eqn:main} to $\mathfrak{G}$. These conditions imply that, for every $\G\in\mathfrak{G}$, the maximum degree of each source-specific latent projection $\pi_{S_k}(\G)$ is at most $d_n:=r_{\max,n}(b_n-1)$. We allow both $r_{\max,n}$ and $b_n$ to grow with the sample size and, for concreteness, take $r_{\max,n}$ and $b_n$ so that $d_n \leq \min\{\tilde{C}(\min_kn_k/\log(p))^{1/4},p/2\}$. We assume that $\G^\star \in \mathfrak{G}$.

For every $\G \in \mathfrak{G}$, $k\in[K]$, we define the population loss
\begin{equation*}
\mathcal{L}(\G)
:= \sum_{k\in[K]}
\sum_{j\in S_k}\omega_k\mathcal{L}_{kj}(\G)\quad \text{ with }\quad \mathcal{L}_{kj}(\G):=\min_{\beta, \text{support}(\beta)\subseteq \{j\}\cup N(j,\pi_{S_k}(\G)), \beta_j=1}
\beta^\top\Sigma_k^\star\beta.
\end{equation*}
Here, $N(j,\G)$ denotes the nodes adjacent to node $j$ in the graph $\G$. Minimizing $\mathcal{L}(\G)$ over $\G\in\mathfrak{G}$
is the population analogue of the estimator \eqref{eqn:main}. By definition, $\mathcal{L}_{kj}(\G) \geq \mathcal{L}_{kj}(\G^\star)$ for every $\G \in \mathfrak{G}$ and $k \in [K], j\in S_k$. Thus, $\mathcal{L}(\G) \geq \mathcal{L}(\G^\star)$. Define:
$$\mathcal{A}^-(\G,\G^\star)
:=
\left\{
(k,j):
N(j,\pi_{S_k}(\G^\star))
\not\subseteq
N(j,\pi_{S_k}(\G))
\right\}.$$
Thus, $\mathcal{A}^-(\G,\G^\star)$ consists of source–node pairs for which the projected graph of $\G$ omits at least one neighbor of $j$ present in the projected graph of $\G^\star$.  

We next define two population combinatorial quantities that characterize the difficulty of distinguishing the true global graph from competing graphs. The first measures the smallest average population loss incurred by a graph that omits true neighbors in more than $r$ source/node pairs:
\begin{equation}
\Gamma^-(r):=
\inf_{\G\in\mathfrak{G}:\,|\mathcal{A}^-(\G,\G^\star)|\geq r}
\frac{
\mathcal{L}(\G)-\mathcal{L}(\G^\star)
}{
|\mathcal{A}^-(\G,\G^\star)|
}.
\end{equation}
We use the convention $\inf\varnothing=+\infty$. The second quantity characterizes the difficulty of recovering a particular source/node neighborhood after restricting attention to graphs that are sufficiently close to $\G^\star$. For a localization radius $q\geq 0$, define
\begin{equation}
\delta_{kj}(q):= \inf_{\substack{\G\in\mathfrak{G}, |\mathcal{A}^-(\G,\G^\star)|\leq q\\N(j,\pi_{S_k}(\G^\star))\not\subseteq N(j,\pi_{S_k}(\G))}}\mathcal{L}(\G)-\mathcal{L}(\G^\star).
\end{equation}
The quantity $\delta_{kj}(q)$ is the smallest population loss incurred by any graph within this localized class that omits at least one true neighbor of node $j$ in source $k$. Larger values of $\delta_{kj}(q)$ indicate a greater separation in population loss between $\G^\star$ and competing graphs that fail to recover all true neighbors of node $j$ in source $k$. Notice that $\delta_{kj}(q)$ is a decreasing function of $q$. 

\begin{remark}We emphasize the global nature of the quantity $\delta_{kj}(q)$, even though it is defined for a particular source and node. In particular, if $N(j,\pi_{S_k}(\G^\star))\not\subseteq N(j,\pi_{S_k}(\G))$, then $\mathcal{A}^-(\G,\G^\star)$ necessarily contains the pair $(k,j)$. However, it may also contain other pairs because, due to the global dependence structure induced by latent projections, omitting a true neighbor of node $j$ in one source can trigger the omission of true neighbors in other source–node pairs. 

The following stylized example illustrates the phenomena described above. Suppose $\G^\star$ contains $m$ internally vertex-disjoint paths $i-h_\ell-j$, $\ell=1,\dots,m$, together with an isolated node $z$. Let source 1 observe only $\{i,j\}$ and source 2 observe all variables except $z$. Then $(i,j)$ is present in the latent projection for source 1. If a competing graph $\G$ omits this projected edge, then for every $\ell$, at least one of the true source-2 edges $(i,h_\ell)$ or $(j,h_\ell)$ must also be omitted; otherwise, $\G$ has a path from $i$ to $j$. Hence $(2,h_\ell)\in\mathcal A^{-}(\G,\G^\star)$ for all $\ell=1,2,\dots,m$, and together with the omission at $(1,i)$ and $(1,j)$, this yields $|\mathcal A^-(\G,\G^\star)|\ge m+2$.
\end{remark}

\begin{remark}Note that:
\begin{equation*}
\delta_{kj}(q)= \inf_{\substack{\G\in\mathfrak{G}, |\mathcal{A}^-(\G,\G^\star)|\leq q\\N(j,\pi_{S_k}(\G^\star))\not\subseteq N(j,\pi_{S_k}(\G))}}\sum_{(l,m)\in \mathcal{A}^-(\G,\G^\star)}\omega_l\left\{\mathcal{L}_{lm}(\G)-\mathcal{L}_{lm}(\G^\star)\right\},
\end{equation*}
where by definition, $\mathcal{L}_{lm}(\G)-\mathcal{L}_{lm}(\G^\star)>0$ for every $(l,m)\in \mathcal{A}^-(\G,\G^\star)$. To better understand how small the gap $\mathcal{L}_{lm}(\G)-\mathcal{L}_{lm}(\G^\star)$ can be, let $T^\star =N(m,\pi_{S_l}(\G^\star))$ and $T:= N(m,\pi_{S_l}(\G))$. Define $\beta^\star:= \left([\Sigma^\star_l]_{T^\star,T^\star}\right)^{-1}[\Sigma^\star_l]_{T^\star,m}$ which encode the strength of true edges adjacent to node $m$ in source $l$. Then, a straightforward calculation shows $$\mathcal{L}_{lm}(\G)-\mathcal{L}_{lm}(\G^\star) = [\beta^\star]_{M}^\top [\Sigma_l^\star]_{M|T}[\beta^\star]_M,$$ where $M= T^\star\setminus T$, and $[\Sigma_k^\star]_{M|T}$ is the conditional covariance of the variables in $M$ given those in $T$ in source $l$. Thus, $\delta_{kj}(q)$ aggregates the signal strengths of all omissions that must accompany an error at node $j$ in source $k$, highlighting the benefit of integrating information across sources.
\label{remark:diff}
\end{remark}
\begin{remark}Consider a source-specific neighborhood selection estimator for recovering the neighborhood of node $j$ in source $k$. Let $\ell_{kj}(T)$ denote the population loss obtained by regressing node $j$ on a candidate neighborhood $T\subseteq S_k\setminus\{j\}$. The corresponding population separation is
\[
\delta_{kj}^{\mathrm{local}}
:=
\inf_{\substack{
T\subseteq S_k\setminus\{j\}:\\
N(j,\pi_{S_k}(\G^\star))\not\subseteq T
}}
\ell_{kj}(T)
-
\ell_{kj}\bigl(N(j,\pi_{S_k}(\G^\star))\bigr).
\]
Unlike $\delta_{kj}(q)$, this quantity depends only on the population neighborhood-selection problem for node $j$ in source $k$: it does not involve the neighborhoods of any other nodes or the distributions in any other sources. In contrast, $\delta_{kj}(q)$ is defined through a common global graph and therefore incorporates the cross-source constraints induced by its latent projections.

\end{remark}

\subsection{Theoretical results}
Our results rely on a few additional notations. For any disjoint sets $A,B \subseteq S_k$, let $[\Sigma_k^\star]_{A|B} := [\Sigma_k^\star]_{A,A}-[\Sigma_k^\star]_{A,B}([\Sigma_{k}^\star]_{B,B})^{-1}[\Sigma_{k}^\star]_{B,A}$. Define:
\begin{eqnarray*}
\begin{aligned}
\kappa_{\text{min}}:= \min_{k\in[K]}\min_{\substack{A\subseteq S_k, |A| \leq 2d_n\\B\subseteq S_k, |B| \leq 2d_n, A\setminus B \neq \emptyset}}\sigma_\text{min}([{\Sigma}_k^\star]_{A|B}), \quad
\kappa_{\text{max}}:= \max_{k\in[K]}\max_{A\subseteq S_k, |A| \leq 2d_n}\sigma_\text{max}([{\Sigma}_{k}^\star]_{A,A}).
\end{aligned}
\end{eqnarray*}
Here, for a positive definite matrix $M$, $\sigma_\text{min}(M)$ and $\sigma_\text{max}(M)$ represent its smallest and largest singular values, respectively. Let $\bar{C}:= 4+\frac{8\kappa_{\text{max}}}{\kappa_{\text{min}}}+\frac{4\kappa_{\text{max}}^2}{\kappa_{\text{min}}^2}$. Define
$$t_k:= {\kappa_\text{max}}
\sqrt{\frac{256(d_n+1)\log(2Kp)}{{n_k}}}.$$
Let $C \geq 1$ be any constant satisfying:
$$\frac{\max_{k\in[K]} \omega_kt_k}{\min_{k\in[K]} \omega_kt_k}\leq C.$$
Define 
$$
q_\text{min}:=
\min\left\{
r\in\mathbb{N}:
\Gamma^-(r)>
(d_n+1/2)\max_{k\in[K]} \omega_k\lambda_k
\right\}
(1+C(2d_n+1)).
$$

We then have the following result, with proof in Appendix~\ref{proof:main}. 
\begin{theorem}For any $k\in[K]$, define $\delta_{k,\text{min}}:= \min_{i\in S_k}\delta_{ki}(q_{\min})$. Suppose for every $k\in[K]$, 
\begin{equation*}
\begin{aligned}
n_k &\geq \frac{1024(d_n+1)\kappa_{\text{max}}^2}{\kappa_{\text{min}}^2}\log(2Kp),\\
n_k &\geq \frac{4096C^2q_\text{min}^2\bar{C}^2\omega_k^2\kappa_{\text{max}}^2(d_n+1)^3}{\delta_{k,\text{min}}^2} \log(2Kp).
\end{aligned}
\end{equation*}
Suppose for every $k\in[K]$, $\lambda_k$ is chosen as to satisfy:
\begin{equation*}
2\bar{C}t_k\leq \lambda_k
\leq
\frac{\delta_{k,\text{min}}}{2Cq_\text{min}(d_n+1/2)\omega_k},
\end{equation*}
while satisfying $\max_{k\in[K]} \omega_k\lambda_k/\min_{k\in[K]}\omega_k\lambda_k \leq C$.
Then, with probability greater than $1-1/p$, $\widehat{\G}\in \mathrm{EC}(\G^\star, \{S_k\}_{k=1}^K)$. 
\label{thm:main}
\end{theorem}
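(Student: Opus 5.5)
Since $\widehat{\G}$ and $\G^\star$ enter the objective only through their latent projections, the aim is to show that on a high-probability event $\pi_{S_k}(\widehat{\G})=\pi_{S_k}(\G^\star)$ for all $k$. Equivalently, every source/node pair $(k,j)$ must satisfy $N(j,\pi_{S_k}(\widehat{\G}))=N(j,\pi_{S_k}(\G^\star))$. I will control omissions (the set $\mathcal{A}^-$) and spurious inclusions separately, comparing the sample objective $F$ of \eqref{eqn:main} with the population loss $\mathcal{L}$ plus penalties.

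\emph{Step 1 (concentration).} The quadratic loss \eqref{eqn:convex_quad} decouples over nodes into least-squares regressions of $X_j$ on a candidate neighborhood of size at most $d_n$. Write $\widehat{\mathcal{L}}_{kj}(T)$ for the sample residual variance and $\ell_{kj}(T)$ for its population analogue. Using Gaussian sub-exponential concentration for the entries of $[\widehat\Sigma_k]_{A,A}$ with $|A|\le 2d_n+1$, a union bound over $j$ and $k$, and the first sample-size condition (which keeps sample conditional covariances close to population ones, with $\kappa_{\min}$ bounded below), I will show the following. With probability $\ge 1-1/p$, for all $k$, $j$ and pairs of neighborhoods $T,T'$ of size $\le d_n$,
\[
\bigl|[\widehat{\mathcal{L}}_{kj}(T)-\widehat{\mathcal{L}}_{kj}(T')]-[\ell_{kj}(T)-\ell_{kj}(T')]\bigr|\le \bar C t_k\,(1+\text{terms in } \ell_{kj}(T)-\ell_{kj}(T')).
\]
More precisely, I will aim for an additive error of order $\bar C t_k$ together with a multiplicative perturbation of the population gap. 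The constant $\bar C$ arises from perturbing $\beta^\star$ and the Schur complements in Remark~\ref{remark:diff}, which are controlled by $\kappa_{\max}/\kappa_{\min}$. In particular, if $T\supseteq N^\star$ the sample gain from extra regressors is at most $\bar C t_k$, while $\lambda_k\ge 2\bar C t_k$ makes each superfluous edge strictly costly.

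\emph{Step 2 (localization).} From $F(\widehat{\G})\le F(\G^\star)$, sum the node-wise comparisons. Each pair in $\mathcal{A}^-(\widehat{\G},\G^\star)$ contributes its population gap minus an error of at most $\bar C\omega_k t_k$. Each projected neighborhood changes the penalty by at most $d_n+1/2$ edges-equivalent, and pairs not in $\mathcal{A}^-$ can gain at most the fluctuation, which the penalty dominates. This gives
\[
\mathcal{L}(\widehat{\G})-\mathcal{L}(\G^\star)\le |\mathcal{A}^-|\bigl[(d_n+\tfrac12)\max_k\omega_k\lambda_k\bigr]+\text{(error)}\cdot(2d_n+1)C|\mathcal{A}^-|.
\]
Here the error terms are compared across sources using the ratio bound $C$ on $\omega_kt_k$, and the factor $2d_n+1$ counts the neighbors $\ell$ of omitted nodes whose neighborhoods can also change. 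By the definition of $\Gamma^-$, if $|\mathcal{A}^-|\ge r^\star$ (the minimal $r$ in $q_{\min}$), the average gap exceeds $(d_n+1/2)\max\omega_k\lambda_k$, which is a contradiction once the bookkeeping is done. That bookkeeping yields $|\mathcal{A}^-(\widehat{\G},\G^\star)|\le q_{\min}$; the factor $(1+C(2d_n+1))$ in $q_{\min}$ absorbs the pairs affected indirectly.

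\emph{Step 3 (no omissions, then no inclusions).} Suppose some $(k,j)\in\mathcal{A}^-$. Since $|\mathcal{A}^-|\le q_{\min}$, the definition of $\delta_{kj}(q_{\min})$ gives $\mathcal{L}(\widehat{\G})-\mathcal{L}(\G^\star)\ge\delta_{k,\min}$. On the other hand, Step 2 bounds this gap by $q_{\min}C(d_n+1/2)\omega_k\lambda_k$ plus a sampling error of order $q_{\min}C\bar C\omega_k\kappa_{\max}(d_n+1)^{3/2}\sqrt{\log(2Kp)/n_k}$. The upper bound on $\lambda_k$ makes the first term at most $\delta_{k,\min}/2$, and the second sample-size condition makes the second term strictly less than $\delta_{k,\min}/2$. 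This is a contradiction, so $\mathcal{A}^-=\emptyset$. With no omissions, every projected neighborhood of $\widehat{\G}$ contains the true one. Then each extra projected edge lowers the sample loss by at most $\bar C t_k<\lambda_k$ but costs $\lambda_k$, so optimality forces equality of all projections. Hence $\widehat{\G}\in\mathrm{EC}(\G^\star,\{S_k\})$.

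The main obstacle is Step 2. The sample-versus-population errors must be charged only to the pairs in $\mathcal{A}^-$, and the cross-source heterogeneity must be handled through $C$. Getting this right is exactly what produces the constant $(1+C(2d_n+1))$ in $q_{\min}$. I would first try a peeling argument: apply the inequality $F(\widehat{\G})\le F(\G^\star)$ with the added-edge penalties discarded, which is valid because $\lambda_k\ge 2\bar C t_k$.
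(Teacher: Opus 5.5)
Your route is the paper's: concentration of small covariance submatrices, a uniform bound on per-node loss fluctuations, localization through $\Gamma^-$, a contradiction through $\delta_{kj}(q_{\min})$, and removal of spurious projected edges. The step that does not close as written is Step 2.

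Your displayed inequality carries the extra term $(\text{error})\cdot(2d_n+1)C|\mathcal{A}^-|$ on top of $(d_n+\tfrac12)\max_k\omega_k\lambda_k|\mathcal{A}^-|$. With that term present, the threshold $\Gamma^-(r^\star)>(d_n+\tfrac12)\max_k\omega_k\lambda_k$ that defines $q_{\min}$ produces no contradiction. The term is double counting:
\begin{itemize}
\item At a pair in $\mathcal{A}^-$, the penalty can drop by at most $d_n\omega_k\lambda_k$. The $\tfrac12$ in $d_n+\tfrac12$ is that pair's own sampling fluctuation, which is at most $\bar C\omega_k\varepsilon_{n,k}\le\tfrac12\omega_k\lambda_k$ because $\lambda_k\ge 2\bar C t_k\ge 2\bar C\varepsilon_{n,k}$.
\item At a pair in $\mathcal{A}^+$, the penalty rises by at least $\omega_k\lambda_k$, which beats its fluctuation by at least $\tfrac12\omega_k\lambda_k$.
\end{itemize}
So $F(\widehat{\G})\le F(\G^\star)$ gives directly
\[
\mathcal{L}(\widehat{\G})-\mathcal{L}(\G^\star)\le \bigl(d_n+\tfrac12\bigr)\max_k\omega_k\lambda_k\,|\mathcal{A}^-(\widehat{\G},\G^\star)|-\tfrac12\min_k\omega_k\lambda_k\,|\mathcal{A}^+(\widehat{\G},\G^\star)|,
\]
with no leftover error. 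The $\Gamma^-$ argument then gives $|\mathcal{A}^-(\widehat{\G},\G^\star)|<r^\star\le q_{\min}$. That is all the theorem needs, since $\delta_{kj}$ localizes on $|\mathcal{A}^-|$.

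The factor $(1+C(2d_n+1))$ is also not a count of neighbours disturbed by omissions. No such combinatorial bound exists, because a graph in $\mathfrak{G}$ can have $\mathcal{A}^-=\emptyset$ and many pairs in $\mathcal{A}^+$. The factor comes from the same inequality: its left side is nonnegative, which forces $|\mathcal{A}^+|\le C(2d_n+1)|\mathcal{A}^-|$.

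This inequality also finishes Step 3 without your residual sampling term. If $(k,j)\in\mathcal{A}^-$, then
\[
\delta_{k,\min}\le\mathcal{L}(\widehat{\G})-\mathcal{L}(\G^\star)\le(d_n+\tfrac12)q_{\min}C\omega_k\lambda_k\le\delta_{k,\min}/2
\]
by the upper bound on $\lambda_k$ alone. Here $\delta_{k,\min}>0$ because $0<2\bar C t_k\le\lambda_k$. With $\mathcal{A}^-=\emptyset$, the same inequality gives $0\ge\tfrac12\min_k\omega_k\lambda_k|\mathcal{A}^+|$, which is your final step.

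The two sample-size conditions play narrower roles than you assign them:
\begin{itemize}
\item The first is exactly $t_k\le\kappa_{\min}/2$.
\item The second is needed only to make the admissible interval for $\lambda_k$ nonempty; it is equivalent to $\delta_{k,\min}\ge 4C\bar C q_{\min}(d_n+1)\omega_k t_k$.
\end{itemize}

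In Step 1 you need no multiplicative perturbation of the gap. An additive bound $\bar C\omega_k\varepsilon_{n,k}$ per pair in $\mathcal{A}$ suffices once $\varepsilon_{n,k}\le\kappa_{\min}/2$. To get $\varepsilon_{n,k}\le t_k$ with the stated $t_k$, however, you need operator-norm concentration for each index set $U\subseteq S_k$ with $|U|\le 2d_n$. You also need a union bound over the at most $p^{2d_n+1}$ such sets; this is where the factor $d_n+1$ inside $t_k$ comes from. Entrywise concentration with a union bound over $j$ and $k$ controls these operator norms only up to an extra factor of order $\sqrt{d_n}$. In that case $\lambda_k\ge 2\bar C t_k$ would no longer dominate the fluctuations.
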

Theorem~\ref{thm:main} highlights four main drivers of the sample size requirement. First, the logarithmic factor $\log(Kp)$ is the expected high-dimensional dependence arising from uniform control over all source-node pairs. Second, $d_n$ controls the local complexity of the projected graphs. Larger projected neighborhoods make estimation harder and thus require a larger sample size. Third, the conditional number of the covariance matrices matters through $\kappa_\text{max}/\kappa_\text{min}$. Thus, poorly conditioned covariance matrices require more samples. Finally, and most importantly, the recovery difficulty is governed by the ratio $q_\text{min}/\delta_{k,\text{min}}$. Here, $q_\text{min}$ measures how large a neighborhood of competing graphs must be considered after localization, while $\delta_{k,\text{min}}$ measures the smallest population separation from a graph that makes an error in some source. Larger $\delta_{k,\text{min}}$ reduces the required sample size, while larger $q_\text{min}$ increases it. 

The recovery guarantee in Theorem~\ref{thm:main} is expressed in terms of the population separation quantity $\delta_{k,\text{min}}$. We next provide sufficient conditions that relate this abstract quantity to the strengths of the true local regression coefficients. Let $T^\star_{kj}:= N(j,\pi_{S_k}(\G^\star))$. Define $\beta^\star_{kj}:= \left([\Sigma_{k}^\star]_{T^\star_{kj},T^\star_{kj}}\right)^{-1}[\Sigma_{k}^\star]_{T^\star_{kj},j}$ and define $\beta^\star_{kji}:= e_i^\top \beta^\star_{kj}$ with $e_i$ denoting a standard basis element. Suppose $\mathcal{W}\subseteq \{(k,j): k\in[K], j\in S_k\}$ is the set of weak source-node pairs and let 
$$\beta_{k,\text{min}} := \min_{(i,j) \in E(\pi_{S_k}(\G^\star)), (k,j)\not\in \mathcal{W}}|\beta^\star_{kji}|.$$ 
Thus, outside $\mathcal{W}$, every true edge in source $k$ has strength at least $\beta_{k,\text{min}}$. The next proposition shows that if the strong signals dominate the stochastic error level, then $q_\text{min}$ can be controlled explicitly, and $\delta_{k,\text{min}}$ is lower bounded by the cumulative contribution of the strong omissions that necessarily accompany an error in source $k$. The proof is in Appendix~\ref{proof:prop_delta}. 
\begin{proposition}\label{prop_delta}
If 
$$\kappa_\text{min}\min_{k\in[K]}\{\omega_k\beta_{k,\text{min}}^2\} >2(d_n+1/2)\max_{k\in[K]} \omega_k\lambda_k,$$then:
$$\delta_{k,\text{min}} \geq \min_{j\in S_k} \inf_{\substack{\G\in\mathfrak{G}, |\mathcal{A}^-(\G,\G^\star)|\leq q_0\\N(j,\pi_{S_k}(\G^\star) \not\subseteq N(j,\pi_{S_k}(\G))}}\sum_{\substack{(l, m)\in\mathcal{A}^-(\G,\G^\star)\\(l,m)\not\in\mathcal{W}}}\kappa_\text{min}\omega_l\beta_{l,\text{min}}^2,$$
where $q_0:= \left(2|\mathcal{W}| +1\right)(1+C(2d_n+1))$.
\end{proposition}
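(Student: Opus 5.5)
The argument has two parts: a per-pair lower bound on the population loss gap for strong pairs, then a bound on $q_{\min}$ that lets us replace $q_{\min}$ by $q_0$ in the definition of $\delta_{k,\text{min}}$.

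\textbf{Step 1 (per-pair gap).} Fix $\G\in\mathfrak{G}$ and $(l,m)\in\mathcal{A}^-(\G,\G^\star)$ with $(l,m)\notin\mathcal{W}$. By Remark~\ref{remark:diff},
$$\mathcal{L}_{lm}(\G)-\mathcal{L}_{lm}(\G^\star)=[\beta^\star]_M^\top[\Sigma_l^\star]_{M|T}[\beta^\star]_M,$$
where $M=T^\star\setminus T\neq\emptyset$. Both $|M|\le d_n$ and $|T|\le d_n$ hold, because projected degrees in $\mathfrak{G}$ are at most $d_n$, and $M\setminus T=M\neq\emptyset$. The definition of $\kappa_{\min}$ therefore applies to $[\Sigma_l^\star]_{M|T}$. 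The gap is thus at least $\kappa_{\min}\|[\beta^\star]_M\|_2^2\ge\kappa_{\min}\beta_{l,\min}^2$, since every coordinate of $\beta^\star_{lm}$ on true edges has magnitude at least $\beta_{l,\min}$ when $(l,m)\notin\mathcal{W}$. For every pair, strong or weak, the gap is nonnegative (indeed positive).

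Summing with the weights $\omega_l$, the decomposition in Remark~\ref{remark:diff} gives
$$\mathcal{L}(\G)-\mathcal{L}(\G^\star)\ge\sum_{(l,m)\in\mathcal{A}^-\setminus\mathcal{W}}\kappa_{\min}\omega_l\beta_{l,\min}^2 .$$
Here one must check that terms outside $\mathcal{A}^-$ are nonnegative. This holds because $\mathcal{L}_{lm}(\G)\ge\mathcal{L}_{lm}(\G^\star)$ for all pairs, as stated in the setup.

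\textbf{Step 2 (bound on $q_{\min}$).} Set $r_0:=2|\mathcal{W}|+1$ and take any $\G$ with $|\mathcal{A}^-|=a\ge r_0$. At most $|\mathcal{W}|$ of the pairs in $\mathcal{A}^-$ are weak, so at least $a-|\mathcal{W}|>a/2$ are strong. By Step 1,
$$\frac{\mathcal{L}(\G)-\mathcal{L}(\G^\star)}{a}\ge\frac{a-|\mathcal{W}|}{a}\,\kappa_{\min}\min_l\omega_l\beta_{l,\min}^2>\tfrac12\kappa_{\min}\min_l\omega_l\beta_{l,\min}^2 .$$
By hypothesis the right-hand side exceeds $(d_n+1/2)\max_k\omega_k\lambda_k$.

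The strictness of the infimum in $\Gamma^-(r_0)$ needs care, since an infimum of strict inequalities need not be strict. I would handle this with finiteness: for fixed $p$, $\mathfrak{G}$ is a finite set, so the infimum is attained (or the set is empty, giving $+\infty$). Hence $\Gamma^-(r_0)>(d_n+1/2)\max_k\omega_k\lambda_k$, so the minimal $r$ in the definition satisfies $r\le r_0$. Multiplying by $(1+C(2d_n+1))$ yields $q_{\min}\le q_0$. This step is the only place the hypothesis is used, and the main subtlety is just this strictness/finiteness point.

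\textbf{Step 3 (conclusion).} Since $\delta_{kj}(q)$ is decreasing in $q$ and $q_{\min}\le q_0$, we have $\delta_{kj}(q_{\min})\ge\delta_{kj}(q_0)$. Lower-bounding the objective of $\delta_{kj}(q_0)$ by Step 1 over the same feasible set gives the stated infimum of strong-pair sums. Taking $\min_{j\in S_k}$ of both sides yields the bound on $\delta_{k,\text{min}}$ claimed in Proposition~\ref{prop_delta}.
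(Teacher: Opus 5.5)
Your proposal is correct and follows the paper's proof essentially step for step: the per-pair bound $\kappa_{\min}\omega_l\beta_{l,\min}^2$ via Remark~\ref{remark:diff}, the choice $r_0=2|\mathcal W|+1$ to obtain $q_{\min}\le q_0$, and monotonicity of $\delta_{kj}(\cdot)$ to pass to the strong-pair sum. The one difference is minor. To make the infimum in $\Gamma^-(r_0)$ strictly exceed the threshold, you invoke finiteness of $\mathfrak G$, whereas the paper uses the uniform bound $(a-|\mathcal W|)/a\ge 1-|\mathcal W|/r_0=(|\mathcal W|+1)/(2|\mathcal W|+1)>1/2$ for all $a\ge r_0$, which gives strictness without needing the infimum to be attained.
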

The bound makes explicit the benefit of integration: $\delta_{k,\text{min}}$ need not be determined solely by the weakest signal at the target source. If an error there forces additional omissions elsewhere, the corresponding strong signals also contribute to the separation. In particular, a locally weak source-node pair can still be recoverable when the shared global graph structures link that error to sufficiently strong signals in other sources.


\section{Synthetic Experiments} \label{sec:simulations}
\subsection{Data generating process}
We generate $\G^{\star}$ from an Erd\H{o}s--R\'{e}nyi model with $\mathbb{P}\{(i,j)\in E^{\star}\}=\frac{d}{p-1}$, 
where $d$ is the expected node degree. Conditional on $\G^{\star}$, each source $\ell$ has a Gaussian distribution with precision matrix
$\Omega^{(k)}$. The off-diagonal support of every $\Omega^{(k)}$ is $E^{\star}$. For each $(i,j)\in E^{\star}$, independently across edges and sources, we draw $
U^{(\ell)}_{ij}\sim\operatorname{Unif}(0.5,0.75)$ and
$\mathbb{P}(\xi^{(\ell)}_{ij}=1) =\mathbb{P}(\xi^{(\ell)}_{ij}=-1) =\frac{1}{2}$. 
We then set $
\Omega^{(k)}_{ij} =\Omega^{(k)}_{ji}
=\xi^{(k)}_{ij}U^{(k)}_{ij}$ 
for $(i,j)\in E^{\star}$ and set all other off-diagonal entries to zero. The diagonal entries are
$
\Omega^{(k)}_{ii}
=0.5+\sum_{j\neq i}|\Omega^{(k)}_{ij}|$. 
This makes $\Omega^{(k)}$ strictly diagonally dominant and positive definite. The sources share the graph $\G^{\star}$, while their edge weights and covariance matrices $\Sigma^{(k)}=(\Omega^{(k)})^{-1}$ may differ.

Source $k$ observes a subset $S_{k}\subseteq V$ containing $q=(1-m)p$ variables, where $m$ is the missingness fraction. For a set
$T\subseteq V$ denote $\operatorname{cc}(\G^{\star}[T])$ as the collection of connected components with nodes in $T$. For a candidate set
$S\subseteq V$ and a connected component
$C\in\operatorname{cc}(\G^{\star}[V\setminus S])$, define the subset $
\partial_{S}(C)
:=\{i\in S:(i,j)\in E^{\star}\text{ for some }j\in C\}$. This subset collects vertices in $S$ that are connected to some node in the connected component $C$. The collection of admissible observed sets is
$$
\mathcal{H}_{q}(\G^{\star},l)
:=\left\{S\subseteq V:|S|=q,
\max_{C\in\operatorname{cc}(\G^{\star}[V\setminus S])}
|\partial_{S}(C)|\leq l\right\}.
$$
This restriction limits the edges introduced by marginalization: each connected component of unobserved vertices makes its observed boundary a clique on at most $l$ vertices.

We generate each $S_{k}$ using a Markov chain on
$\mathcal{H}_{q}(G^{\star},l)$. A randomized search first provides a feasible initial state $S^{(0)}$. At iteration $t$, the current state is $S^{(t)}$. With probability $0.05$, the chain stays at $S^{(t)}$. Otherwise, it selects a radius $r\in\{1,\ldots,\min(q,p-q)\}$. Conditional on making a proposal, radius one has probability $0.80$, radius two has probability $0.15$, and total probability $0.05$ is distributed uniformly over the remaining radii.  Given $r$, the chain samples $r$ nodes uniformly from $S^{(t)}$ and $r$ nodes uniformly from $V\setminus S^{(t)}$. Denote these sets by $A$ and $B$. The
proposed state is
$
S^{\prime}=\left(S^{(t)}\setminus A\right)\cup B$. If $S^{\prime}\in\mathcal{H}_{q}(\G^{\star},l)$, the proposal is accepted and $S^{(t+1)}=S^{\prime}$. Otherwise, the proposal is rejected and $S^{(t+1)}=S^{(t)}$. It is straightforward to show that the chain is reversible, irreducible, and aperiodic, and thus the procedure samples uniformly from $\mathcal{H}_{q}(G^{\star},l)$.
We run one chain for each source for $10{,}000$ transitions and retain the
terminal state as $S_{k}$. Given $S_{k}$, source $k$ provides observations 
$X^{(k)}_{1},\ldots,X^{(k)}_{n}
\mathrel{\overset{\mathrm{iid}}{\sim}}
N_{q}(0,\Sigma^{(k)}_{S_{k},S_{k}})$. 

In the simulation settings we consider $p\in\{10,25,50\}$, $K\in\{5,10\}$, $n\in\{100,500,1000\}$, $m\in\{0.2,0.4\}$, and $(d,l)\in\{(2,2),(2,3),(3,3)\}$, where $d$ is the expected node degree and $l$ is the maximum allowed boundary size in $\mathcal{H}_{q}(\G^{\star},l)$. Each setting contains 50 replications. 
\subsection{Performance evaluation}

We evaluate the estimated global graph $\widehat{\G}$ using the projected-SHD: $$\tau(\widehat{\G})
:=\underbrace{\frac{1}{K}\sum_{k=1}^{K}
\operatorname{SHD}\left(\pi_{S_k}(\widehat{\G}),\pi_{S_k}(\G^\star)\right)}_{\text{projected-SHD}}.$$
This metric measures how accurately the joint procedure recovers the marginal graphs. Notice that $\widehat{\G}$ is in the population equivalence class, i.e., $\widehat{\G} \in \mathrm{EC}(\G^\star)$, if and only if $\tau(\widehat{\G}) = 0$. We say that the projection-equivalence class is recovered exactly when $\tau(\widehat{\G}) = 0$.

We compare four procedures: (i) \emph{data-integrated coordinate descent} where an approximate solution to \eqref{eqn:main} is obtained (see Section~\ref{sec:coordinate_descent}); (ii) \emph{data-integrated MIP} that solves \eqref{eqn:main} initialized with the coordinate descent solution and run until solution is found or the time limit of 300 seconds is reached; (iii) \emph{graph-integrated coordinate descent} that first estimates a local graph for each source and then combines these local graphs by approximately solving \eqref{eqn:main} via coordinate descent; and (iv) \emph{separate-source baseline} that does not combine the local graphs. For the separate-source baseline, projected-SHD is computed as $K^{-1}\sum_{k=1}^K \mathrm{SHD}(\widehat{\G}_{S_k}^{(k)},\pi_{S_k}(\G^\star))$, where $\widehat{\G}_{S_k}^{(k)}$ denotes the estimated local graph for source $k$. Thus, a lower projected-SHD for any of our integration approaches (i)–(iii) than for the separate-source baseline indicates a benefit from integration.

For the two data-integrated procedures, we use the quadratic loss in \eqref{eqn:convex_quad} from Example~\ref{example:Gaussian}, with the Bayesian Information Criterion (BIC)-based penalty $\lambda_1=\cdots=\lambda_K=\log(n)/n$.
For both graph integration and the separate-source baseline, source graphs are estimated by neighborhood selection \citep{Meinshausen2006HighdimensionalGA}. Specifically, in each source, every observed variable is regressed on all other observed variables using the lasso, with the regularization parameter selected by the BIC. An undirected edge is retained only when both corresponding directed regressions select it, corresponding to the AND rule. We also compared this estimator with neighborhood selection using the OR rule, with regularization parameters selected by cross-validation or stability-based tuning and either symmetrization rule, as well as graphical lasso with BIC or cross-validation.  Figures~\ref{fig:baseline-oracle-d2-b2}--\ref{fig:baseline-oracle-d3-b3} in Appendix~\ref{sec:additional_simulations} compare the chosen estimator with an oracle-best alternative selected separately in each setting from the other seven procedures. Despite this retrospective advantage, the oracle alternative is usually less accurate than the neighborhood selection with the AND rule.

The exact MIP-based procedures were given a time limit of 300 seconds, measured after initialization. The coordinate descent procedures used to construct warm starts were not subject to this time limit and ran until their convergence criterion was satisfied or a maximum of 40 iterations (in our simulations they always converged within 40 iterations).

\subsection{Results with data available from all sources}

Figures~\ref{fig:all-data-d2-b2} and \ref{fig:all-data-d3-b3} present the results for $(d,l)=(2,2)$ and $(3,3)$, respectively. Rows correspond to sample size, columns to the number of variables, and groups within each panel to combinations of source count and missingness. The boxplots summarize projected-SHD $\tau(\widehat{\G})$ across the 50 replications, with exact projection-equivalence-class recovery rates reported above them. Results for $(d,l)=(2,3)$ appear in Figure~\ref{fig:all-data-d2-b3} in Appendix~\ref{sec:additional_simulations}.

The figures show a consistent ordering: the data-integrated solutions generally have the lowest $\tau(\widehat{\G})$, followed by graph integration and then separate-source estimation. Among the two data-integrated approaches, the coordinate descent and MIP results are often nearly indistinguishable, in part because the coordinate descent solution initializes the MIP. Nevertheless, the MIP incumbent achieves lower projected-SHD in 55 of the 108 settings and ties coordinate descent in another 36; its exact recovery rate is higher in 37 settings and equal in 67. Both data-integrated solutions have lower projected-SHD than graph integration in all 108 settings and generally attain equal or higher exact recovery rates. The separate-source baseline also tends to exhibit greater variability, particularly for larger $p$, greater missingness, or greater structural complexity.

Increasing the sample size shifts the projected-SHD distributions toward zero in nearly every panel, with the largest gains generally occurring between $n=100$ and $n=500$. For example, in Figure~\ref{fig:all-data-d2-b2}, when $p=50$, $K=10$, and $m=0.2$, exact recovery for both data-integrated solutions increases from $4\%$ at $n=100$ to $100\%$ at $n=500$ and remains at $100\%$ at $n=1000$. Over the same range, graph integration improves from $0\%$ to $60\%$ and then to $90\%$, whereas separate-source estimation remains at $0\%$. The same pattern appears in Figure~\ref{fig:all-data-d3-b3}, although recovery is generally more difficult under $(d,l)=(3,3)$.

Within each panel, the effect of increasing the number of sources can be assessed by comparing the first and third groups at $20\%$ missingness and the second and fourth groups at $40\%$ missingness. Additional sources most consistently benefit the data-integrated solutions, with smaller gains for graph integration and, obviously, no benefit for separate-source estimation. For example, in Figure~\ref{fig:all-data-d2-b2}, when $n=1000$, $p=50$, and $m=0.4$, exact recovery increases from $16\%$ to $56\%$ for coordinate descent, from $24\%$ to $60\%$ for the MIP incumbent, and from $8\%$ to $18\%$ for graph integration, while remaining at $0\%$ for separate-source estimation. This improvement is notable because exact recovery becomes more stringent as the number of sources increases.

Increasing the missingness fraction from $20\%$ to $40\%$ substantially degrades recovery, as reflected by higher projected-SHD and lower exact recovery across panels. In Figure~\ref{fig:all-data-d3-b3}, for example, when $n=1000$, $p=25$, and $K=10$, exact recovery decreases from $98\%$ to $18\%$ for coordinate descent, from $98\%$ to $32\%$ for the MIP incumbent, and from $20\%$ to $0\%$ for graph integration. Greater missingness both reduces the information available within each source and decreases the amount of overlap across sources.

Recovery also becomes more difficult as $p$ increases, with projected-SHD shifting upward and becoming more variable across the columns of Figures~\ref{fig:all-data-d2-b2} and \ref{fig:all-data-d3-b3}. For example, in Figure~\ref{fig:all-data-d2-b2}, with $n=1000$, $K=5$, and $m=0.4$, exact recovery for coordinate descent decreases from $76\%$ at $p=10$ to $48\%$ at $p=25$ and $16\%$ at $p=50$; the corresponding rates are $76\%$, $56\%$, and $24\%$ for the MIP incumbent and $62\%$, $26\%$, and $8\%$ for graph integration.

Comparing Figures~\ref{fig:all-data-d2-b2} with \ref{fig:all-data-d2-b3} in Appendix~\ref{sec:additional_simulations} isolates the effect of increasing the maximum boundary size while holding the expected degree at $d=2$. The values of projected-SHD $\tau(\widehat{\G})$ are generally higher under $l=3$, and exact recovery is lower. A larger boundary allows more edges to be induced through unobserved variables. These dependences can be weaker than direct relationships, making recovery more difficult even when the total number of edges in the projected graphs changes little. For example, when $n=500$, $p=50$, $K=10$, and $m=0.4$, projected-SHD for coordinate descent increases from $0.50$ under $l=2$ to $1.98$ under $l=3$, while exact recovery decreases from $44\%$ to $12\%$. For the MIP incumbent, projected-SHD increases from $0.49$ to $1.96$, and exact recovery decreases from $46\%$ to $10\%$. Graph integration is affected more strongly, with projected-SHD increasing from $3.01$ to $9.61$ and exact recovery decreasing from $2\%$ to $0\%$. 

Comparing Appendix Figure~\ref{fig:all-data-d2-b3} with Figure~\ref{fig:all-data-d3-b3}, we observe that increasing the expected degree from $d=2$ to $d=3$ substantially worsens recovery across methods, although the relative ordering of the approaches generally remains unchanged. Higher degrees increase the number of relationships to estimate. Under our precision-matrix construction, they also produce larger diagonal entries, which can reduce absolute partial correlations even when the off-diagonal coefficients remain comparable. Recovery therefore becomes harder both because there are more edges to identify and because their conditional dependences can be weaker.

\begin{figure} 
    \centering
    \includegraphics[width=1.1\linewidth]{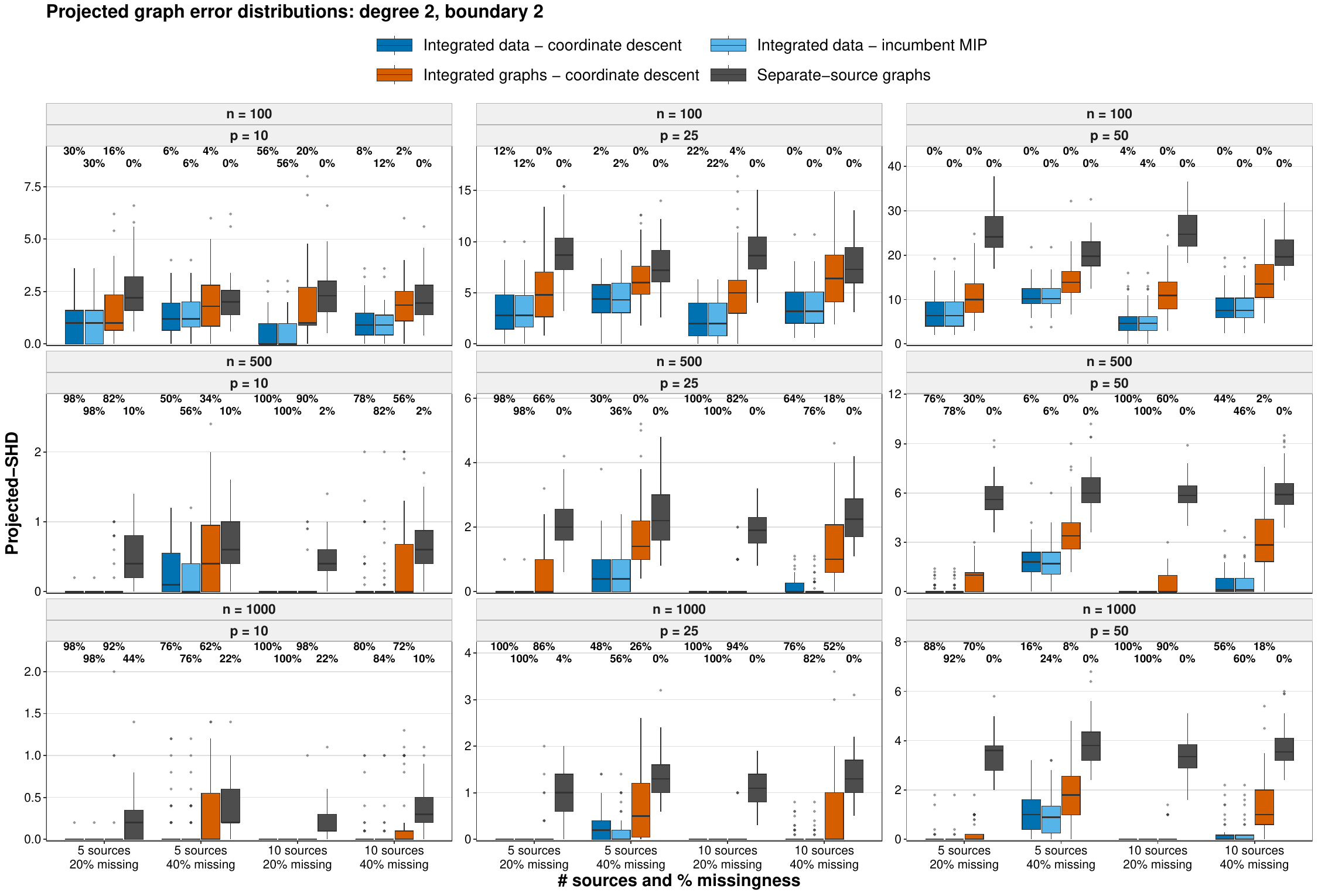}
     \caption{\small Projected-SHD for expected degree $d=2$ and maximum boundary size $l=2$. Boxes summarize the 50 replications. Percentages at the top of each panel report exact recovery of the projection-equivalence class.}
     \label{fig:all-data-d2-b2}
\end{figure}

\begin{figure}
    \centering
    \includegraphics[width=1.1\linewidth]{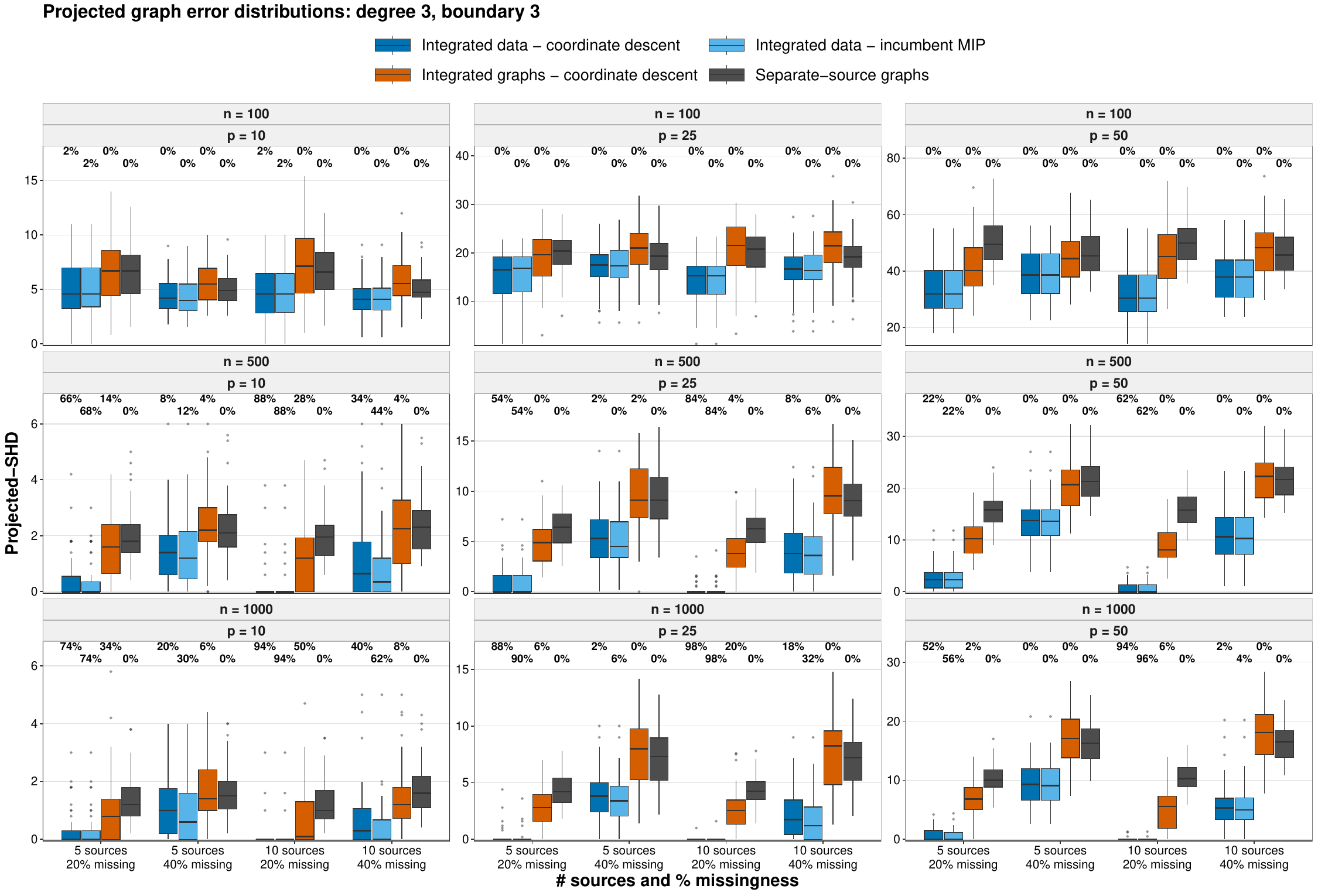}
    \caption{\small Projected-SHD for expected degree $d=3$ and maximum boundary size $l=3$. Boxes summarize the 50 replications. Percentages at the top of each panel report exact recovery of the projection-equivalence class.}
     \label{fig:all-data-d3-b3}
\end{figure}

\subsection{Results when some sources provide data and others just graphs}

We consider a hybrid setting with five sources with an expected degree $d=2$, in which three sources provide data and two provide estimated graphs. That is, $\mathcal{W}_1 = \{1,2,3\}$ and $\mathcal{W}_2 = \{4,5\}$. We apply the coordinate descent procedure and compare the prespecified graph-source weight $c=1$ with a value selected by cross-validation (see Remark \ref{remark:choosing_gamma}). Fixed $c=1$ generally provides better graph recovery. It attains lower projected-SHD in 32 of the 36 settings, while the cross-validated choice performs better in four. Its exact recovery rate is higher in 19 settings, equal in 13, and lower in four. Relative to the previously reported all-data and all-graph results, the hybrid procedure with $c=1$ lies between the two information endpoints. In particular, averaging projected-SHD over the 50 replications within each setting, we obtain
$\text{all data}
<
\text{three data and two graph sources}
<
\text{all graph sources}
$
in 35 of the 36 settings. Similarly, in 33 settings, the hybrid exact recovery rate is no higher than that of the all-data procedure and no lower than that of the all-graph procedure.

\subsection{Comparison with graph quilting procedures}

We compare the proposed method with the procedures in \cite{Vinci2019GraphQG,Chang2023NonparanormalGQ}. These procedures, called graph quilting (GQ), reconstruct a global graph assuming all the sources provide data, and the data are generated from the same distribution. As we show in the following experiment, their performances deteriorate under distribution shifts. 

 We generate Gaussian data with $p=10$ variables, $K=5$ data sources, $20\%$ missingness in each source, expected degree $d=2$, maximum boundary size $l=2$, and  sample sizes $n\in\{100,500,1000\}$. We draw a matrix $\Gamma \in \mathbb{R}^{p \times p}$ in a similar fashion as Section~\ref{sec:simulations}. For each source $k\in[K]$ and true edge $(i,j)$, independently draw $B_{ij}^{(k)}\sim\operatorname{Unif}(-\log 2,\log 2)$ and $U_{ij}^{(k)}\sim\operatorname{Unif}(0,1)$. Then, for $\alpha\in[0,1]$, we generate the precision matrix $\Omega^{(k)}(\alpha)$ for source $k$ as follows. For each true edge $(i,j)$, let
$$
\Omega_{ij}^{(k)}(\alpha)
=
\Gamma_{ij}\exp\!\left(\alpha B_{ij}^{(k)}\right)
(-1)^{\mathbb{I}\{U_{ij}^{(k)}\leq\alpha/2\}}.
$$
All other off-diagonal entries remain zero, and symmetry is enforced. The diagonal entries are set to $0.5+\sum_{j\neq i}|\Omega_{ij}^{(k)}(\alpha)|$, ensuring positive definiteness. We then rescale each source distribution so that every variable has variance one. Thus, when $\alpha=0$, the source distributions are identical. Increasing $\alpha$ introduces greater magnitude variation and more sign differences across sources, while preserving the common graph support.

The GQ methods in \cite{Vinci2019GraphQG,Chang2023NonparanormalGQ} involve three or more tuning parameters. We consider tuning these either via cross-validation or via an oracle that minimizes the projected-SHD. For pairs of nodes that are never jointly observed, GQ returns a candidate superset $\widehat U$ rather than a uniquely identified edge set. We therefore give GQ an additional oracle selection within this superset: for every $H\subseteq\widehat U$, we form the graph with edge set $\widehat E_{\mathrm{obs}}\cup H$, where $\widehat E_{\mathrm{obs}}$ is the edge set returned by GQ among co-observed nodes, and retain the choice that minimizes the projected-SHD. 

Figure~\ref{fig:gq-heterogeneity} shows that GQ performs well, particularly with oracle tuning, when the source distributions are nearly identical ($\alpha\approx 0$). Its error nevertheless increases rapidly as the source distributions diverge, even after oracle selection within its returned superset. In contrast, the proposed method remains stable because it assumes only a common graph and allows source-specific precision matrices.

\begin{figure}[t!]
    \centering
    \includegraphics[width=1.1\linewidth]{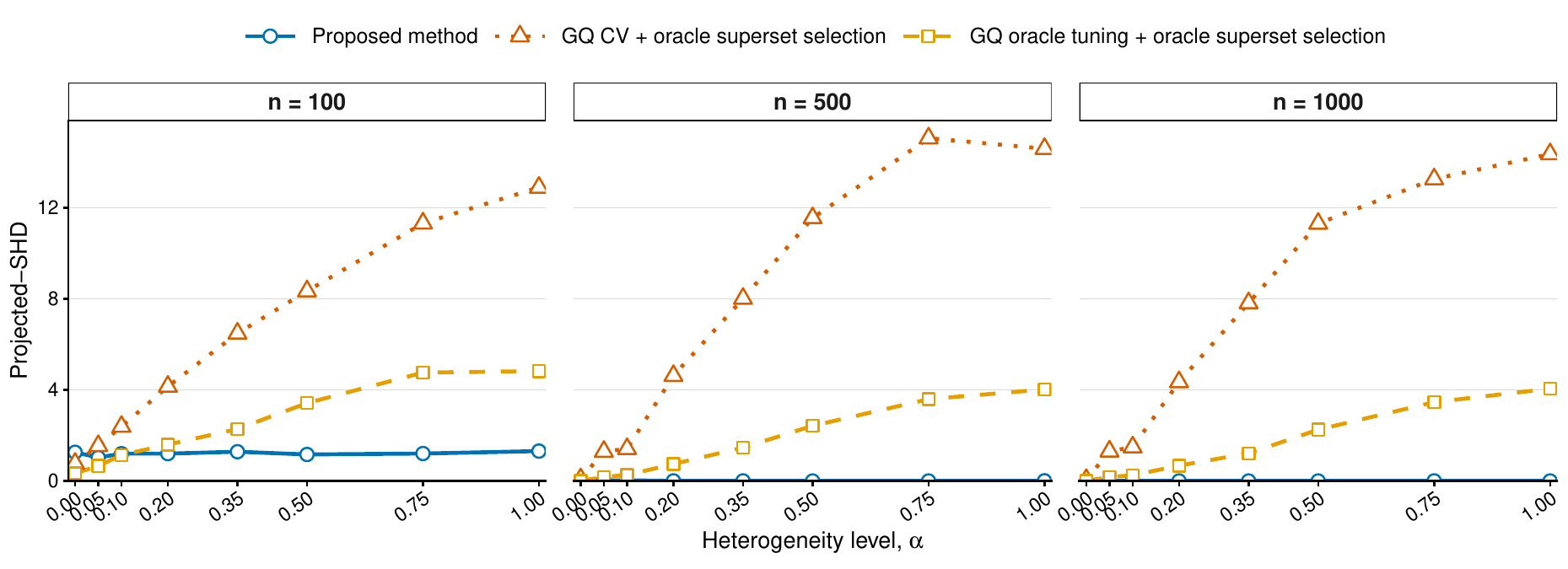}
    \caption{\small Projected-SHD across 50 replications for different heterogeneity levels of our approach compared with graph quilting (GQ). At $\alpha=0$, all sources share a common precision matrix; For $\alpha>0$, each source has a source-specific precision matrix, where the degree of heterogeneity increases with larger $\alpha$. The tuning parameters of GQ are selected either by cross-validation or via an oracle.}
    \label{fig:gq-heterogeneity}
\end{figure}

\section{Real data analysis} \label{sec:realdata}
NHANES \citep{NCHS2024NHANES} is a nationally representative study of the U.S. population that combines interviews, physical examinations, and laboratory measurements. We use NHANES fatty-acid data to study conditional relationships among circulating fatty acids, whose concentrations are strongly interdependent because they share dietary sources, transport mechanisms, and metabolic pathways \citep{Hodson2008,Calder2015}. A conditional dependence graph can help distinguish direct dependences from those explained by the broader fatty-acid profile.

Fatty-acid measurements were available from four NHANES cycles, with panel sizes $|S_1|=24, |S_2|=|S_3|=30$ and $|S_4| = 21$. The first three measured absolute fatty-acid concentrations in plasma or serum in 2003--2004, 2011--2012, and 2013--2014, whereas the fourth measured relative fatty-acid composition in red blood cells in 2021--2023. Together, the partially overlapping panels covered $p=31$ distinct fatty acids. The 2011–2012 and 2013–2014 serum panels were identical. Of the 31 distinct fatty acids, 21 were measured in all four cycles, two in the first three cycles only, seven in the two serum cycles only, and one in 2003–2004 cycle only. After restricting to nonpregnant participants aged 20 years or older with complete analysis variables, the respective sample sizes were $n_1=1{,}333$, $n_2=1{,}215$, $n_3=1{,}545$, and $n_4=5{,}115$. Although the measurement platforms differ, both circulating plasma/serum and red-cell fatty acids reflect systemic fatty-acid status, and controlled dietary studies have found broadly similar changes across these compartments \mbox{\citep{Skeaff2006,Harris2010,Hodson2014}}. We therefore use the four sources to learn a common conditional-independence graph.

Prior work has typically analyzed selected fatty acids, predefined biological groups, or low-dimensional summaries rather than estimating a joint conditional-dependence structure. Network-based metabolomics studies have recovered biologically plausible fatty-acid relationships \citep{Krumsiek2011GGM}, while NHANES studies have focused on selected fatty acids or principal-component summaries in relation to specific outcomes \citep{Yu2018DNL,Petersen2020Essential,Murphy2021Omega3,Chen2024Depression}. To our knowledge, no previous NHANES analysis has learned a fatty-acid conditional-independence graph by integrating its partially overlapping measurement panels.

NHANES provides survey weights tailored to each examination or laboratory subsample to account for unequal selection probabilities and survey nonresponse. For sources 1--3, we regressed each log-transformed fatty-acid concentration on age, modeled using a restricted cubic spline with knots at the weighted 5th, 35th, 65th, and 95th percentiles, sex, race and ethnicity, and fasting duration. We centered and standardized the residuals using the same weights and retained these weights in the graph-estimation loss.

Source 4 requires additional care because its fatty-acid measurements are compositional, reported as percentages constrained to sum to one. Since directly incorporating an appropriate compositional graph-learning procedure into our joint optimization is less straightforward, we first estimate the source-4 graph separately and then incorporate this estimated graph $\widehat{\G}_{S_4}^{(4)}$ into the integration procedure. Specifically, we use a survey-weighted adaptation of CARE \citep{Zhang2025CARE}, after adjusting for the same covariates in centered log-ratio space using the NHANES phlebotomy weights. Further details are provided in Appendix~\ref{sec:CARE_adaptation}.

Letting $\mathcal{W}_1 = \{1,2,3\}$ and $\mathcal{W}_2 = \{4\}$, we obtain estimator \eqref{eqn:main} using a survey-weighted version of the loss in \eqref{eqn:convex_quad}, replacing the usual sample covariance matrix with a weighted covariance matrix in which each observation is weighted by its NHANES survey weight. We set $\gamma = \operatorname{median}_{k\in\mathcal W_1}
\left\{2\log(n_k)/n_k\right\}
= 0.0107$.

The MIP was run for 10 hours and returned an 80-edge incumbent graph $\widehat{\G}$ with a relative optimality gap of 15.8\%. Conditional on its fitted source projections, the equivalence class $\mathrm{EC}(\widehat{\G})$ contains two graphs, which differ only in the inclusion of a single edge. Within its equivalence class all 80 selected edges were always present (AP), 384 unselected pairs were always absent (AA), and one unselected pair was ambiguous. To assess the sensitivity of these classifications to sampling variability, we generated 100 replicates by independently subsampling 80\% of the participants in each of sources 1--3 while holding the fourth-source graph fixed. For tractability, each replicate used multistart coordinate descent, including the full-data MIP graph as a warm start, without rerunning the MIP, followed by a complete equivalence-class analysis. An edge was considered stable-AP or stable-AA if it received that classification in at least 90\% of the replicates. Figure~\ref{fig:nhanes_network} presents the full-data and stability classifications jointly.
\begin{figure}[t!]
    \centering
    \includegraphics[width=1.1\linewidth]{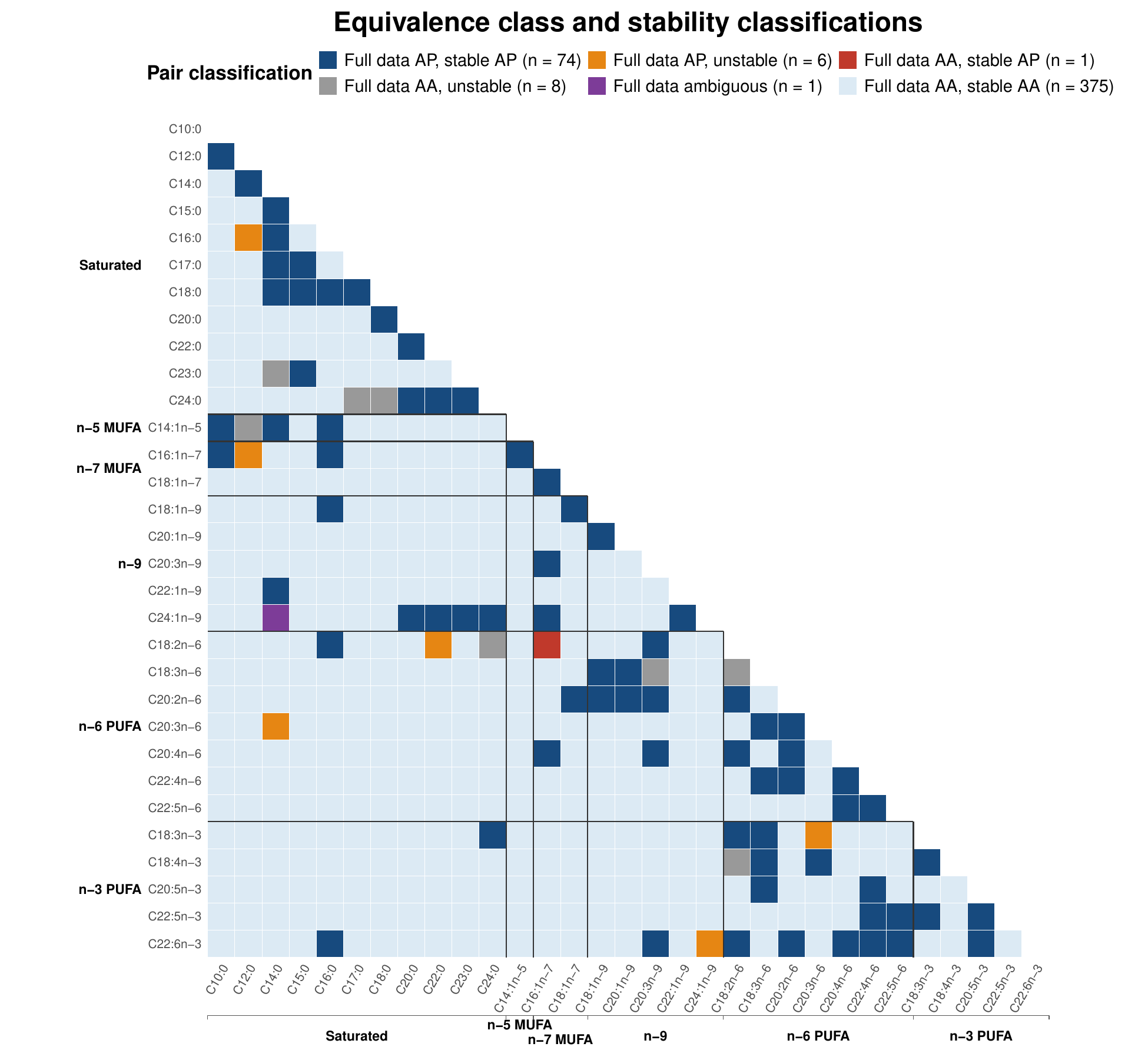}
    \caption{\small Full-data equivalence and subsampling-stability classifications for all 465 pairs of the 31 fatty acids in the integrated analysis of the four NHANES cycle data. Fatty acids are ordered by family and chain length, with black boundaries separating families.}
    \label{fig:nhanes_network}
\end{figure}

The fatty acids were grouped into saturated, $n$-5/$n$-7 Monounsaturated Fatty Acids (MUFAs), $n$-9 fatty acids, and $n$-6/$n$-3 Polyunsaturated Fatty Acids (PUFAs). The network showed clear organization by family and chain length: 34.0\% of within-family pairs were connected, compared with 11.1\% of between-family pairs, with particularly strong connectivity within the $n$-6 and $n$-3 PUFAs and between these two families. Connectivity was also higher among fatty acids with the same chain length or differing by two carbons than among more distant pairs. This pattern is consistent with fatty-acid metabolism, where desaturation preserves chain length, elongation typically adds two carbons, and $n$-3 and $n$-6 fatty acids share desaturase and elongase pathways \citep{Schaeffer2006FADS,Guillou2010Elongases}.

Eicosadienoic acid (C20:2$n$-6) had the highest degree, with nine stable connections, followed by DHA (C22:6$n$-3), with eight. C20:2$n$-6 is an intermediate in long-chain PUFA synthesis \citep{Park2009AlternativeFADS2}, whereas DHA is a terminal product of the $n$-3 pathway and an important membrane component \citep{Sprecher1995PUFA}. Notably, DHA was connected to five $n$-6 fatty acids, and Mead acid to both $n$-3 and $n$-6 fatty acids, highlighting cross-family relationships that may reflect shared dietary, transport, or FADS/ELOVL pathways. Lauric and erucic acids had the lowest degrees, with two connections each.

Several established reaction pairs, including stearic acid--oleic acid, dihomo-gamma-linolenic acid--arachidonic acid, and C20:1$n$-9--C22:1$n$-9, were consistently classified as AA \citep{Schaeffer2006FADS,Paton2009SCD}. Biochemical reactions serve here as biological annotations rather than as ground-truth conditional edges. Their stable absence indicates that these pairs did not retain distinct dependences after conditioning on the other measured fatty acids, not that the underlying reactions are absent. Contrasting stable AP and AA relationships may therefore help distinguish pathway relationships that structure population-level variation from those represented indirectly through the surrounding network.

These interpretations are subject to estimation uncertainty and incomplete measurement. After adjustment for the specified covariates, the analysis estimates the conditional-dependence structure of the marginal distribution of the 31 measured fatty acids. Other unmeasured variables may therefore induce or obscure dependences among them. Finite-sample and laboratory measurement variability may also affect edge selection, although the subsampling analysis assesses sensitivity to sampling variation. Finally, the MIP returned an incumbent rather than a certified globally optimal solution, so the full-data equivalence classifications are conditional on the fitted projections of that incumbent.

\section{Discussion}

We developed a general framework for reconstructing a shared global graph from heterogeneous sources that observe different subsets of variables and could contribute raw data, estimated graphs, or both. The framework links each source-specific graph to a common global graph through latent projections, allowing information to be integrated coherently despite incomplete variable coverage and
data-access restrictions. Our theoretical and empirical results showed that this integration recovered the global projection-equivalence class and could substantially improve the estimation of source-specific graph structure relative to analyzing each source separately. 

Our work opens several directions for future research. First, we assumed that all sources arise from a common global graph. This assumption could be relaxed by allowing the underlying graphs to differ slightly across sources while encouraging them to share substantial structure, building on ideas from \cite{danaher2014joint}. Second, when a source does not observe a hub node of the global graph, its latent projection can become substantially denser, potentially degrading both statistical and computational performance. In such settings, one could instead use a sparse-plus-low-rank decomposition \citep{LSR}. The SHD term would then compare the graph encoded by the estimated sparse component with the induced subgraph of a candidate global graph, rather than with its latent projection. Third, in federated settings, it would be interesting to consider an iterative procedure over several rounds. The central estimator would send each source its current projected graph estimate. Sources would refine these estimates using local data and return the updated graphs for reintegration.	 Finally, while our exact MIP formulations allow certification of global optimality, they do not scale to very large graphs. Tailored cutting-plane methods that exploit the latent-projection constraints could improve scalability.

\section{Acknowledgements}
We acknowledge funding from NIH grants RF1-AG090462 and U01 HG013177 (Nir Keret and Ali Shojaie) and NSF grant DMS-2413074 (Armeen Taeb).

\appendix

\section{Proofs}
\subsection{Proof of Proposition~\ref{prop:identifiability}}
\label{sec:proof_identify}
\begin{proof}
By Assumption~\ref{ass:markov_faithfulness}, we have that $(i,j)\in E(\pi_{S_k}(\G^\star)) \iff X_i \not\perp X_j | X_{S_k\setminus\{i,j\}} \text{ under }P_{S_k}^{(k)}$. Since the pairwise conditional independences in $P_{S_k}^{(k)}$ are known, $\pi_{S_k}(\G^\star)$ is uniquely identifiable from $P_{S_k}^{(k)}$. Suppose $\G$ is indistinguishable from $\G^\star$ given pairwise conditional independences of \(P^{(k)}_{S_k}, k\in[K]\). From the earlier analysis, it must be that $\pi_{S_k}(\G)=\pi_{S_k}(\G^\star)$ for every $k\in[K]$ and thus $\G \in \mathrm{EC}(\G^\star;\{S_k\}_{k=1}^K)$. On the other hand, if $\G \in \mathrm{EC}(\G^\star;\{S_k\}_{k=1}^K)$, then $\pi_{S_k}(\G)=\pi_{S_k}(\G^\star)$ for every $k\in[K]$ and so the pairwise conditional independences from the marginal distributions cannot distinguish $\G$ from $\G^\star$. 
\end{proof}

\subsection{Proof of Proposition~\ref{prop:equiv}}
\label{sec:proof_equiv}

\begin{proof}
For each $k\in[K]$, define
\[
\mathcal E_k :=
\begin{cases}
\left\{
\displaystyle
\arg\min_{\widetilde{\G}}
\operatorname{Loss}_k(\widetilde{\G};\mathcal D_k)
=
\{\G_{S_k}^{\star(k)}\}
\right\},
& k\in\mathcal W_1,\\[1.2em]
\left\{
\widehat{\G}_{S_k}^{(k)}
=
\G_{S_k}^{\star(k)}
\right\},
& k\in\mathcal W_2.
\end{cases}
\]
By Assumption~\ref{ass:consistent} $\mathbb P(\mathcal E_k)\longrightarrow 1
\text{ for every }k\in[K]$. Since $K$ is fixed,
\[
\mathbb P\left(\bigcap_{k=1}^K \mathcal E_k\right)
\ge
1-\sum_{k=1}^K \mathbb P(\mathcal E_k^c)
\longrightarrow 1.
\]

On the event $\bigcap_{k=1}^K \mathcal E_k$, the true graph $\G^\star$
attains the minimum possible value of every source-specific term in
\eqref{eqn:main}. Hence any minimizer $\widehat{\G}$ of \eqref{eqn:main}
must satisfy
\[
\pi_{S_k}(\widehat{\G})
=
\G_{S_k}^{\star(k)}
=
\pi_{S_k}(\G^\star),
\qquad
\forall k\in[K].
\]
Therefore,
\[
\mathrm{EC}\left(
\widehat{\G};\{S_k\}_{k=1}^K
\right)
=
\mathrm{EC}\left(
\G^\star;\{S_k\}_{k=1}^K
\right).
\]
Consequently,
\[
\mathbb P\left(
\mathrm{EC}(\widehat{\G};\{S_k\}_{k=1}^K)
=
\mathrm{EC}(\G^\star;\{S_k\}_{k=1}^K)
\right)
\longrightarrow 1.
\]
\end{proof}

\subsection{Proof of Theorem~\ref{thm:main}}
\label{proof:main}
The proof relies on two propositions. We state these first, prove Theorem~\ref{thm:main}, and then prove the propositions.
\begin{proposition}Let $$\varepsilon_{n,k}:= \max_{A,B\subseteq S_k, |A|\leq d_n,|B|\leq d_n} \|[\widehat{\Sigma}_k]_{A,B}-[\Sigma^\star_k]_{A,B}\|_2.$$
Suppose for every $k\in[K]$: \[
 \omega_k\lambda_k
\geq 2
\bar{C}\omega_k\varepsilon_{n,k},
\]
and 
$$\max_k \omega_k\lambda_k/\min_k \omega_k\lambda_k \leq C.$$
Define:
\[
q_-:=
\min\left\{
r\in\mathbb{N}:
\Gamma^-(r)
>
\max_k \omega_k\lambda_k (d_n+1/2)
\right\}.
\]
Then,
\[
|\mathcal{A}(\widehat{\G},\G^\star)|
\leq
(1+C(2d_n+1))
q_-.
\]
\label{prop:helper_1}
\end{proposition}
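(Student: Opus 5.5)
The plan is a basic-inequality argument comparing the empirical objective at $\widehat{\G}$ with its value at $\G^\star$, organized by source–node pairs. Since the loss \eqref{eqn:convex_quad} decouples across nodes, I will write the empirical objective as $\sum_{k}\sum_{j\in S_k}\omega_k\{\widehat{\mathcal{L}}_{kj}(\G)+\lambda_k|N(j,\pi_{S_k}(\G))|\}$. Here $\widehat{\mathcal{L}}_{kj}$ is the population quantity $\mathcal{L}_{kj}$ with $\Sigma_k^\star$ replaced by $\widehat{\Sigma}_k$, and $2|E|=\sum_j\deg_j$ accounts for the factor $c=2$. I read $\mathcal{A}(\widehat{\G},\G^\star)$ as the set of pairs $(k,j)$ with $N(j,\pi_{S_k}(\widehat{\G}))\neq N(j,\pi_{S_k}(\G^\star))$. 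I will split it into $\mathcal{A}^-$ (some true neighbor omitted) and $\mathcal{A}^+:=\mathcal{A}\setminus\mathcal{A}^-$, the pairs whose estimated neighborhood strictly contains the true one. Optimality of $\widehat{\G}$ gives $\sum_{(k,j)\in\mathcal{A}}\omega_k\Delta_{kj}\le 0$, where $\Delta_{kj}$ is the empirical-loss-plus-penalty difference between $\widehat{\G}$ and $\G^\star$ at $(k,j)$.

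\emph{Step 1 (uniform loss perturbation).} All relevant supports have size at most $d_n+1$, because projected degrees are at most $d_n$ for graphs in $\mathfrak{G}$. I will show that for every such pair of supports
\[
|\{\widehat{\mathcal{L}}_{kj}(\G)-\widehat{\mathcal{L}}_{kj}(\G^\star)\}-\{\mathcal{L}_{kj}(\G)-\mathcal{L}_{kj}(\G^\star)\}|\le \tfrac{\bar C}{2}\varepsilon_{n,k}.
\]
For this, write each restricted least-squares value as a Schur complement $[\Sigma]_{j|T}$ and perturb it. The inverse $([\widehat\Sigma]_{T,T})^{-1}$ is controlled through $\kappa_{\min}$ once $\varepsilon_{n,k}\le\kappa_{\min}/2$, which I take as implied by the hypotheses. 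The regression vectors are bounded through $\kappa_{\max}/\kappa_{\min}$, and the three terms $4$, $8\kappa_{\max}/\kappa_{\min}$ and $4\kappa_{\max}^2/\kappa_{\min}^2$ in $\bar C$ arise from expanding $\beta^\top\widehat\Sigma\beta-\beta^\top\Sigma^\star\beta$ at the two minimizers. I expect this bookkeeping, and getting the constants to match $\bar C$ exactly, to be the main technical obstacle.

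\emph{Step 2 (superset pairs are costly).} For $(k,j)\in\mathcal{A}^+$ the population difference is zero, since the true neighborhood is contained in the estimated one. The penalty therefore increases by at least $\lambda_k$, and the empirical loss decreases by at most $\bar C\varepsilon_{n,k}/2\le\lambda_k/4$. Hence $\omega_k\Delta_{kj}\ge \tfrac12\omega_k\lambda_k\ge\tfrac12\min_l\omega_l\lambda_l$.

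\emph{Step 3 (omission pairs, and the bound on $|\mathcal{A}^-|$).} For $(k,j)\in\mathcal{A}^-$ the penalty can drop by at most $d_n\lambda_k$, so
\[
\omega_k\Delta_{kj}\ge \omega_k\{\mathcal{L}_{kj}(\widehat{\G})-\mathcal{L}_{kj}(\G^\star)\}-(d_n+\tfrac12)\max_l\omega_l\lambda_l,
\]
after absorbing the $\bar C\varepsilon$ term using $\lambda_k\ge 2\bar C\varepsilon_{n,k}$. Summing, and using $\mathcal{L}_{kj}(\widehat{\G})=\mathcal{L}_{kj}(\G^\star)$ off $\mathcal{A}^-$ (so the population excess is exactly $\mathcal{L}(\widehat{\G})-\mathcal{L}(\G^\star)$), gives
\[
\mathcal{L}(\widehat{\G})-\mathcal{L}(\G^\star)\le|\mathcal{A}^-|(d_n+\tfrac12)\max_l\omega_l\lambda_l-\tfrac12|\mathcal{A}^+|\min_l\omega_l\lambda_l .
\]
The definition of $\Gamma^-$ makes it nondecreasing in $r$. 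Therefore, if $|\mathcal{A}^-|\ge q_-$, the left side is strictly larger than $|\mathcal{A}^-|(d_n+1/2)\max\omega\lambda$, which contradicts the display. Hence $|\mathcal{A}^-|<q_-$.

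\emph{Step 4 (bound on $|\mathcal{A}^+|$).} The left side of the same display is nonnegative. This yields $|\mathcal{A}^+|\le(2d_n+1)\,\tfrac{\max\omega\lambda}{\min\omega\lambda}\,|\mathcal{A}^-|\le C(2d_n+1)|\mathcal{A}^-|$. Adding, $|\mathcal{A}|\le(1+C(2d_n+1))|\mathcal{A}^-|\le(1+C(2d_n+1))q_-$, which is the claim.
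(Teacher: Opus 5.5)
Your proposal is correct and follows the paper's proof essentially step for step. It uses the basic inequality from optimality of $\widehat{\G}$, the uniform perturbation bound (the paper's Lemma~\ref{lemma:intermed}), the split of $\mathcal{A}(\widehat{\G},\G^\star)$ into $\mathcal{A}^-$ and $\mathcal{A}^+$, the contradiction via $\Gamma^-(q_-)$, and nonnegativity of $\mathcal{L}(\widehat{\G})-\mathcal{L}(\G^\star)$ to bound $|\mathcal{A}^+|$.

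Two minor points. First, the bookkeeping you describe in Step 1 naturally gives $\tfrac{\bar C}{2}\varepsilon_{n,k}$ for each graph separately, and hence $\bar C\varepsilon_{n,k}\le\lambda_k/2$ for the difference of differences (as in the paper), not $\tfrac{\bar C}{2}\varepsilon_{n,k}$. This weaker bound already yields exactly your constants $\tfrac12$ and $d_n+\tfrac12$ in Steps 2--3, so the sharper constant you were worried about is not needed. Second, $\varepsilon_{n,k}\le\kappa_{\min}/2$ is not implied by the stated hypotheses. It is an extra condition that the paper also uses implicitly and that holds on the high-probability event constructed in the proof of Theorem~\ref{thm:main}.
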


\begin{proposition}\label{prop:helper_2}Suppose $|\mathcal{A}(\widehat{\G},\G^\star)|\leq q$. Suppose $\lambda_k$ satisfies the condition described in Proposition~\ref{prop:helper_1} and that additionally, for every $k\in[K]$ and $i\in S_k$ 
$$(d_n+1/2) q\max_{j}\omega_j\lambda_j\leq \delta_{ki}(q)/2.$$
Suppose also that
\begin{eqnarray*}
\begin{aligned}
\varepsilon_{n,k} &\leq \kappa_\text{min}/2. 
\end{aligned}
\end{eqnarray*}
Then, $\pi_{S_k}(\widehat{\G})=\pi_{S_k}(\G^\star)$ for every $k\in[K]$. 
\end{proposition}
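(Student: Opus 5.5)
The plan is to argue by contradiction, comparing the empirical objective at $\widehat{\G}$ and at $\G^\star$. I take $\mathcal{A}$ to mean $\mathcal{A}^-$. Write $\widehat{\mathcal{L}}_{kj}(\G)$ for the empirical analogue of $\mathcal{L}_{kj}(\G)$, obtained by replacing $\Sigma_k^\star$ with $\widehat{\Sigma}_k$. The estimator's objective is then $\widehat F(\G)=\sum_k\omega_k\bigl[\sum_{j\in S_k}\widehat{\mathcal{L}}_{kj}(\G)+2\lambda_k|E(\pi_{S_k}(\G))|\bigr]$. Since $\G^\star\in\mathfrak{G}$ is feasible, we have $\widehat F(\widehat{\G})\le\widehat F(\G^\star)$. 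I will show that this inequality fails, first if $\mathcal{A}^-(\widehat{\G},\G^\star)\neq\emptyset$, and then if some projection of $\widehat{\G}$ has an extra edge.

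\emph{Step 1 (perturbation bound).} I first need a deterministic lemma. Take a node $j$ and neighborhoods $T,T'$ of size at most $d_n$, and use $\varepsilon_{n,k}\le\kappa_{\min}/2$ so that the empirical submatrices $[\widehat\Sigma_k]_{T,T}$ have smallest singular value at least $\kappa_{\min}/2$. The lemma I aim for is
$$\bigl|\{\widehat{\mathcal{L}}_{kj}(T)-\widehat{\mathcal{L}}_{kj}(T')\}-\{\mathcal{L}_{kj}(T)-\mathcal{L}_{kj}(T')\}\bigr|\le \bar C\,\varepsilon_{n,k}\,(|T\triangle T'|+1/2)$$
in the nested case, or with a comparable bound in general. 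I would prove it by writing each residual variance as a Schur complement $[\Sigma_k]_{j|T}$. Then I would expand the difference of Schur complements to first order in $\widehat\Sigma-\Sigma^\star$. Using $\kappa_{\max}/\kappa_{\min}$ to bound the regression coefficients gives the constant $\bar C=4+8\kappa_{\max}/\kappa_{\min}+4\kappa_{\max}^2/\kappa_{\min}^2$.

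\emph{Step 2 (no omissions).} Suppose some $(k,i)\in\mathcal{A}^-(\widehat{\G},\G^\star)$. Since $|\mathcal{A}^-|\le q$, the graph $\widehat{\G}$ lies in the localized class defining $\delta_{ki}(q)$, so $\mathcal{L}(\widehat{\G})-\mathcal{L}(\G^\star)\ge\delta_{ki}(q)$. I split the source–node pairs into those in $\mathcal{A}^-$ and those not in $\mathcal{A}^-$. For pairs not in $\mathcal{A}^-$, the neighborhood of $\widehat{\G}$ is a superset of the true one. There Step 1, together with $\lambda_k\ge2\bar C\varepsilon_{n,k}$, shows that any empirical loss reduction is dominated by the extra penalty $\lambda_k$ per extra projected edge, so these pairs contribute non-negatively to $\widehat F(\widehat{\G})-\widehat F(\G^\star)$. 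For pairs in $\mathcal{A}^-$, the empirical gain is at least the population gain minus $\bar C\varepsilon_{n,k}(2d_n+1)\le\lambda_k(d_n+1/2)$. The penalty saved by dropping edges is at most $2\lambda_k d_n$ per pair. Summing over at most $q$ pairs and using $\max_j\omega_j\lambda_j$ gives
$$\widehat F(\widehat{\G})-\widehat F(\G^\star)\ge\delta_{ki}(q)-2(d_n+1/2)q\max_j\omega_j\lambda_j>0,$$
where the last inequality uses the hypothesis $(d_n+1/2)q\max_j\omega_j\lambda_j\le\delta_{ki}(q)/2$ and needs to be made strict. This contradicts $\widehat F(\widehat{\G})\le\widehat F(\G^\star)$.

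\emph{Step 3 (no additions).} With $\mathcal{A}^-=\emptyset$, every projection satisfies $\pi_{S_k}(\G^\star)\subseteq\pi_{S_k}(\widehat{\G})$. If any containment is strict, the nested case of Step 1 and $\lambda_k\ge2\bar C\varepsilon_{n,k}$ make every extra edge cost strictly more penalty than it can reduce the loss. That gives $\widehat F(\widehat{\G})>\widehat F(\G^\star)$, again a contradiction, so all projections coincide.

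I expect Step 1 to be the main obstacle: obtaining a bound on the empirical-versus-population loss difference that is linear in $|T\triangle T'|$, with exactly the constant $\bar C$, and doing the bookkeeping of the $1/2$ terms so that the inequality in Step 2 comes out strict.
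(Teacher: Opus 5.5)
Your argument is essentially the paper's. It uses the basic inequality $\widehat F(\widehat{\G})\le \widehat F(\G^\star)$, a deviation bound on the source--node pairs whose projected neighborhoods change, and per-node penalty bookkeeping. It then reaches a contradiction through $\delta_{ki}(q)$ to rule out omissions, and uses the same penalty-versus-deviation comparison to rule out additions. In the paper, $\mathcal{A}(\G,\G^\star)$ denotes all pairs with $N(j,\pi_{S_k}(\G))\neq N(j,\pi_{S_k}(\G^\star))$, so $\mathcal{A}^-\subseteq\mathcal{A}$. Your reading of $\mathcal{A}$ as $\mathcal{A}^-$ therefore only weakens the hypothesis, and the paper's proof also uses only $|\mathcal{A}^-|\le q$.

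One step in Step 2 needs correcting. Since $2|E(\pi_{S_k}(\G))|=\sum_{j\in S_k}|N(j,\pi_{S_k}(\G))|$, the penalty difference splits as $\omega_k\lambda_k\sum_{j\in S_k}\bigl(|N(j,\pi_{S_k}(\widehat{\G}))|-|N(j,\pi_{S_k}(\G^\star))|\bigr)$. So a pair in $\mathcal{A}^-$ saves at most $\omega_k\lambda_k d_n$, not $2\omega_k\lambda_k d_n$; this is the same $\lambda_k$-per-endpoint split you already use for added edges. With $2\lambda_k d_n$, the per-pair deficit would be $(3d_n+1/2)\omega_k\lambda_k$. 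The hypothesis $\delta_{ki}(q)\ge(2d_n+1)q\max_j\omega_j\lambda_j$ cannot absorb this once $d_n\ge 1$, so your displayed inequality does not follow as written. With $\lambda_k d_n$ the deficit is $(2d_n+1/2)\omega_k\lambda_k$, and you get the strict inequality you wanted.

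Step 1 is also easier than you fear and needs no first-order expansion. You have $\|[\widehat\Sigma_k]_{TT}^{-1}\|\le 2/\kappa_{\min}$, $\|[\widehat\Sigma_k]_{TT}^{-1}-[\Sigma_k^\star]_{TT}^{-1}\|\le 2\varepsilon_{n,k}/\kappa_{\min}^2$, and $\|[\Sigma_k^\star]_{T,j}\|\le\kappa_{\max}$. From these, each single residual variance satisfies $|\widehat{\mathcal{L}}_{kj}(T)-\mathcal{L}_{kj}(T)|\le(2+4\kappa_{\max}/\kappa_{\min}+2\kappa_{\max}^2/\kappa_{\min}^2)\varepsilon_{n,k}=\bar{C}\varepsilon_{n,k}/2$, uniformly over $|T|\le d_n$. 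The triangle inequality then bounds the weighted deviation by $\bar{C}\omega_k\varepsilon_{n,k}$ per changed pair, independently of $|T\triangle T'|$; this is the paper's Lemma~\ref{lemma:intermed}. With it, each omitted pair costs only $(d_n+1/2)\omega_k\lambda_k$. The paper obtains $\delta_{ki}(q)\le(d_n+1/2)q\max_j\omega_j\lambda_j\le\delta_{ki}(q)/2$, which contradicts $\delta_{ki}(q)>0$.
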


\begin{proof}Define:
$$t_k = \kappa_\text{max}\sqrt{\frac{256(d_n+1)\log(2Kp)}{n_k}}.$$
For any $A,B \subseteq S_k$, $|A|,|B| \leq d_n$, writing $U = A \cup B$, we have:
$$\|[\widehat{\Sigma}_k]_{A,B} - [\Sigma^\star_k]_{A,B}\|_2 \leq \|[\widehat{\Sigma}_k]_{U,U} - [\Sigma^\star_k]_{U,U}\|_2, \quad |U|\leq 2d_n.$$
Since $t_k \leq 8\kappa_\text{max}$ and $n_k \geq 128d_n\kappa_\text{max}^2/t_k^2$, standard Gaussian concentration results mean that:
\begin{eqnarray*}
\begin{aligned}
\mathbb{P}\left(\|[\widehat{\Sigma}_k]_{U,U} - [\Sigma^\star_k]_{U,U}\|_2 > t_k \right) \leq 2\mathrm{exp}\left(-\frac{n_kt_k^2}{128\kappa_\text{max}^2}\right) = 2(2Kp)^{-2(d_n+1)}.
\end{aligned}
\end{eqnarray*}
Since the number of subsets $U \subseteq S_k$ of size at most $2d_n$ is bounded by $p^{2d_n+1}$, a union bound gives:
$$\mathbb{P}(\exists k: \varepsilon_{n,k}>t_k) \leq 2Kp^{2d_n+1}(2Kp)^{-2(d_n+1)}\leq \frac{1}{p}.$$
Thus, with probability at least $1-1/p$, simultaneously for all $k$, $\varepsilon_{n,k} \leq t_k \leq \kappa_\text{min}/2$. On this event, since $\lambda_k \geq 2\bar{C}t_k$, we have for every $k\in[K]$:
$$\omega_k\lambda_k \geq 2\bar{C}\omega_kt_k \geq 2\bar{C}\omega_k\varepsilon_{n,k}.$$
Moreover, by assumption
$$\frac{\max_k\omega_k\lambda_k}{\min_k\omega_k\lambda_k} \leq C.$$
Hence, Proposition~\ref{prop:helper_1} applies and yields $|\mathcal{A}(\widehat{\G},\G^\star)|\leq q_\text{min}$. Since $\max_j\omega_j\lambda_j \leq C\omega_k\lambda_k$, we have that:
$$\max_j\omega_jt_j \leq C\omega_kt_k\quad \text{ for every }k\in[K].$$
So we have for every $k\in[K]$
$$(d_n+1/2)q_\text{min}\max_j\omega_j\lambda_j \leq 2C\bar{C}(d_n+1/2)q_\text{min}\omega_kt_k \leq \frac{\delta_{k,\text{min}}}{2}.$$
Here, the last inequality comes from the sample size requirement. Thus, all the conditions of Proposition~\ref{prop:helper_2} are satisfied on the event $\varepsilon_{n,k} \leq t_k$ for all $k$. 

\end{proof}

\subsubsection{Proof of Propopsition~\ref{prop:helper_1}}
The proof relies on the following lemma. Let $\widehat{\mathcal{L}}$ be the empirical version of $\mathcal{L}$. We let:
$$\mathcal{A}(\G,\G^\star):= \{(k,j): N(j,\pi_{S_k}(\G))\neq N(j,\pi_{S_k}(\G^\star))\}.$$
We also let $\mathcal{A}^+(\G,\G^\star) = \mathcal{A}(\G,\G^\star)\setminus \mathcal{A}^-(\G,\G^\star)$. 

\begin{lemma}\label{lemma:intermed} We have:
$$\max_{\G\in\mathfrak{G}}\max_{k\in[K],j\in S_k}\widehat{\mathcal{L}}_{kj}(\G)-\mathcal{L}_{kj}(\G) \leq \max_k\bar{C}\omega_k\varepsilon_{n,k}$$
and
\[
\max_{\G\in\mathfrak{G}}
\frac{
[\widehat{\mathcal{L}}{}(\G)-\mathcal{L}{}(\G)]
-
[\widehat{\mathcal{L}}{}(\G^\star)-\mathcal{L}{}(\G^\star)]
}{
|\mathcal{A}(\G,\G^\star)|
}
\leq
\max_k \bar{C}\omega_k\varepsilon_{n,k}.
\]
\end{lemma}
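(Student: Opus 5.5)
The plan is to reduce everything to a perturbation bound for a single nodewise Schur complement, and then to use the fact that source–node pairs outside $\mathcal{A}(\G,\G^\star)$ cancel exactly.

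\textbf{Reduction to a Schur complement.} Fix $\G\in\mathfrak{G}$, a source $k$ and a node $j\in S_k$, and let $T:=N(j,\pi_{S_k}(\G))$. Since $\G\in\mathfrak{G}$, we have $|T|\le d_n$. Writing $\beta\in\mathbb{R}^{|S_k|}$ with $\beta_j=1$ and $\beta_T=-b$ for a free vector $b\in\mathbb{R}^{|T|}$, the objective expands blockwise as
\[
\beta^\top\Sigma\beta=\Sigma_{jj}-2b^\top\Sigma_{Tj}+b^\top\Sigma_{TT}b .
\]
Hence $\mathcal{L}_{kj}(\G)=[\Sigma_k^\star]_{j|T}$, attained at $b^\star=([\Sigma^\star_k]_{TT})^{-1}[\Sigma^\star_k]_{Tj}$. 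The empirical loss $\widehat{\mathcal{L}}_{kj}(\G)$ is the same minimization with $\widehat\Sigma_k$ in place of $\Sigma_k^\star$.

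\textbf{Key perturbation identity.} Let $\Delta=\widehat\Sigma_k-\Sigma^\star_k$. For any fixed $b$,
\[
\bigl|\beta^\top\Delta\beta\bigr| \le |\Delta_{jj}|+2\|b\|_2\|\Delta_{Tj}\|_2+\|b\|_2^2\|\Delta_{TT}\|_2 \le (1+\|b\|_2)^2\varepsilon_{n,k}.
\]
Every block appearing here is indexed by sets of size at most $d_n$, which is exactly why I split into blocks rather than bounding the full $(d_n+1)$-dimensional submatrix directly. Two applications of this identity give a two-sided bound.
\begin{enumerate}
\item[(a)] \emph{Upper direction.} Plugging $b^\star$ into the empirical objective gives $\widehat{\mathcal{L}}_{kj}-\mathcal{L}_{kj}\le(1+\|b^\star\|_2)^2\varepsilon_{n,k}$. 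The bound $\|b^\star\|_2\le\kappa_{\max}/\kappa_{\min}$ follows from the definitions of $\kappa_{\min}$ and $\kappa_{\max}$, using that $[\Sigma^\star_k]_{T|\emptyset}=[\Sigma^\star_k]_{TT}$.
\item[(b)] \emph{Lower direction.} Plugging the empirical minimizer $\widehat b$ into the population objective gives $\mathcal{L}_{kj}-\widehat{\mathcal{L}}_{kj}\le(1+\|\widehat b\|_2)^2\varepsilon_{n,k}$.
\end{enumerate}

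\textbf{Main obstacle.} The step I expect to be the main obstacle is controlling $\|\widehat b\|_2$ in (b), which needs invertibility of $[\widehat\Sigma_k]_{TT}$. I would work on the event $\varepsilon_{n,k}\le\kappa_{\min}/2$, the regime used in Propositions~\ref{prop:helper_1}--\ref{prop:helper_2}; if needed, I would add this as a standing hypothesis of the lemma. On that event, $\sigma_{\min}([\widehat\Sigma_k]_{TT})\ge\kappa_{\min}/2$ and $\|[\widehat\Sigma_k]_{Tj}\|_2\le\kappa_{\max}+\kappa_{\min}/2$, so
\[
\|\widehat b\|_2\le 1+\frac{2\kappa_{\max}}{\kappa_{\min}} .
\]
Then
\[
(1+\|\widehat b\|_2)^2\le\Bigl(2+\frac{2\kappa_{\max}}{\kappa_{\min}}\Bigr)^2=\bar C .
\]
This also dominates the constant $(1+\kappa_{\max}/\kappa_{\min})^2$ from direction (a). Thus $|\widehat{\mathcal{L}}_{kj}(\G)-\mathcal{L}_{kj}(\G)|\le\bar C\varepsilon_{n,k}$ uniformly over $\G\in\mathfrak{G}$, $k$ and $j$. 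Multiplying by the weight $\omega_k$, as the weighted per-pair terms enter $\mathcal{L}$, gives the first claim. If a factor of $2$ is later needed, I would tighten (b) by using $\varepsilon_{n,k}\le t_k$ with $t_k$ small relative to $\kappa_{\min}$, which brings the constant down to about $(1+2\kappa_{\max}/\kappa_{\min})^2\le\bar C/2$ plus lower-order terms.

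\textbf{Second claim.} Write the numerator as
\[
\sum_{k}\sum_{j\in S_k}\omega_k\bigl\{[\widehat{\mathcal{L}}_{kj}(\G)-\mathcal{L}_{kj}(\G)]-[\widehat{\mathcal{L}}_{kj}(\G^\star)-\mathcal{L}_{kj}(\G^\star)]\bigr\}.
\]
For $(k,j)\notin\mathcal{A}(\G,\G^\star)$ the projected neighborhoods coincide, so the two brackets are identical and cancel exactly. Each of the remaining $|\mathcal{A}(\G,\G^\star)|$ terms is bounded by the difference of two per-pair deviations from the first part. Bounding the sum by $|\mathcal{A}(\G,\G^\star)|$ times the maximal per-pair term and dividing by $|\mathcal{A}(\G,\G^\star)|$ gives the stated bound. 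The one step that needs care is the constant: bounding each bracket by the full $\bar C\varepsilon_{n,k}$ gives $2\max_k\bar C\omega_k\varepsilon_{n,k}$, so to land on exactly $\max_k\bar C\omega_k\varepsilon_{n,k}$ I would use the sharper per-pair constant of at most $\bar C/2$ for both $\G$ and $\G^\star$. When the first argument is $\G^\star$, direction (a) only costs $(1+\kappa_{\max}/\kappa_{\min})^2=\bar C/4$, which supports this.
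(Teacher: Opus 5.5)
Your nodewise Schur-complement reduction, the restriction to the event $\varepsilon_{n,k}\le\kappa_{\min}/2$, and the exact cancellation of pairs outside $\mathcal{A}(\G,\G^\star)$ are the ingredients the paper uses. Your first claim is fine: direction (a) alone gives it, with constant $(1+a)^2=\bar C/4$ where $a:=\kappa_{\max}/\kappa_{\min}$, and it does not even need the event.

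\textbf{The gap is the constant in the second claim.} The $\G^\star$ bracket enters with a minus sign, so each surviving term needs direction (a) at $\G$ and direction (b) at $\G^\star$. Your remark that at $\G^\star$ direction (a) only costs $\bar C/4$ therefore uses the wrong direction. Write $e:=\varepsilon_{n,k}/\kappa_{\min}\le 1/2$. Your bound in (b) is $(1+\|\widehat b\|_2)^2\le(1+a)^2/(1-e)^2$, which equals $\bar C$ at $e=1/2$. So what the proposal actually establishes per pair is $(1+a)^2\{1+(1-e)^{-2}\}\varepsilon_{n,k}$, which is $\tfrac54\bar C\varepsilon_{n,k}$ at $e=1/2$, not $\bar C\varepsilon_{n,k}$.

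Your suggested repairs do not close this:
\begin{itemize}
\item The inequality $(1+2a)^2\le\bar C/2=2(1+a)^2$ holds only for $a\le 1/\sqrt2$. But $a\ge 1$ always, because $\kappa_{\min}\le\sigma_{\min}([\Sigma^\star_k]_{TT})\le\kappa_{\max}$.
\item The sample-size condition of Theorem~\ref{thm:main} only guarantees $t_k\le\kappa_{\min}/2$. You cannot take $\varepsilon_{n,k}$ small relative to $\kappa_{\min}$ without changing the theorem; your crude route would need $e\le 1-1/\sqrt3$.
\end{itemize}

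\textbf{The missing step is a sharper bound in direction (b)}, of order $(1+a)^2/(1-e)$ rather than $(1+a)^2/(1-e)^2$. Within your variational route, keep the curvature term instead of discarding it. Set $\delta:=\widehat b-b^\star$ and $x:=\|\delta\|_2$. Optimality of $b^\star$ gives $\widehat\beta^\top\Sigma^\star_k\widehat\beta=\mathcal{L}_{kj}+\delta^\top[\Sigma^\star_k]_{TT}\delta$. Expanding $\widehat\beta$ around $\beta^\star$ and maximizing the resulting concave quadratic in $x$,
\[
\mathcal{L}_{kj}-\widehat{\mathcal{L}}_{kj}=-\widehat\beta^\top\Delta\widehat\beta-\delta^\top[\Sigma^\star_k]_{TT}\delta\le(1+a)^2\varepsilon_{n,k}+2(1+a)\varepsilon_{n,k}x-(\kappa_{\min}-\varepsilon_{n,k})x^2\le\frac{(1+a)^2\varepsilon_{n,k}}{1-e}.
\]
This is at most $\tfrac12\bar C\varepsilon_{n,k}$. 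Together with (a), each surviving term is then at most $\tfrac34\bar C\omega_k\varepsilon_{n,k}$, slightly better than the lemma requires.

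The paper instead expands the Schur complements directly. With $u,A$ the empirical and $v,B$ the population blocks indexed by $(T,j)$ and $(T,T)$, it uses
\[
u^\top A^{-1}u-v^\top B^{-1}v=(u-v)^\top A^{-1}(u+v)+v^\top A^{-1}(B-A)B^{-1}v.
\]
This yields the two-sided per-pair bound $\varepsilon_{n,k}\{1+(2a+e+a^2)/(1-e)\}=(1+a)^2\varepsilon_{n,k}/(1-e)\le\tfrac12\bar C\varepsilon_{n,k}$. The factor $2$ from differencing the two brackets then lands exactly on $\bar C$.
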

\begin{proof}[Proof of Lemma~\ref{lemma:intermed}]
Note that:
$$\widehat{\mathcal{L}}_{kj}(\G)-\mathcal{L}_{kj}(\G):= \omega_k\left[[\widehat{\Sigma}_k]_{jj}-[\Sigma^\star_k]_{jj} - [\widehat{\Sigma}_k]_{jT}([\widehat{\Sigma}_k]_{TT})^{-1}[\widehat{\Sigma}_k]_{Tj}+[{\Sigma}_k^\star]_{jT}([{\Sigma}_k^\star]_{TT})^{-1}[{\Sigma}^\star_k]_{Tj}\right].$$
Then it follows that:
\begin{eqnarray*}
\begin{aligned}
\max_{\G\in \mathfrak{G}} \frac{|[\widehat{\mathcal{L}}(\G)-\widehat{\mathcal{L}}(\G^\star)]-[\mathcal{L}(\G)-\mathcal{L}(\G^\star)]|}{\mathcal{A}(\G,\G^\star)} &= 
\max_{\G\in\mathfrak{G}} \frac{\left|\sum_{(k,j)\in\mathcal{A}(\G,\G^\star)} \left[ \widehat{\mathcal{L}}_{kj}(\G)-\mathcal{L}_{kj}(\G)-\widehat{\mathcal{L}}_{kj}(\G^\star)+\mathcal{L}_{kj}(\G^\star)\right]\right|}{|\mathcal A(\G,\G^\star)|}\\
&\le 2\max_{\substack{\G\in\mathfrak{G}\\k\in[K],\,j\in S_k}} \left|\widehat{\mathcal{L}}_{kj}(\G)-\mathcal{L}_{kj}(\G)\right|\\
&\le 2\left[\max_{k\in[K],j\in S_k}\omega_k\left|[\widehat{\Sigma}_k]_{jj}-[\Sigma^\star_k]_{jj}\right|\right]\\&+2\left[\max_{k\in[K],j\in S_k}\max_{\substack{T\subseteq S_k\setminus\{j\}\\|T|\leq d_n}} \omega_k\left|[\widehat{\Sigma}_k]_{jT}([\widehat{\Sigma}_k]_{TT})^{-1}[\widehat{\Sigma}_k]_{Tj}-[{\Sigma}_k^\star]_{jT}([{\Sigma}_k^\star]_{TT})^{-1}[{\Sigma}_k^\star]_{Tj}\right|\right]\\
&\leq \max_k \bar{C}\omega_k\varepsilon_{n,k}.
\end{aligned}
\end{eqnarray*}
The last inequality follows from $\varepsilon_{k,n} \leq \kappa_{\text{min}}/2$ and $\bar{C}:= 4+\frac{8\kappa_{\text{max}}}{\kappa_{\text{min}}}+\frac{4\kappa_{\text{max}}^2}{\kappa_{\text{min}}^2}$. 
\end{proof}
\begin{proof}[Proposition~\ref{prop:helper_1}]
We have from the optimality condition that:
$$\widehat{\mathcal{L}}(\widehat{\G})+2\sum_{k}\omega_k\lambda_k|E(\pi_{S_k}(\widehat{\G}))| \leq \widehat{\mathcal{L}}(\G^\star)+2\sum_{k}\omega_k\lambda_k|E(\pi_{S_k}({\G}^\star))|.$$
The optimality condition above implies:
\[
\begin{aligned}
0\geq\;&
\widehat{\mathcal{L}}(\widehat{\G})
-
\widehat{\mathcal{L}}{}(\G^\star)-\sum_{(k,j)\in\mathcal{A}^-(\widehat{\G},\G^\star)}\omega_k\lambda_kd_n+ 
\sum_{(k,j)\in\mathcal{A}^+(\widehat{\G},\G^\star)}
\omega_k\lambda_k.
\end{aligned}
\]
{\noindent}Therefore, by Lemma~\ref{lemma:intermed},
\begin{equation}
\begin{aligned}
\mathcal{L}{}(\widehat{\G})
-
\mathcal{L}{}(\G^\star)
&-\sum_{(k,j)\in\mathcal{A}^-(\widehat{\G},\G^\star)}\left[\omega_k\lambda_k d_n
+
\bar{C}\omega_k\varepsilon_{n,k}
\right]
\\
&+
\sum_{(k,j)\in\mathcal{A}^+(\widehat{\G},\G^\star)}\left[
\omega_k\lambda_k
-
\bar{C}\omega_k\varepsilon_{n,k}
\right]
\leq 0.
\end{aligned}
\label{eqn:temp_first}
\end{equation}
Since
$$\omega_k\lambda_k \geq 2\bar{C}\omega_k\varepsilon_{n,k},$$
we have that:
\begin{eqnarray}
\begin{aligned}
0 &\geq \mathcal{L}(\widehat{\G})-\mathcal{L}(\G^\star)- |\mathcal{A}^-(\widehat{\G},\G^\star)|\max_k \omega_k\lambda_k(d_n+1/2)
+ \frac{1}{2}|\mathcal{A}^+(\widehat{\G},\G^\star)|\min_k \omega_k\lambda_k.
\end{aligned}
\label{tem:inequality}
\end{eqnarray}

\noindent Take any $r$ satisfying

\[
\Gamma^-(r)
>
\max_k \omega_k\lambda_k (d_n+1/2).
\]
{\noindent}Suppose, as a point of contradiction, that

\[
|\mathcal{A}^-(\widehat{\G},\G^\star)|\geq r.
\]

{\noindent}Then,

\[
\frac{
\mathcal{L}{}(\widehat{\G})-\mathcal{L}{}(\G^\star)
}{
|\mathcal{A}^-(\widehat{\G},\G^\star)|
}
\geq
\Gamma^-(r)
>
\max_k \omega_k\lambda_k (d_n+1/2),
\]
implying that
$$\mathcal{L}{}(\widehat{\G})-\mathcal{L}{}(\G^\star) > |\mathcal{A}^-(\widehat{\G},\G^\star)|\max_k \omega_k\lambda_k (d_n+1/2).$$
This however contradicts the inequality  \eqref{tem:inequality}. Hence, defining
\[
q_-:=
\min\left\{
r\in\mathbb{N}:
\Gamma^-(r)
>
\max_k \omega_k\lambda_k (d_n+1/2)
\right\},
\]

{\noindent}we have

\[
|\mathcal{A}^-(\widehat{\G},\G^\star)|
\leq q_-.
\]

{\noindent}Moreover, since $\mathcal{L}{}(\widehat{\G})
-
\mathcal{L}{}(\G^\star)
\geq 0$, inequality \eqref{eqn:temp_first} implies:

\[
\frac{1}{2}
\min_k \omega_k\lambda_k
|\mathcal{A}^+(\widehat{\G},\G^\star)|
\leq
\max_k \omega_k\lambda_k (d_n
+1/2)
|\mathcal{A}^-(\widehat{\G},\G^\star)|.
\]
Therefore,
\[
|\mathcal{A}^+(\widehat{\G},\G^\star)|
\leq
(2d_n+1)\frac{
\max_k \omega_k\lambda_k 
}{
\min_k \omega_k\lambda_k
}
q_- \leq C(2d_n+1)q_-.
\]

{\noindent}Hence,

\[
|\mathcal{A}(\widehat{\G},\G^\star)|
\leq
(1+C(2d_n+1))
q_-.
\]

\end{proof}

\begin{proof}[Proof of Proposition~\ref{prop:helper_2}]

Let $T_{kj} := N(j,\pi_{S_k}(\G))$, $T_{kj}^\star:= N(j,\pi_{S_k}(\G^\star))$, $M_{kj} = T_{kj}^\star \setminus T_{kj}$ and $\Sigma^{(k)}_{M_{kj}|T_{kj}}$ be the conditional covariance of $X_{M_{kj}}^{(k)}$ conditioned on $X_{T_{kj}}^{(k)}$.

Suppose for any $k\in [K]$ and $i\in S_k$, $N(i,\pi_{S_k}({\G}^\star))\not\subseteq N(i,\pi_{S_k}(\widehat{\G}))$. Then, we have by Lemma~\ref{lemma:intermed}
\begin{eqnarray*}
\begin{aligned}
\delta_{ki}(q) &\leq \mathcal{L}(\widehat{\G})-\mathcal{L}(\G^\star) \\
&\leq \widehat{\mathcal{L}}(\widehat{\G})-\widehat{\mathcal{L}}(\G^\star) + \sum_{(l,m)\in\mathcal{A}(\widehat{\G},\G^\star)}\bar{C}\omega_l\varepsilon_{n,l} \\
&= [\widehat{\mathcal{L}}(\widehat{\G})+2\sum_{k}\lambda_k\omega_k|E(\pi_{S_k}(\widehat{\G}))|]-[\widehat{\mathcal{L}}(\G^\star) + 2\sum_{k}\lambda_k\omega_k|E(\pi_{S_k}({\G}^\star))|]\\ &- 2\sum_{k}\lambda_k\omega_k(|E(\pi_{S_k}(\widehat{\G}))|-|E(\pi_{S_k}({\G}^\star))|) +\sum_{(l,m)\in\mathcal{A}(\widehat{\G},\G^\star)}\bar{C}\omega_l\varepsilon_{n,l}\\
&\leq \sum_{(l,m)\in\mathcal{A}^-(\widehat{\G},\G^\star)}d_n\omega_l\lambda_l - \sum_{(l,m)\in\mathcal{A}^+(\widehat{\G},\G^\star)} \omega_l\lambda_l + \sum_{(l,m)\in\mathcal{A}(\widehat{\G},\G^\star)}\bar{C}\omega_l\varepsilon_{n,l}.
\end{aligned}
\end{eqnarray*}
The second inequality follows from Lemma~\ref{lemma:intermed}, and the final inequality follows from the optimality of $\widehat{\G}$. Since, by the condition of Proposition~\ref{prop:helper_1}:
$$\omega_l\lambda_l \geq 2\bar{C}\omega_l\varepsilon_{n,l},$$
we have from the ineqality above:
$$\delta_{ki}(q) \leq (d_n+1/2)q\max_l\omega_l\lambda_l.$$
This however contradicts the assumption of Proposition~\ref{prop:helper_2}. So we conclude that $\pi_{S_k}(\G^\star)$ is a subgraph of $\pi_{S_k}(\widehat{\G})$.

Finally, we need to argue that $\pi_{S_k}(\widehat{\G})=\pi_{S_k}(\G^\star)$. We have that $\mathcal{A}^-(\widehat{\G},\G^\star)=\varnothing$. Thus, inequality~\eqref{eqn:temp_first} gives
$$
\begin{aligned}
0 \ge {}& \mathcal{L}(\widehat{\G})-\mathcal{L(\G^\star)} +\sum_{(l,m)\in\mathcal{A}^+(\widehat{\G},\G^\star)}(\omega_l\lambda_l-\bar{C}\omega_l\varepsilon_{n,l}) 
\ge  \frac{1}{2}\sum_{(l,m)\in\mathcal{A}^+(\widehat{\G},\G^\star)}\omega_l\lambda_l \ge 0,
\end{aligned}
$$
where the second inequality uses $\mathcal{L}(\widehat{\G})-\mathcal{L}(\G^\star)\ge0$ and $\omega_l\lambda_l\ge 2\bar{C}\omega_l\varepsilon_{n,l}$. Since $\omega_l\lambda_l>0$ for every $l$, this implies $\mathcal{A}^+(\widehat{\G},\G^\star) = \varnothing$. We conclude that $\pi_{S_k}(\widehat{\G})=\pi_{S_k}(\G^\star)$ for every $k \in [K]$.

\end{proof}
\subsection{Proof of Proposition~\ref{prop_delta}}
\label{proof:prop_delta}
\begin{proof}
Let
$$
s_{\min}:=
\kappa_{\min}\min_k\{\omega_k\beta_{k,\min}^2\}.
$$
For any \(\G\in\mathfrak G\), every pair \((l,m)\in\mathcal A^-(\G,\G^\star)\setminus\mathcal W\) omits at least one true neighbor. Hence, building on the results in Remark~\ref{remark:diff}, by the definition of $\beta_{l,\min}$ and the lower eigenvalue bound,
$$
\omega_l\left\{\mathcal L_{lm}(\G)-\mathcal L_{lm}(\G^\star)\right\}
\ge
\kappa_{\min}\omega_l\beta_{l,\min}^2
\ge s_{\min}.
$$
Since at most $|\mathcal W|$ elements of $\mathcal A^-(\G,\G^\star)$ belong to $\mathcal W$, it follows that
$$
\mathcal L(\G)-\mathcal L(\G^\star)
\ge
s_{\min}
\left(|\mathcal A^-(\G,\G^\star)|-|\mathcal W|\right)_+.
$$
Recall:
\begin{equation*}
\Gamma^-(r):=
\inf_{\G\in\mathfrak{G}:\,|\mathcal{A}^-(\G,\G^\star)|\geq r}
\frac{
\mathcal{L}(\G)-\mathcal{L}(\G^\star)
}{
|\mathcal{A}^-(\G,\G^\star)|
}.
\end{equation*}
Therefore, for every $r>|\mathcal W|$,
$$
\Gamma^-(r)
\ge
s_{\min}\left(1-\frac{|\mathcal W|}{r}\right).
$$
By the assumption of the proposition,
$$ 
s_{\min} > 2(d_n+1/2)\max_{k\in[K]}\omega_k\lambda_k.
$$
Set $r_0:=2|\mathcal W|+1$. Then
$$
\Gamma^-(r_0) \ge s_{\min}\left(1-\frac{|\mathcal{W}|}{r_0}\right) > \frac{s_{\min}}{2} > (d_n+1/2)\max_{k\in[K]}\omega_k\lambda_k.
$$
Thus, $r_0$ satisfies the threshold condition in the definition of $q_{\min}$, and therefore
$$
q_{\min} \le r_0(1+C(2d_n+1)) = (2|\mathcal W|+1) (1+C(2d_n+1)) =:q_0
$$

Since $\delta_{kj}(q)$ is nonincreasing in $q$, we have:
$$
\delta_{k,\min}
=
\min_{j\in S_k}\delta_{kj}(q_{\min})
\ge
\min_{j\in S_k}\delta_{kj}(q_0).$$
Since all loss differences $\mathcal{L}_{lm}(\G)-\mathcal{L}_{lm}(\G^\star)$ are nonnegative for $(l,m)\in\mathcal{A}^-(\G,\G^\star)$, we have:
$$
\delta_{kj}(q_0)
\ge
\inf_{\substack{
\G\in\mathfrak G,\,
|\mathcal A^-(\G,\G^\star)|\le q_0\\
N(j,\pi_{S_k}(\G^\star))
\not\subseteq
N(j,\pi_{S_k}(\G))
}}
\sum_{\substack{
(l,m)\in\mathcal A^-(\G,\G^\star)\\
(l,m)\notin\mathcal W
}}
\kappa_{\min}\omega_l\beta_{l,\min}^2.$$
Taking the minimum over $j\in S_k$ yields the claimed result.   

\end{proof}

\section{Perspective strengthening} \label{sec:perspective}
 Let $\delta^k \in \mathbb{R}^{|S_k|}_{++}$ be a strictly positive vector with the property that $\widehat{\Sigma}_k-D_{\delta^k} \succeq 0$ where $D_{\delta^k} = \mathrm{diag}(\delta^k_1,\dots,\delta^k_{|S_k|})$. By splitting the quadratic term, we have $(I-\Theta^{(k)})^\top \widehat{\Sigma}_k(I-\Theta^{(k)}) = (I-\Theta^{(k)})^\top Q_k(I-\Theta^{(k)}) + (I-\Theta^{(k)})^\top D_{\delta^k}(I-\Theta^{(k)})$ where $Q_k :=  \widehat{\Sigma}_k- D_{\delta^k} \succeq 0$. Note we can replace $\mathrm{tr}((I-\Theta^{(k)})^\top D_{\delta^k}(I-\Theta^{(k)}))$ with $\sum_{\substack{i,j\in S_k \\ i \neq j}}(\Theta_{ij}^{(k)})^2\delta^k_i + \sum_{i\in S_k}\delta^k_i$. We can then obtain the following equivalent formulation:
\begin{eqnarray}
\hspace*{0.2in}
\begin{aligned}
\argmin_{\substack{(g,h,f,m)\in\mathcal{F}(\{S_k\}_{k=1}^K)\\\Theta^{(k)} \in \mathbb{R}^{|S_k|\times|S_k|} ~\forall k\in[K]\\\Theta^{(k)}_{ii}=0 ~\forall k\in[K]i\in S_k}}&\quad\quad\quad  \sum_{k =1}^K \omega_k\Big[ \mathrm{tr}((I-\Theta^{(k)})^\top Q_k(I-\Theta^{(k)})) +\\[-6ex]&\quad\quad \sum_{i,j \in S_k, i\neq j}\frac{(\Theta^{(k)}_{ij})^2\delta_i^k}{h_{\min\{i,j\},\max\{i,j\}}^{(k)}}+ 2\lambda_{k}\sum_{i,j\in S_k,i<j}h_{ij}^{(k)}\Big],\\
\text{subject-to:}
&\quad\quad\quad (g,h,f,m) \text{ satisfy the linear constraints in } \eqref{eqn:flow_constraints} \text{ and }\eqref{eqn:hidden_component} ~\forall k\in[K].
\end{aligned}
\label{eqn:perspective_reformulation}
\end{eqnarray}
We have omitted the graph-independent additive constant $\sum_{k=1}^K\omega_k\sum_{i\in S_k}\delta_i^k$, which does not affect the optimizers. Here, in the ratios of the form ${(\Theta^{(k)}_{ij})^2\delta_i^k}/{h_{ij}^{(k)}}$, we adopt the convention that $0/0 = 0$. Since the perspective of a convex quadratic is convex, each term ${(\Theta^{(k)}_{ij})^2\delta_i^k}/{h_{\min\{i,j\},\max\{i,j\}}^{(k)}}$ is jointly convex over $\Theta^{(k)}$ and $h^{(k)}$. Thus, \eqref{eqn:perspective_reformulation} is a convex mixed-integer program.

To establish that \eqref{eqn:perspective_reformulation} and \eqref{eqn:reformulation_main} have the same optimizers, first note that $h_{ij}^{(k)} = 0$ necessarily implies $\Theta^{(k)}_{ij} = \Theta^{(k)}_{ji}=0$ at the optimum of \eqref{eqn:perspective_reformulation}. Moreover, since $h_{ij}^{(k)} \in \{0,1\}$, for such feasible $h^{(k)}$ and $\Theta^{(k)}$, we have that for $h^{(k)}$ and $\Theta^{(k)}$ satisfying $h_{ij}^{(k)}=0 \Rightarrow \Theta^{(k)}_{ij} = \Theta^{(k)}_{ji}=0$: ${(\Theta^{(k)}_{ij})^2\delta_i^k}/{h_{ij}^{(k)}} = {(\Theta^{(k)}_{ij})^2\delta_i^k}$. Hence, the two formulations are equivalent. However, the continuous relaxation of \eqref{eqn:perspective_reformulation}, obtained by replacing the binary domains of $g$ and $h$, yields a tighter lower bound. Indeed, for $h_{ij}^{(k)} \in (0,1]$, ${(\Theta^{(k)}_{ij})^2\delta_i^k}/{h_{ij}^{(k)}} \geq {(\Theta^{(k)}_{ij})^2\delta_i^k}$. {Since larger $\delta$ results in a tighter lower bound, we choose $\delta$ by maximizing $\sum_{i}\delta_i^{k}$ subject to $\widehat{\Sigma}_k-D_{\delta}\succeq 0$ for $\delta_i^{k} > \epsilon, i=1,\dots,|S_k|$, for some small $\epsilon>0$.} This is a convex semidefinite program that can be solved efficiently.

To reduce the number of continuous variables, we can project out the continuous variables $\Theta^{(k)}$ and obtain a convex MIP over fewer variables. Specifically, minimizing over $\Theta^{(k)},k\in[K]$, we obtain the following MIP:

\begin{eqnarray}
\hspace*{0.2in}
\begin{aligned}
\argmin_{\substack{(g,h,f,m)\in\mathcal{F}(\{S_k\}_{k=1}^K)\\t_{kj} \in \mathbb{R}~\forall k\in[K],\,j\in S_k}}&\quad\quad\quad \sum_{k=1}^{K}\omega_k\Big[\sum_{j\in S_k}t_{kj}+ 2\lambda_{k}\sum_{i,j\in S_k,i<j}h_{ij}^{(k)}\Big],\\
\text{subject-to:}
&\quad\quad\quad (g,h,f,m) \text{ satisfy the linear constraints in } \eqref{eqn:flow_constraints} \text{ and }\eqref{eqn:hidden_component} ~\forall k\in[K],\\& \quad\quad\quad t_{kj}\geq F_{kj}(h^{(k)})~\forall k\in[K],\,j\in S_k,
\end{aligned}
\label{eqn:perspective_reformulation_projection}
\end{eqnarray}
where
$$F_{kj}(h^{(k)}) :=  \left\{\left(Q_k + \mathrm{diag}_{i\in S_k}\left(\begin{cases}\dfrac{\delta_i^k}{h^{(k)}_{\min\{i,j\},\max\{i,j\}}}, & i\neq j,\\0, & i=j
\end{cases} \right)\right)^{-1}_{jj}\right\}^{-1}.$$
Since each $F_{kj}(h^{(k)})$ is obtained by partial minimization of a jointly convex function, it is convex. Thus, \eqref{eqn:perspective_reformulation_projection} is a convex MIP. We use outer approximation to solve \eqref{eqn:perspective_reformulation_projection}, a cutting-plane method that finds the optimal solution by constructing a sequence of piecewise-affine lower bounds of each $F_{kj}(h^{(k)})$ using the first-order condition for convex functions. Specifically, at each candidate $\bar h^{(k)}$, we evaluate the nodewise terms and add the supporting-hyperplane cut
$t_{kj}\ge F_{kj}(\bar h^{(k)}) +\nabla F_{kj}(\bar h^{(k)})^\top (h^{(k)}-\bar h^{(k)})$ for each violated epigraph constraint. These cuts progressively refine the mixed-integer linear relaxation within a single branch-and-bound search, which terminates when the prescribed optimality tolerance or time limit is reached.

\section{Convex MIP for Example~\ref{example:discrete}} \label{sec:discrete_MIP}
\subsection{With the loss \eqref{eqn:discrete_loss_1}}

For the loss \eqref{eqn:discrete_loss_1}, we restrict attention to chordal graphs $\G$, in which every cycle of length at least four has a chord. Since latent projection preserves chordality, $\pi_{S_k}(\G)$ is chordal whenever $\G$ is chordal. The chordal restriction results in improved tractability of solving \eqref{eqn:main} for discrete graphical models. In particular, we use the property that a graph is chordal if and only if it has a perfect elimination ordering \citep{Fulkerson1965IncidenceMA}. A perfect elimination ordering of $\G=(V,E)$ is an injective map $\sigma:V\to\mathbb{R}$ where $\sigma(i)<\sigma(j)$ means that vertex $i$ appears before vertex $j$. The ordering is a perfect elimination ordering if, for every $i\in V$, its later neighbors,
$$
N_i(\G,\sigma) := \{j\in V:\{i,j\}\in E,
\ \sigma(j)>\sigma(i)\},
$$
form a clique. Fix any such perfect elimination ordering $\sigma$.
Then,
$$
\min_{\theta^{(k)}\in\Theta_k({\widetilde{\G}})}\ell_k(\theta^{(k)};\mathcal{D}_k)
=
\sum_{i\in S_k}\widehat H\!(X_i\mid X_{
N_i(\widetilde{\G},\sigma)}),
$$
where for any $A \subseteq S_k$, $\widehat{H}\!\left(X_i\mid X_A\right)$ computes the conditional entropy of $X_i$ conditioned on $X_A$ using the empirical distribution $\widehat{P}^{(k)}_{S_k}$. Thus, \eqref{eqn:main} can be reformulated here as:

\begin{subequations}
\begin{align}
\underset{\G,\{\sigma_j^{(k)}\in[1,|S_k|]\}_{j\in{S}_k,k\in\mathcal{W}_1}}{\argmin}\quad
& \sum_{k\in\mathcal{W}_1}\omega_k
\left[
\sum_{i\in S_k}\widehat H\!(X_i\mid X_{
N_i(\pi_{S_k}(\G),\sigma^{(k)})})
+\frac{\log(n_k)}{2n_k}\mathrm{dim}(\Theta_k(\pi_{S_k}(\G))
\right]\label{eqn:objective_discrete} \\
&\quad
+\gamma\sum_{k\in\mathcal{W}_2}\omega_k
\mathrm{SHD}\!\left(\pi_{S_k}(\G),\widehat\G^{(k)}_{S_k}\right),\nonumber\\
\text{subject to}\quad
& \sigma^{(k)} \text{ induces a perfect elimination ordering of }
\pi_{S_k}(\G),
\qquad k\in\mathcal{W}_1\label{eqn:perfect_elimination}.
\end{align}
\label{eqn:discrete_learning}
\end{subequations}
Where $\mathrm{dim}(\Theta_k(\pi_{S_k}(\G))$ is described in Section~\ref{sec:disccusion_consistent}. To cast \eqref{eqn:discrete_learning} as a convex MIP, on top of reformulating latent projections via \eqref{eqn:flow_constraints} and \eqref{eqn:hidden_component}, we must formulate three combinatorial features: (i) the constraint \eqref{eqn:perfect_elimination} that $\sigma^{(k)}$ is a perfect elimination ordering of $\pi_{S_k}(\G)$, (ii) the entropy function $\widehat H\!(X_i\mid X_{
N_i(\pi_{S_k}(\G),\sigma^{(k)})})$ in the objective \eqref{eqn:objective_discrete}, and (iii) $\mathrm{dim}(\Theta_k(\pi_{S_k}(\G))$ in the objective \eqref{eqn:objective_discrete}. We first handle constraint \eqref{eqn:perfect_elimination}. Let $h^{(k)}$ be the binary variables encoding the graph $\pi_{S_k}(\G)$. Then, \eqref{eqn:perfect_elimination} can be equivalently expressed as:
\begin{equation}
\label{eq:PE-encoding}
\begin{aligned}
\text{Constraint }\eqref{eqn:perfect_elimination}
\quad\Longleftrightarrow\quad
&\exists\ \{d^{(k)}_{ij}\in\{0,1\}\}_{i< j}, \{d^{(k)}_{ji}\in[0,1]\}_{i< j} \, , \, i,j\in S_k , \,
\text{ such that :} \\
&\left\{
\begin{aligned}
&  d_{ji}^{(k)}= h_{ij}^{(k)}-d_{ij}^{(k)},
&& \forall i<j,\\
&  \sigma_j^{(k)} \geq \sigma_i^{(k)}+1-|S_k|(1-d^{(k)}_{ij}), && \forall i\neq j,\\
&  d^{(k)}_{ij}+d^{(k)}_{im}\leq 1+h_{jm}^{(k)},
&& \forall i,j,m\in S_k\text{ distinct}.
\end{aligned}
\right.
\end{aligned}
\end{equation}
The first two constraints orient every edge of $\pi_{S_k}(\G)$ according to the ordering induced by $\sigma^{(k)}$. The final constraint requires that any two later neighbors of a node be adjacent, thereby ensuring that $\sigma^{(k)}$ induces a perfect elimination ordering of $\pi_{S_k}(\G)$. To handle the entropy function $\widehat H\!(X_i\mid X_{
N_i(\pi_{S_k}(\G),\sigma^{(k)})})$ in the constraint, we can replace it with a new continuous decision variable $t_i^{(k)}$ and add the following set of constraints:
\begin{equation}
t_i^{(k)} \geq \widehat H\!(X_i\mid X_A) - U_i\left[\sum_{j\in{A}}(1-d_{ij}^{(k)}) + \sum_{j \in S_k\setminus(A\cup\{i\})}d_{ij}^{(k)}\right] \quad\forall A \subseteq S_k \setminus \{i\}
\label{eq:entropy_constraint}
\end{equation}
where $U_i \geq \max_{A \subseteq S_k \setminus \{i\}} \widehat{H}(X_i\mid X_A)-\min_{A \subseteq S_k \setminus \{i\}}\widehat{H}(X_i\mid X_A)$. For any fixed selection of the variables $d_{ij}^{(k)}$, the bracketed term is zero precisely when $A=N_i(\pi_{S_k}(\G),\sigma^{(k)})$ and is at least one otherwise; hence the corresponding constraint reduces to $t_i^{(k)}\geq \widehat H\!(X_i\mid X_{
N_i(\pi_{S_k}(\G),\sigma^{(k)}})$ for the selected neighborhood, while the remaining constraints are rendered nonbinding by the choice of $U_i$. Note that if $X_i$ could take on $|\mathcal{X}_i|$ values, $U_i = \log|\mathcal{X}_i|$ is a sufficient choice. Finally, it is straightforward to show that $\mathrm{dim}(\Theta_k(\pi_{S_k}(\G)) = \sum_{i\in S_k}(|\mathcal{X}_i| - 1)\prod_{j\in S_k\setminus\{i\}}|\mathcal{X}_j|^{d^{(k)}_{ij}}$,
which induces a convex continuous relaxation. We thus arrive at the following convex mixed integer program:
\begin{equation} \label{eqn:convex-mip} \begin{aligned} \underset{g,h,f,m,t,\sigma,d}{\operatorname{argmin}} \quad & \sum_{k\in\mathcal{W}_1}\omega_k \left[ \sum_{i\in S_k} t_i^{(k)} +\frac{\log(n_k)}{2n_k}\sum_{i\in S_k}(|\mathcal{X}_i| - 1)\prod_{j\in S_k\setminus\{i\}}|\mathcal{X}_j|^{d^{(k)}_{ij}}\right] +\gamma\sum_{k\in\mathcal{W}_2}\omega_k \sum_{\substack{i,j\in S_k\\i<j}} (1-2\widehat h_{ij}^{(k)})h_{ij}^{(k)} \\[0.5em]\text{subject-to:}\quad & (g,h,f,m)\in\mathcal{F}(\{S_k\}_{k=1}^K),\\ & t^{(k)}\in\mathbb{R}^{|S_k|},\ \sigma_j^{(k)}\in[1,|S_k|],\ d_{ij}^{(k)}\in\{0,1\},d_{ji}^{(k)}\in[0,1]\quad \substack{\forall k\in\mathcal{W}_1,\ j\in S_k,\\ i< j\in S_k},\\ & \text{linear constraints } \eqref{eq:PE-encoding}\quad \forall k\in\mathcal{W}_1,\\ & t_i^{(k)} \geq \widehat H(X_i\mid X_A) -U_iD_{i,A}^{(k)} \quad \forall i\in S_k,\ A\subseteq S_k\setminus\{i\},\ k\in\mathcal{W}_1, \end{aligned} \end{equation}
where we use the shorthand notation $D_{i,A}^{(k)}:= \sum_{j\in A}(1-d_{ij}^{(k)}) +\sum_{j\in S_k\setminus(A\cup\{i\})}d_{ij}^{(k)}$. The exponentially many constraints indexed by $A\subseteq S_k\setminus\{i\}$ can be added lazily. At an incumbent solution, $d_i^{(k)}$ determines the active set $A_i^{(k)}:=\{j:d_{ij}^{(k)}=1\}$, 
for which \(D_{i,A_i^{(k)}}^{(k)}=0\). If
$t_i^{(k)}<
\widehat H\!(X_i\mid X_{A_i^{(k)}})$, 
the corresponding constraint is added and the optimization continues.

\subsection{With the loss \eqref{eqn:discrete_loss_2}}
For any candidate neighborhood \(A\subseteq S_k\setminus\{i\}\), define the local score
\[
s_{i,A}^{(k)}
:=
\widehat H(X_i\mid X_A)
+
\frac{\log(n_k)}{2n_k}
\left(|\mathcal X_i|-1\right)
\prod_{j\in A}|\mathcal X_j|.
\]
Introduce variables \(z_{i,A}^{(k)}\) indicating whether \(A\) is selected as the neighborhood of node \(i\), with
\[
\sum_{A\subseteq S_k\setminus\{i\}}z_{i,A}^{(k)}=1,
\qquad
h_{ij}^{(k)}
=
\sum_{\substack{A\subseteq S_k\setminus\{i\}\\j\in A}}z_{i,A}^{(k)}
\]
The pseudo-likelihood contribution of source \(k\) can then be written as
\[
\sum_{i\in S_k}
\sum_{A\subseteq S_k\setminus\{i\}}
s_{i,A}^{(k)}z_{i,A}^{(k)}.
\]
The local scores $s_{i,A}^{(k)}$ can be precomputed for every candidate neighborhood $A\subseteq S_k\setminus\{i\}$. The optimization then only selects among these precomputed scores using binary variables $z_{i,A}^{(k)}$, while the latent-projection constraints enforce compatibility with a common global graph. This yields a mixed-integer linear program, and hence convex MIP. To avoid explicitly introducing all exponentially-many candidate neighborhoods, the variables and associated constraints can be generated only as needed.

\newpage
\section{Additional simulation results} \label{sec:additional_simulations}

\begin{figure}[H]
    \centering
    \includegraphics[width=1.1\linewidth]{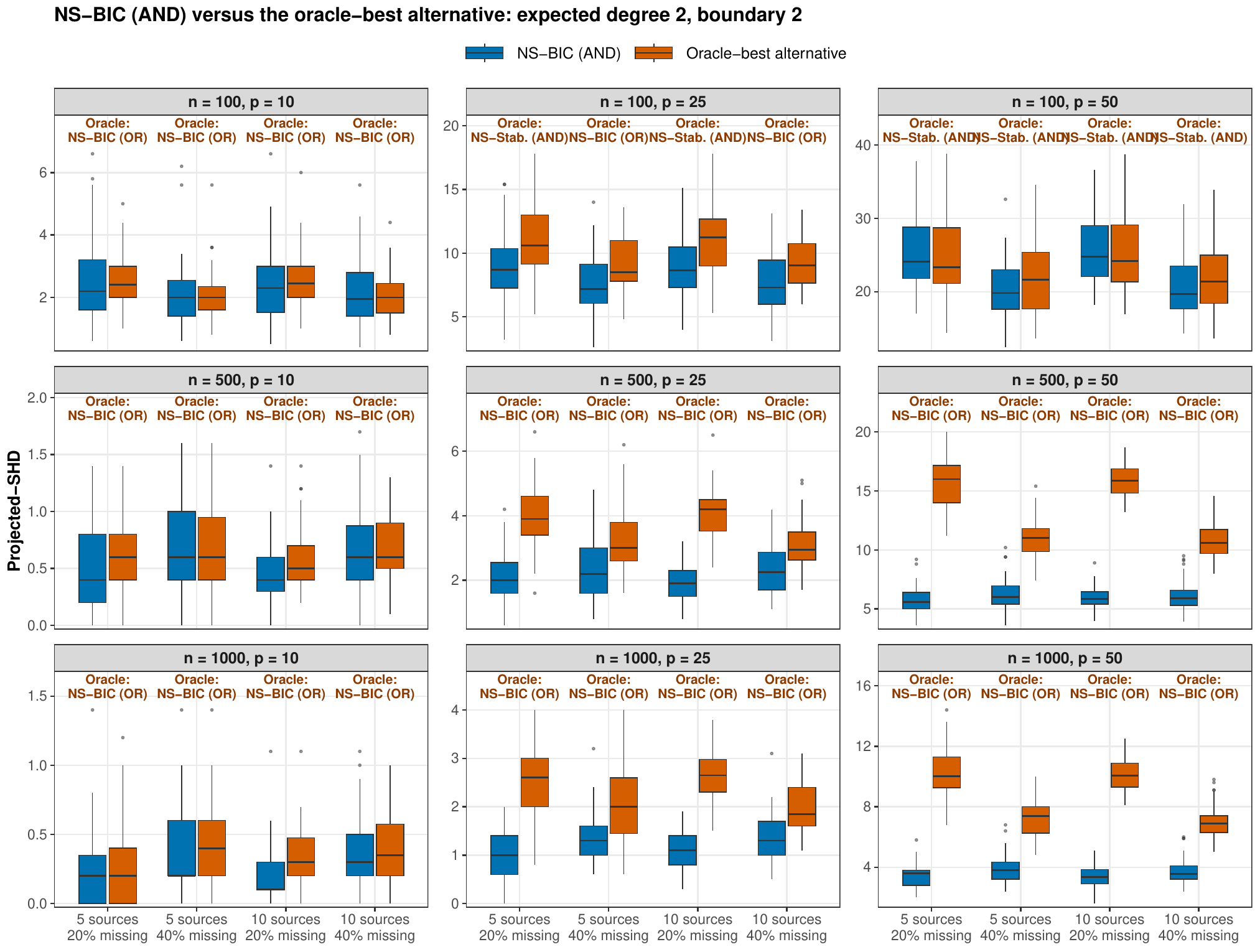}
    \caption{\small Projected SHD per source for neighborhood selection with BIC and the AND rule and the oracle-best alternative when $d=2$ and $l=2$. The oracle alternative is selected separately in each simulation setting using its mean SHD across the 50 paired replications, and its identity is reported above each group of boxplots.}
    \label{fig:baseline-oracle-d2-b2}
\end{figure}

\begin{figure}[t!]
    \centering
    \includegraphics[width=1.1\linewidth]{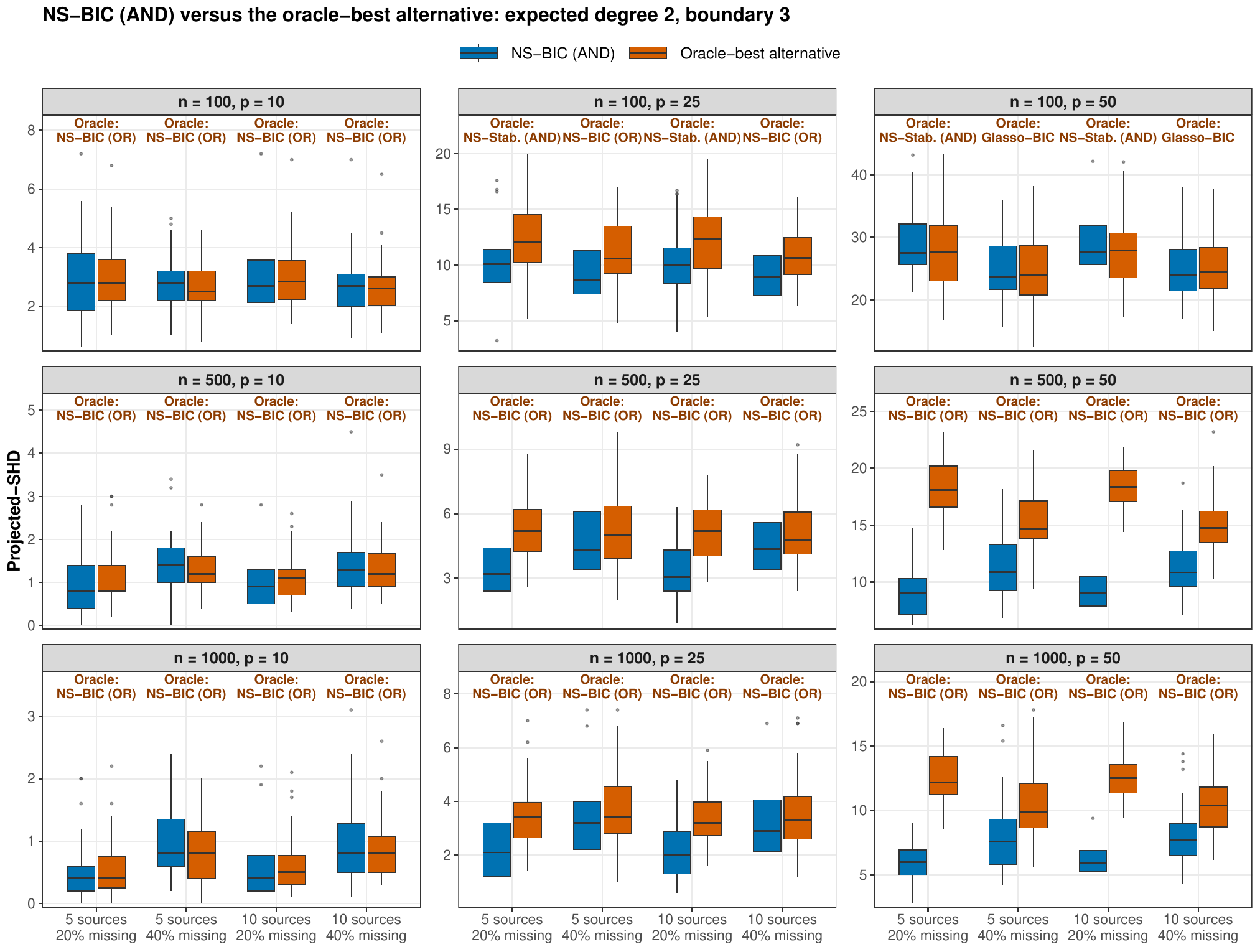}
    \caption{\small Projected SHD per source for neighborhood selection with BIC and the AND rule and the oracle-best alternative when $d=2$ and $l=3$. The oracle alternative is selected separately in each simulation setting using its mean SHD across the 50 paired replications, and its identity is reported above each group of boxplots.}
    \label{fig:baseline-oracle-d2-b3}
\end{figure}

\begin{figure}[t!]
    \centering
    \includegraphics[width=1.1\linewidth]{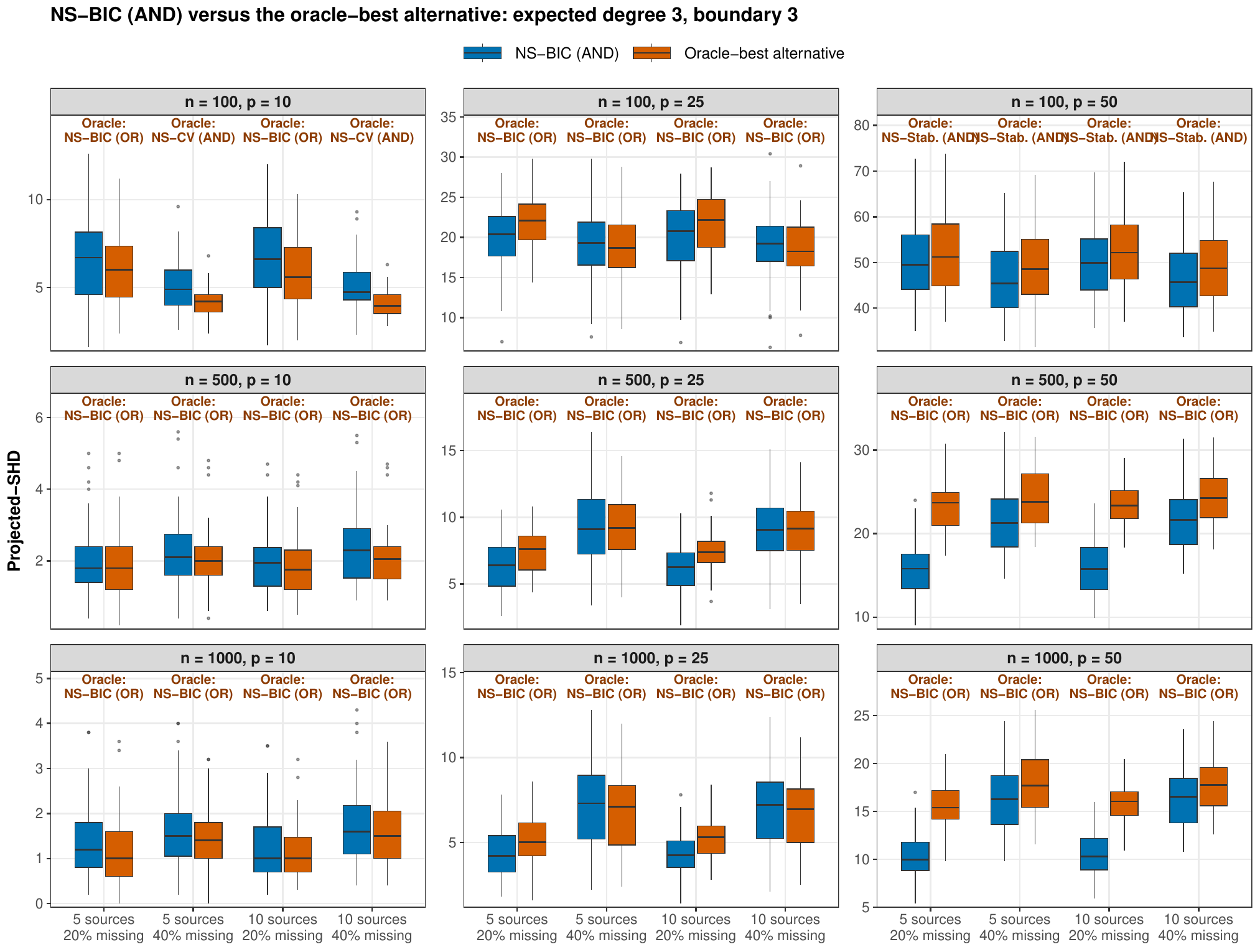}
    \caption{\small Projected SHD per source for neighborhood selection with BIC and the AND rule and the oracle-best alternative when $d=3$ and $l=3$. The oracle alternative is selected separately in each simulation setting using its mean SHD across the 50 paired replications, and its identity is reported above each group of boxplots.}
    \label{fig:baseline-oracle-d3-b3}
\end{figure}

\begin{figure} 
    \centering
    \includegraphics[width=1.1\linewidth]{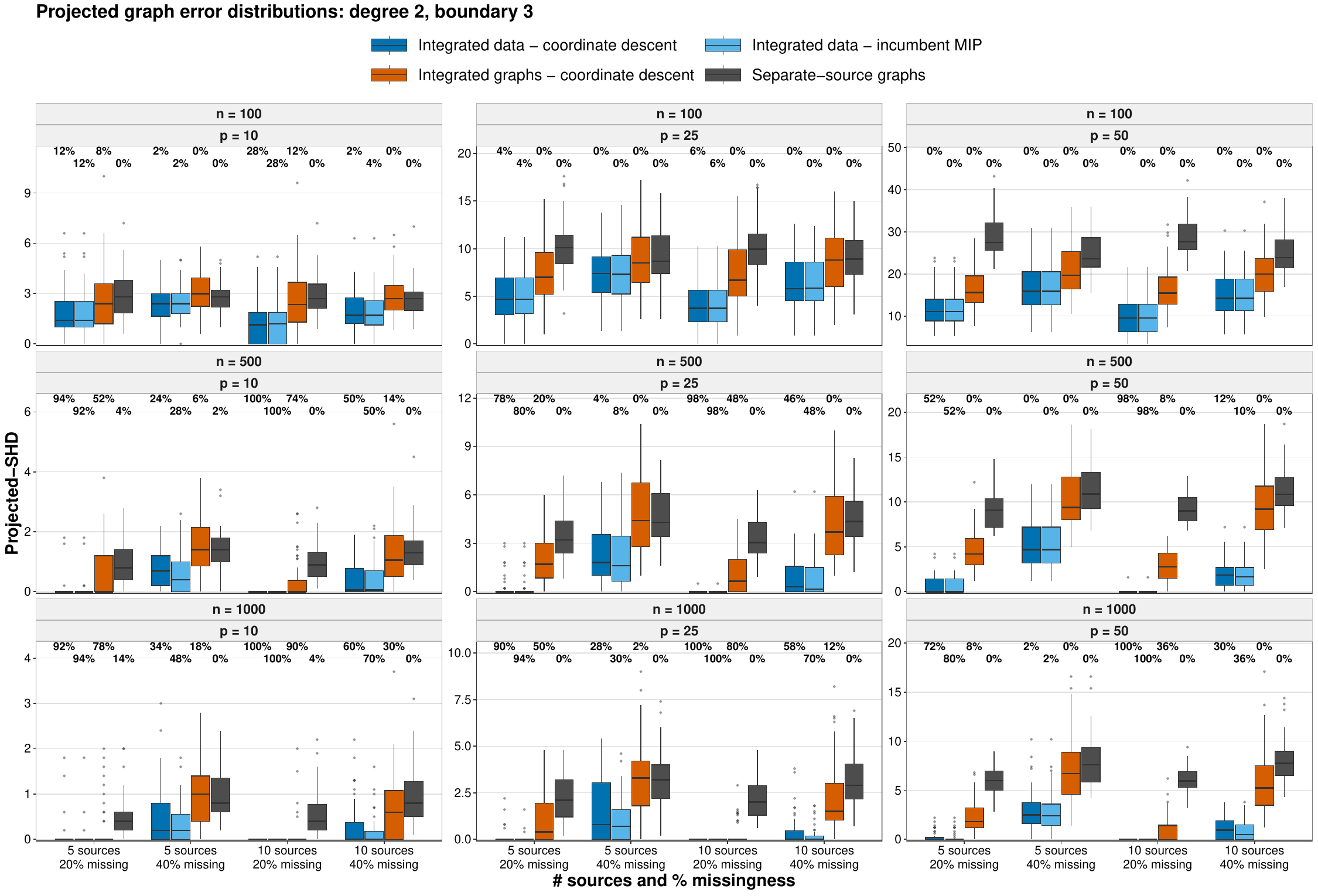}
    \caption{\small Projected SHD per source in the harder regime with expected degree $d=2$ and maximum boundary size $l=3$. Boxes summarize
    the 50 replications. Percentages at the top of each panel report exact recovery of the projection-equivalence class.}
    \label{fig:all-data-d2-b3}
\end{figure}
\FloatBarrier
\begin{figure}
    \centering
    \includegraphics[width=1.1\linewidth]{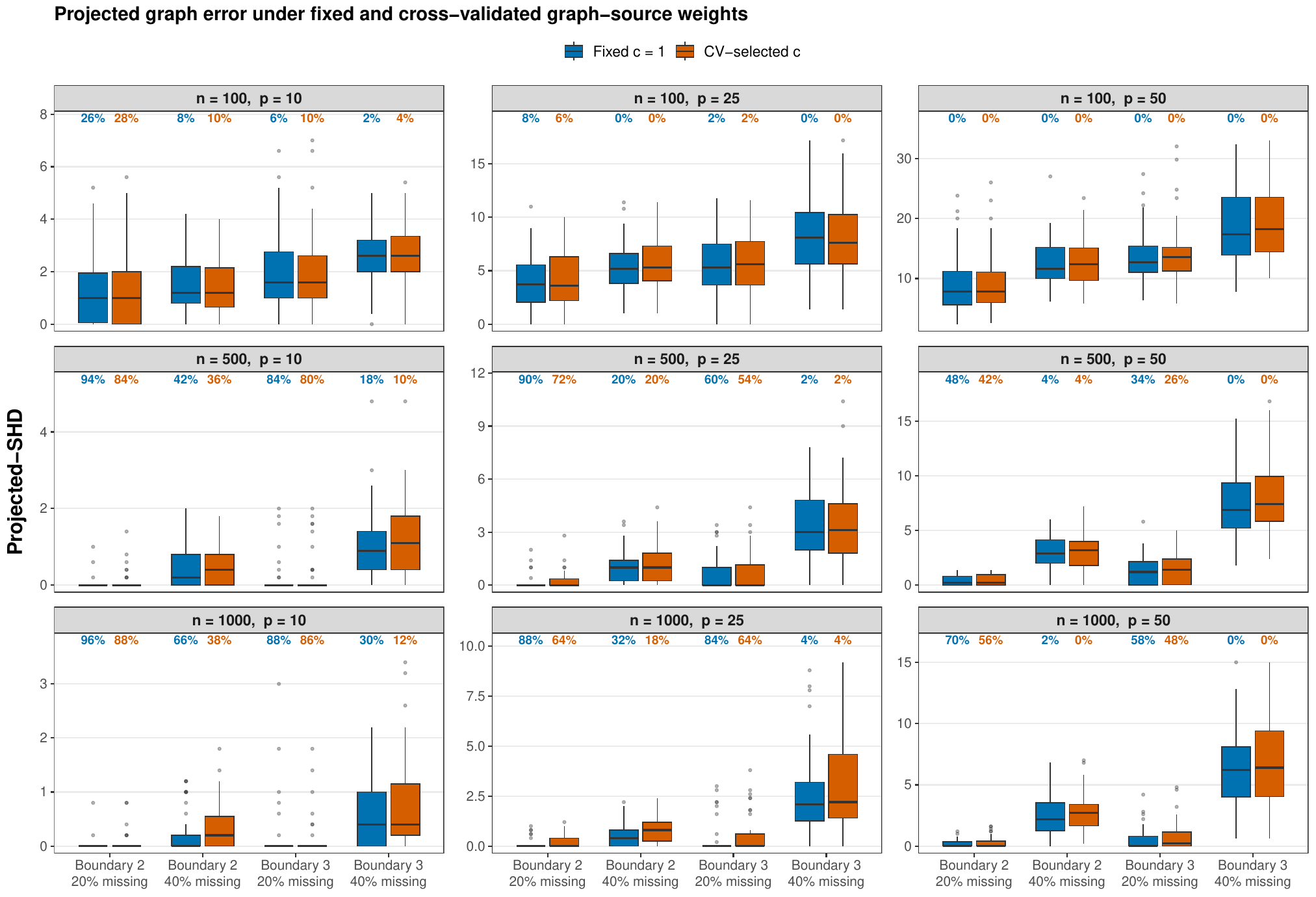}
    \caption{\small Projected SHD per source in the hybrid setting, where three sources provide raw data and two provide estimated graphs. Fixed $c=1$ is compared with $c$ selected by predictive cross-validation. Panels correspond to combinations of sample size $n$ and number of variables $p$, and the groups within each panel correspond to boundary sizes $l\in\{2,3\}$ and missingness levels of $20\%$ and $40\%$. Boxes summarize 50 paired replications. Percentages at the top of each panel report exact recovery of the projection-equivalence class. Results are based on the coordinate-descent procedure with five sources.}
    \label{fig:hybrid-c1-cv}
\end{figure}
\FloatBarrier

\section{Adaptation of CARE} \label{sec:CARE_adaptation}
The graph of the 21 red-blood-cell fatty acids was estimated using the alternative two-stage CARE estimator in Supplement~S.2 of \citet{Zhang2025CARE}, adapted to incorporate NHANES survey weights, covariate adjustment, common-parameter tuning, and stability-based coefficient thresholding. After the eligibility and complete-case restrictions described in the main text, source four contained 5,115 participants. Each centered log-ratio coordinate was regressed on age, sex, race/ethnicity, and fasting duration using phlebotomy weights $w_i=\mathrm{WTPH2YR}_i$. Age was modeled using a restricted cubic spline with knots at its weighted 5th, 35th, 65th, and 95th percentiles. CARE was applied to $S_w$, the survey-weighted covariance matrix of the resulting residuals. Both CARE tuning parameters were set to a common $\lambda$. Following Supplement~S.2, the two columnwise estimates for each off-diagonal pair were symmetrized by assigning the estimate with smaller absolute value to both corresponding entries. The common $\lambda$ was selected over $\{0,0.02,\ldots,4\}$ using five repeated random 50/50 training--validation splits, minimizing the mean validation loss $L(\Omega;S_w)=\tfrac12\operatorname{tr}(\Omega^{\mathsf T}S_w\Omega)-\operatorname{tr}(P^{\mathsf T}\Omega)$, where $S_w$ here denotes the validation residual covariance and $P=I_{21}-\boldsymbol{1}\boldsymbol{1}^{\mathsf T}/21$ is the centering matrix across the 21 fatty-acid coordinates. Within each split, all preprocessing parameters were estimated from the training observations and applied unchanged to the validation observations.

With CARE's regularization parameter fixed at the CV-selected $\widehat\lambda$, a separate coefficient threshold was selected using a StARS-inspired procedure \citep{liu2010stability}. Covariate adjustment and CARE estimation were repeated on 100 independently drawn subsamples, each containing $\lfloor10\sqrt{n}\rfloor=715$ participants sampled uniformly without replacement, retaining their survey weights. For each candidate threshold $t$, an edge was selected in a subsample when its estimated precision-matrix entry exceeded $t$ in absolute value. Letting $\widehat{\pi}_{jk}(t)$ denote the proportion of subsamples selecting pair $(j,k)$, the average instability and selected threshold were
$$
D(t)=\binom{21}{2}^{-1}\sum_{j<k}2\widehat{\pi}_{jk}(t)\{1-\widehat{\pi}_{jk}(t)\},
\qquad \widehat{t}=\min\left\{t:\max_{s\geq t}D(s)\leq0.05\right\}.
$$
Candidate thresholds were evaluated on a grid of 100 equally spaced values from zero to the largest absolute off-diagonal estimate across the subsample fits. For each candidate $t$, the maximum was taken over all grid values $s\geq t$. The final graph retained pairs satisfying $|\widehat{\Omega}_{jk}|>\widehat{t}$ in the full-data CARE estimate.

To evaluate whether the proposed tuning and thresholding strategy recovered graph structure accurately, a simulation study compared it with the original CARE procedure and several alternative choices. Each replicate generated a graph with 21 nodes and 20 edges according to one of three structures. Random graphs selected 20 distinct node pairs uniformly at random. Band graphs arranged the 21 nodes in a random order and connected each node only to its immediate predecessor and successor, forming a single chain. Hub graphs divided the nodes randomly into three groups of seven, each with one central node connected to the other six, and added two edges linking the three centers into a chain.

For each graph, with adjacency matrix $A$, the population precision matrix $\Omega_0$ had off-diagonal entries $(\Omega_0)_{jk}=-\rho A_{jk}$ and diagonal entries $(\Omega_0)_{jj}=0.20+0.40\deg(j)$. Thus, absent edges corresponded to zero precision-matrix entries, while larger $\rho$ produced stronger nonzero entries. The diagonal construction ensured positive definiteness. Each dataset contained $n=400$ independent latent observations $Y_i\sim N(0,\Omega_0^{-1})$. These were converted to positive measurements by exponentiation and then normalized to produce compositions, $X_{ij}=\exp(Y_{ij})/\sum_h\exp(Y_{ih})$. Only these compositions were supplied to the estimation procedures. There were 100 independent replicates per setting, yielding 900 datasets.

The comparison examined regularization selection and subsequent coefficient thresholding. The reference procedure was the original CARE estimator, which uses a single estimation stage and selects a separate regularization parameter for each column by cross-validation. The remaining procedures used the alternative estimator in Supplement~S.2 of \citet{Zhang2025CARE}. This estimator has two estimation stages: a preliminary fit controlled by $\lambda_1$, followed by an adaptive fit controlled by $\lambda_2$ that uses the preliminary estimates to scale its constraints. To simplify tuning, we imposed $\lambda_1=\lambda_2=\lambda$. The common value was selected either by cross-validation using repeated random 50/50 training--validation splits or by minimizing the BIC-type criterion $nL(\widehat\Omega;S)+|\widehat E|\log n$, where $S$ is the sample covariance matrix of the centered log-ratio measurements and $L$ is the quadratic loss defined above. Both selection rules searched the grid $\{0\}\cup\{0.05,0.10,\ldots,0.50\}\cup\{0.75,1.00,\ldots,2.00\}$. An additional comparison fixed $\lambda_1=\lambda_2=0$. Each of these three tuning choices was evaluated with and without subsequent coefficient thresholding, as described next.

For each common-$\lambda$ rule, graph recovery was evaluated in three ways: using the estimated nonzero entries directly, applying bootstrap-based coefficient thresholding, or applying subsampling-based coefficient thresholding. This yielded nine alternatives in addition to original CARE. Bootstrap thresholding used 100 samples of 400 observations drawn with replacement. Subsampling thresholding used 100 subsets of 200 observations drawn without replacement, where $200=10\sqrt{400}$. In both cases, $\lambda$ was selected once from the full dataset and held fixed throughout resampling. The resampled fits were used only to select the coefficient threshold, using 60 candidate cutoffs and the instability bound of 0.05 described above. The selected cutoff was then applied to the full-data estimate.

\begin{figure}[!htb]
\centering
\includegraphics[width=\linewidth]{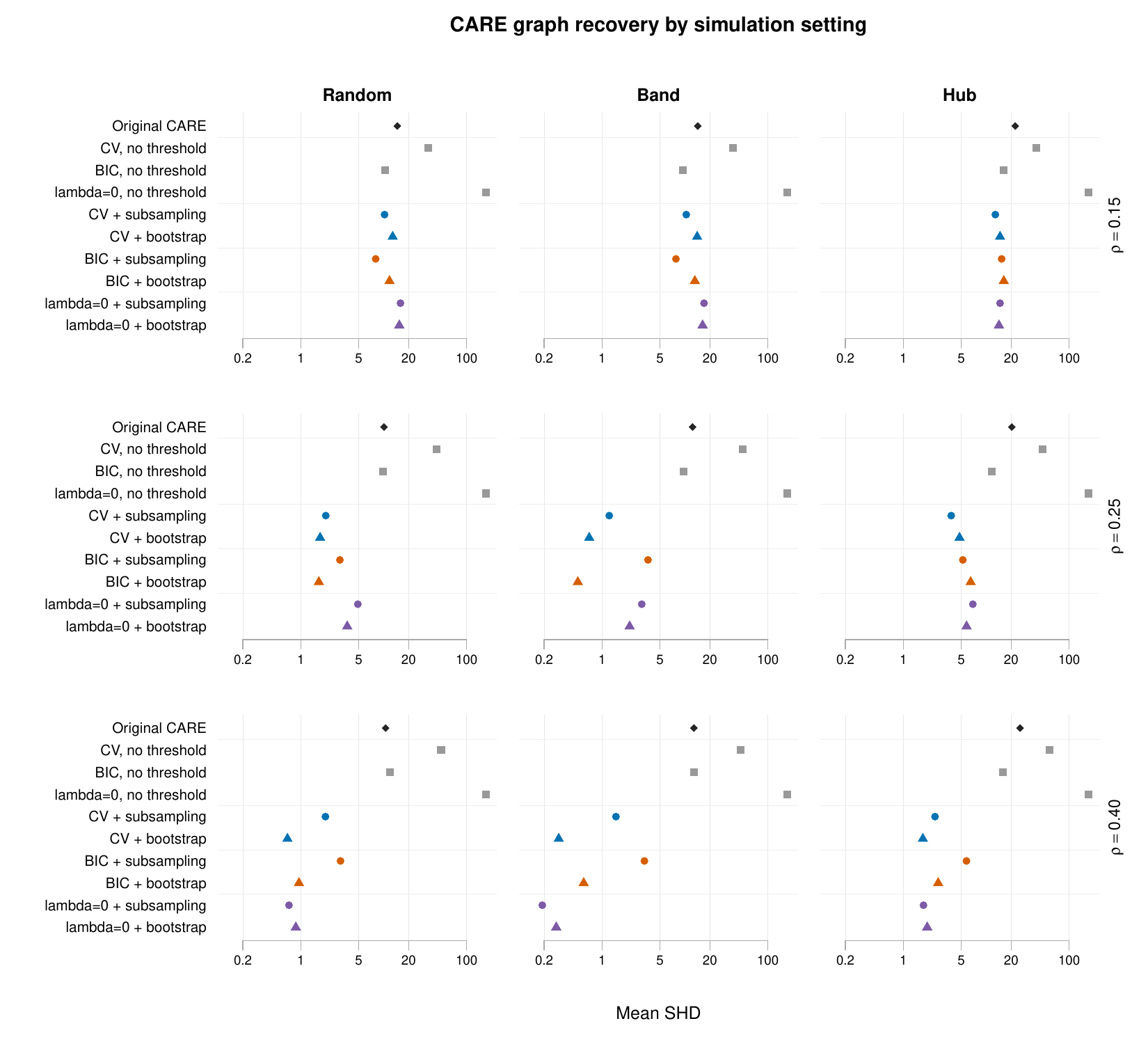}
\caption{Graph recovery for CARE and its tuning and thresholding variants, across graph structures (columns) and signal strengths $\rho$ (rows). Each point represents mean SHD over 100 replicates with $n=400$ observations, 21 nodes, and 20 true edges. Bootstrap and subsampling refer to the resampling scheme used to select the coefficient threshold.}
\label{fig:care-simulation}
\end{figure}
Graph recovery depended on both signal strength and graph structure, as seen in Figure~\ref{fig:care-simulation}. Thresholding markedly improved the common-$\lambda$ CV and fixed-zero estimators, whereas its benefit for BIC depended on the setting. Recovery generally improved as signals strengthened, and hub graphs were consistently harder than random or band graphs for the four CV/BIC thresholded procedures. At the weakest signal, BIC plus subsampling performed best for random and band graphs, while CV plus subsampling performed best for hub graphs. At the strongest signal, CV plus bootstrap outperformed CV plus subsampling in all three structures. Thus, no procedure dominated throughout. Nevertheless, CV plus subsampling improved on original CARE in every setting and achieved the lowest SHD averaged across settings, 5.15 compared with 15.87 for original CARE. These results motivated its use for source four. 

Applying the selected CV-plus-subsampling procedure to source four yielded $\widehat{\lambda}=0$ and a coefficient threshold of $\widehat{t}=25.6313$, with average subsample instability 0.04469. Although CV selected zero for both stage-specific tuning parameters, the final graph was sparsified by the separately selected coefficient threshold. It contained 12 edges among the 21 fatty acids, with eight isolated nodes, see Figure~\ref{fig:source4-care}.

\begin{figure}[!htb]
\centering
\includegraphics[width=0.64\linewidth]{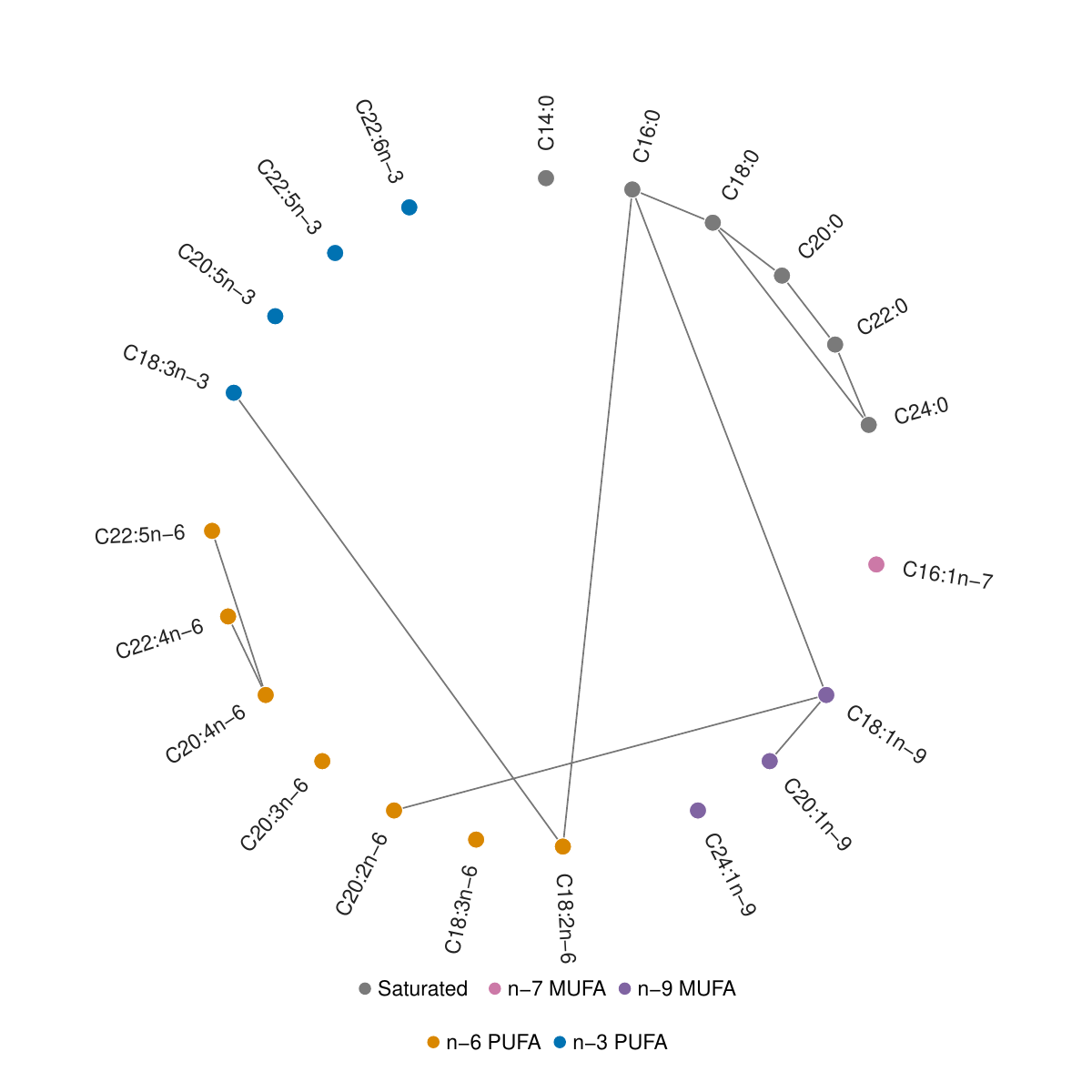}
\caption{Estimated source-four graph obtained from survey-weighted, covariate-adjusted CARE with CV-selected regularization and subsampling-selected coefficient thresholding. All 21 fatty acids are shown, grouped by family and ordered by chain length within each family. Colors identify fatty-acid families, and lines indicate the 12 selected edges.}
\label{fig:source4-care}
\end{figure}
\newpage
\bibliographystyle{agsm}
\bibliography{main.bib}

\end{document}